\documentclass{article}

\usepackage[preprint]{neurips_2026}

\usepackage[utf8]{inputenc} 
\usepackage[T1]{fontenc}    
\usepackage{hyperref}       
\usepackage{url}            
\usepackage{booktabs}       
\usepackage{amsfonts}       
\usepackage{nicefrac}       
\usepackage{microtype}      
\usepackage{xcolor}         

\usepackage{graphicx}
\usepackage{subcaption}
\usepackage{booktabs} 
\usepackage{hyphenat} 

\usepackage{amsmath}
\usepackage{amssymb}
\usepackage{mathtools}
\usepackage{amsthm}
\usepackage{bbm}
\usepackage{float}
\usepackage{placeins}
\usepackage{enumitem}
\usepackage{booktabs}
\usepackage{longtable}
\usepackage{array}
\usepackage{bm}         
\usepackage{xspace}     
\usepackage{algorithm}  
\usepackage{algorithmic}

\usepackage[capitalize,noabbrev]{cleveref}

\providecommand{\pref}[1]{\cref{#1}}

\providecommand{\xb}{\bm{x}}

\providecommand{\zb}{\bm{z}}
\providecommand{\ub}{\bm{u}}

\providecommand{\Lb}{\bm{L}}

\providecommand{\epsilonb}{\bm{\epsilon}}

\providecommand{\thetab}{\bm{\theta}}
\providecommand{\taub}{\bm{\tau}}

\providecommand{\Hcal}{\mathcal{H}}
\providecommand{\Lcal}{\mathcal{L}}
\providecommand{\Gcal}{\mathcal{G}}

\providecommand{\RR}{\mathbb{R}}
\providecommand{\EE}{\mathbb{E}}
\providecommand{\PP}{\mathbb{P}}

\newcommand{\methodname}{\textsc{INCAMA}} 

\newcommand{\Tau}{\bm{\tau}}
\DeclareMathOperator*{\TopK}{TopK}

\theoremstyle{plain}
\newtheorem{theorem}{Theorem}[section]
\newtheorem{proposition}[theorem]{Proposition}
\newtheorem{lemma}[theorem]{Lemma}
\newtheorem{corollary}[theorem]{Corollary}
\theoremstyle{definition}

\newtheorem{assum}[theorem]{Assumption}
\theoremstyle{remark}

\title{Latent-Space Causal Discovery from Indirect Neuroimaging Observations}

\author{%
  Sangyoon Bae \\
  Interdisciplinary Program in Artificial Intelligence \\
  Seoul National University \\
  Seoul, South Korea, 08826 \\
  \texttt{stellasybae@snu.ac.kr} \\
  \And
  Miruna Oprescu \\
  Computer Science \\
  Cornell Tech, Cornell University \\
  New York, NY 10044 \\
  \texttt{amo78@cornell.edu}
  \And
  David Keetae Park \\
  Computational Science Initiative \\
  Brookhaven National Laboratory \\
  Shirley, New York, United States, 11967 \\
  \texttt{dpark1@bnl.gov} \\
  \And
  Shinjae Yoo \\
  Computational Science Initiative \\
  Brookhaven National Laboratory \\
  Shirley, New York, United States, 11967 \\
  \texttt{sjyoo@bnl.gov} \\
  \And
  Jiook Cha \\
  Department of Psychology \\
  Seoul National University \\
  Seoul, South Korea, 08826 \\
  \texttt{connectome@snu.ac.kr} \\
}

\begin{document}

\maketitle

\begin{abstract}
Neuroimaging does not observe causal variables directly: hemodynamics and volume conduction distort signals so that statistical dependence need not reflect latent neural influence.
Before estimating graphs, one must specify \emph{under what assumptions} delayed directed structure can be studied from such indirect observations.
We formalize a conditional setting---recoverable inversion under modality physics together with nonstationary latent dynamics---and derive an inversion-error propagation bound under explicit assumptions.
Building on this framing, we propose \textsc{\methodname} (INdirect CAusal MAmba): physics-aware inversion coupled with a delay-aware Mamba encoder that uses mechanism shifts as informative variation for directed graph scoring.
We use controlled simulations for quantitative validation and HCP motor-task fMRI as a zero-shot external transfer check based on anatomical and task-network consistency.
Across TVB simulations, \textsc{\methodname} improves directed-structure recovery by 2--3$\times$ in F1 over observation-space and two-stage baselines, and on HCP motor-task fMRI it produces sparse directed estimates concentrated in canonical visuo-motor pathways.
\end{abstract}

\section{Introduction}\label{sec:intro}
EEG and fMRI provide large-scale access to human brain activity, but they observe neural dynamics only indirectly through modality-specific forward operators such as hemodynamic filtering and sensor mixing~\cite{friston2003dynamic,schoffelen2009source,peters2017elements}.
This makes causal discovery from neuroimaging different from unconditional recovery of a latent directed graph: under indirect observation, statistical dependencies need not align with neural causal influence.
We therefore ask what can be supported \emph{under explicit assumptions} about inversion quality and latent dynamics.

\textbf{Assumptions we state upfront.}
Our analysis and pipeline rely on hypotheses we treat as falsifiable rather than granted facts:
\textit{(i)}~recoverable latent trajectories under regularized inversion (\pref{assum:invertibility}), which is strong for ill-posed inverse problems;
\textit{(ii)}~cortical ROI sources without full subcortex (\pref{assum:causal_sufficiency}), so edges are effective cortical connectivity conditioned on subcortical mixing;
\textit{(iii)}~simulator-grounded training because causal adjacency is unobservable in humans.
We return to consequences in Section~\ref{sec:conclusion}.

Existing methods mostly split along scalability versus measurement realism: DCM couples neural and observation models but scales poorly~\cite{friston2003dynamic}; regression-DCM trades fidelity for speed~\cite{frassle2021regression}; Granger/VAR and neural causal models applied in observation space remain sensitive to hemodynamics and volume conduction~\cite{granger1969investigating,lutkepohl2005new,tank2021neural}.

We propose treating directed connectivity as \emph{latent}: physics-aware inversion followed by delay-aware discovery with a selective state-space backbone (Mamba)~\cite{gu2024mamba}, using nonstationarity as informative variation~\cite{huang2019causal,scholkopf2021toward}.
Training uses modality-specific priors (DeepSIF-style EEG inversion~\cite{sun2022deep}; ROI-HRF-aware fMRI deconvolution~\cite{friston2003dynamic,handwerker2004variation}).
Our contribution is to connect this practical architecture to the conditional question above: when forward physics can be inverted well enough, what directed structure is identifiable, how do inversion errors affect graph recovery, and what evidence remains appropriate when human ground truth is unavailable?
The theory is therefore a reduction rather than an unconditional identifiability claim: it clarifies how inversion quality constrains recoverable directed structure under indirect observation.
Concretely, we make three contributions:
\begin{itemize}[leftmargin=1.75em, itemsep=0.15em, topsep=0pt, partopsep=0pt]
    \item \textbf{\textsc{\methodname}} couples inversion and latent causal discovery end-to-end for indirect neuroimaging.
    \item \textbf{Theory:} a conditional identifiability reduction under recoverable inversion and nonstationarity, plus an inversion-error propagation bound (\pref{sec:theory}).
    \item \textbf{Evidence:} biophysical simulations with ground truth; real-data plausibility checks under the standard indirect-validation paradigm (Section~\ref{sec:real_fmri}); robustness analyses (Appendix~\ref{app:robust_tables}).
\end{itemize}


\section{Background and Related Work}\label{sec:background}

For a broader survey, see Appendix Section~\ref{app:extended-lit}.
Prior work separates two issues that are coupled in our setting: measurement physics and directed-structure identification.
DCM jointly models neural dynamics and hemodynamics but scales poorly~\cite{friston2003dynamic}, while regression-DCM and observation-space GC/VAR improve scalability at the cost of either reduced biophysical fidelity or sensitivity to mixing and hemodynamic filtering~\cite{frassle2021regression,granger1969investigating,lutkepohl2005new,tank2021neural,shimizu2006linear}.

On the causal-discovery side, nonstationarity provides a route to identifiability because mechanism changes can reveal directed structure~\cite{huang2019causal,peters2017elements,scholkopf2021toward,zhang2017causal,huang2020causal,pfister2019invariant,saggioro2020reconstructing,gao2023causal}, but this literature typically assumes \emph{direct} access to the causal variables.
State-space models enter precisely at this point: they are a natural language for time-varying latent dynamics, underpin classical neurodynamic models such as DCM, and motivate modern selective SSM layers such as Mamba for tracking mechanism shifts~\cite{friston2003dynamic,valdes2011effective,gu2024mamba}.
CausalMamba~\cite{bae2025causalmamba} uses sequence models for causal discovery, but it is not designed for modality physics or whole-brain indirect neuroimaging; our contribution is to make nonstationarity-based latent discovery usable after physics-aware inversion.

\section{Problem Setting: Causal Discovery from Indirect Human Brain Measurements}
\label{sec:problem_setting}
\label{app:problem_formalism}

We treat whole-brain activity as latent ROI/source signals $\zb_t\in\RR^N$ observed indirectly through modality physics.
Directed interactions are encoded by a delayed graph $\Gcal=(V,E,\taub)$ with $V=\{1,\dots,N\}$, edges $E$, and per-edge delays $\taub_{ij}\in\{1,\dots,D\}$.
Define $\textrm{Pa}(j)=\{i:(i\to j)\in E\}$.
The latent process follows a (possibly nonlinear) delayed dynamical SCM,
\vspace{-0.15em}
\begin{equation}
  z_{j,t}
  \;=\;
  f_{j,t}\!\left(\{ z_{i,t-\taub_{ij}} : i\in\textrm{Pa}(j)\},\,\ub_t\right)
  \;+\; \eta_{j,t},
  \quad j\in V,
  \label{eq:delayed_scm}
\end{equation}
\vspace{-0.35em}

\noindent
where $\ub_t$ denotes optional exogenous inputs (set $\ub_t\equiv 0$ for resting-state), $\eta_{j,t}$ are disturbances, and nonstationarity may enter through time-varying mechanisms $f_{j,t}$ (e.g., regime-indexed parameters).

Observations $\xb_t\in\RR^M$ arise from a modality-specific forward operator with finite memory,
\vspace{-0.15em}
\begin{equation}
  \xb_t \;=\; \Hcal_{\psi} \big(\zb_{t-L:t}\big) + \epsilonb_t,
  \label{eq:measurement}
\end{equation}
\vspace{-0.35em}

\noindent
where $\zb_{t-L:t}:=(\zb_{t-L},\dots,\zb_{t})$, $L\ge 0$, and $\psi$ collects nuisance measurement parameters.
A common fMRI specialization is HRF convolution,
$\xb_t \approx \sum_{\ell=0}^L h_{\psi,\ell}\,\zb_{t-\ell} + \epsilonb_t$;
EEG is often approximated as instantaneous leadfield mixing $\xb_t \approx \Lb_{\psi}\,\zb_t + \epsilonb_t$, i.e., \eqref{eq:measurement} with $L{=}0$.
Both modalities can induce statistical dependencies misaligned with latent causal influences unless inversion is modeled.

Our causal estimand is $\Gcal$.
Given $\xb_{1:T}$, we learn a map
\vspace{-0.15em}
\begin{equation}
  \Phi:\ \xb_{1:T}\ \mapsto\ (\widehat{\zb}_{1:T},\ \widehat{\Gcal}),
  \label{eq:phi_objective}
\end{equation}
\vspace{-0.35em}

\noindent
that (i) performs \emph{regularized} inversion tied to $\Hcal_\psi$ (rather than an unconstrained embedding) and (ii) infers sparse directed delays in latent space using nonstationary structure in $\widehat{\zb}_{1:T}$.
Figure~\ref{fig:overall_framework} overviews the end-to-end pipeline that instantiates \eqref{eq:phi_objective}.

\begin{figure*}[!t]
    \centering
    \includegraphics[width=\textwidth]{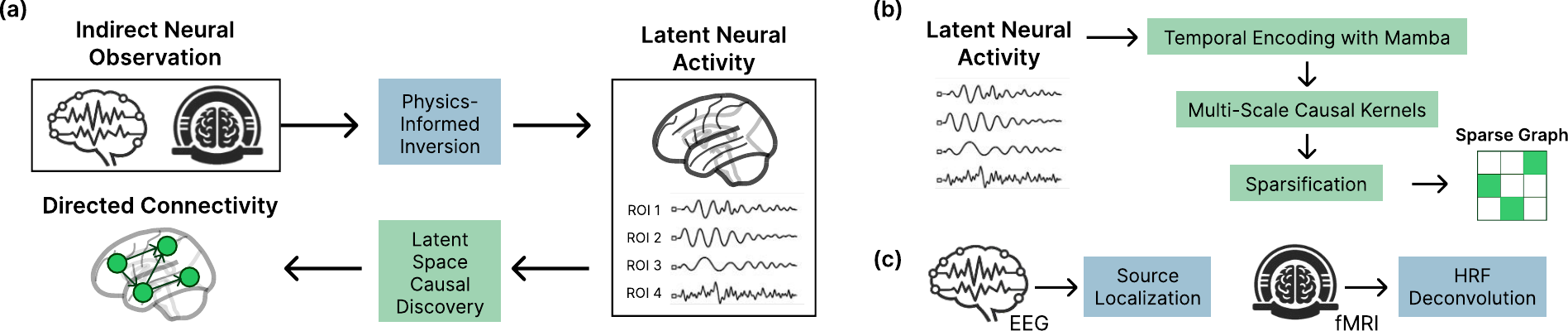}
    \caption{\texorpdfstring{\textbf{\textsc{\methodname} overview.} Inversion (HRF/leadfield) $\rightarrow$ latent Mamba encoding $\rightarrow$ delay-aware sparse graph. Training on simulation; transfer to HCP fMRI in Section~\ref{sec:real_fmri}.}{INCAMA overview: inversion, latent Mamba encoding, delay-aware sparse graph.}}
    \label{fig:overall_framework}
\end{figure*}

\section{Identifiability under Indirect Observation and Nonstationarity}
\label{sec:theory}

We ask when the delayed graph $\Gcal$ governing $\zb_{1:T}$ is identifiable from $\xb_t = \Hcal_\psi(\zb_{t-L:t}) + \epsilonb_t$ (unique $(E,\taub)$ from $P(\xb_{1:T})$ within the model class).
Ill-conditioning of $\Hcal_\psi$ and directional ambiguity under stationarity split the question into (a) recoverable inversion and (b) latent identifiability via nonstationary ``soft interventions'' (ICM; \citet{peters2017elements, huang2019causal}).
We state assumptions and results next (proofs in \pref{app:theory_proofs}).

\subsection{Assumptions}

We record the structural hypotheses used below; \emph{graphical/dynamical regularity} (Assumptions~\ref{assum:no_instantaneous}--\ref{assum:causal_sufficiency}), \emph{identifying variation} (Assumption~\ref{assum:nonstationary_softint}), \emph{recoverable inversion} (Assumption~\ref{assum:invertibility}), and \emph{latent identifiability class} (Assumption~\ref{assum:latent_identifiable_class}).

\begin{assum}[No instantaneous effects (delayed feedback)]
\label{assum:no_instantaneous}
All directed interactions have strictly positive delay: $\taub_{ij} \ge 1$ for all $(i \to j)\in E$.
\end{assum}

\begin{assum}[Latent causal sufficiency]
\label{assum:causal_sufficiency}
The exogenous noises $\{\eta_{j,t}\}_{j\in V}$ are jointly independent across $j$ (conditional on the past), so there are no unmodeled latent confounders among the latent ROIs/sources represented by $\zb_t$.
\end{assum}

\noindent
Assumption~\ref{assum:no_instantaneous} excludes instantaneous effects so causal influence propagates forward in time; Assumption~\ref{assum:causal_sufficiency} is the usual ``no hidden confounding among modeled units'' condition (in practice, a cortical ROI choice with subcortical effects absorbed as mixing).

\begin{assum}[Nonstationary mechanisms as soft interventions]
\label{assum:nonstationary_softint}
There exists an environment index $e(t)\in\{1,\dots,E\}$ such that
$f_{j,t}(\cdot)=f_j(\cdot;\thetab_{j}^{e(t)})$.
Moreover, the mechanism shifts $\{\thetab_j^{e}\}_e$ vary independently across modules $j$ (ICM)
and are sufficiently informative (non-degenerate) to distinguish candidate graphs within the model class.
\end{assum}

\noindent
Assumption~\ref{assum:nonstationary_softint} formalizes mechanism shifts as informative soft interventions \citep{huang2019causal,peters2017elements}.

\begin{assum}[Recoverable latent state under indirect observation]
\label{assum:invertibility}
There exists a (regularized) inversion procedure $g$ such that
$\widehat{\zb}_{1:T} = g(\xb_{1:T})$ and
\vspace{-0.5em}
\begin{equation}
  \frac{1}{T}\sum_{t=1}^T \| \widehat{\zb}_t - \zb_t \|_2^2 \xrightarrow[T\to\infty]{} 0
  \quad \text{(in probability)},
  \label{eq:latent_recovery}
\end{equation}
\vspace{-0.35em}

\noindent
under the measurement model, up to known equivalences (e.g., per-node scaling conventions).
Empirical support under forward-model perturbations is provided in Appendix~\ref{app:robust_tables}.
\end{assum}

\begin{assum}[\texorpdfstring{Latent identifiability via \citet{huang2019causal}}{Latent identifiability (Huang et al., 2019)}]
\label{assum:latent_identifiable_class}
Assume the latent process $\zb_{1:T}$ lies in the identifiable nonstationary SEM/SSM class of
\citet{huang2019causal}, adapted to our delayed-parent setting (parents of $z_{j,t}$ may include $\{z_{i,t-\taub_{ij}}\}$ with $\taub_{ij}\in\{1,\dots,D\}$),
and that Assumption~\ref{assum:nonstationary_softint} holds.
Then, within this model class, the delayed graph $\Gcal=(V,E,\taub)$ is identifiable from the joint distribution of $\zb_{1:T}$ (as $T\to\infty$).
\end{assum}

\subsection{Main Results (proofs deferred)}

Under \pref{assum:latent_identifiable_class}, the delayed graph is identifiable from $P(\zb_{1:T})$ by \citet{huang2019causal} (with the above delayed-parent adaptation). Our novel contributions are: (i) a reduction showing that consistent inversion (\pref{assum:invertibility}) transfers this guarantee to indirect
observations $\xb_{1:T}$, and (ii) an error bound quantifying how inversion error impacts downstream graph recovery.

\begin{theorem}[Identifiability from indirect observations via inversion]
\label{thm:indirect_identifiability}
Consider the model class where (i) $\zb_{1:T}$ is generated by a delayed dynamical SCM satisfying
\pref{assum:no_instantaneous}--\pref{assum:latent_identifiable_class}, and (ii) observations follow
$\xb_t=\Hcal_\psi(\zb_{t-L:t})+\epsilonb_t$ for which \pref{assum:invertibility} holds.
Then the delayed graph $\Gcal=(V,E,\taub)$ is identifiable from the distribution of $\xb_{1:T}$ (as $T\to\infty$),
up to the known equivalences in \pref{assum:invertibility}.
\end{theorem}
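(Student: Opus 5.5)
The plan is a reduction: combine the latent identifiability of \pref{assum:latent_identifiable_class} with the consistent inversion of \pref{assum:invertibility}. Identifiability here means that two parameter configurations in the model class inducing the same law of $\xb_{1:T}$ must induce the same $(E,\taub)$, up to the stated equivalences. I will argue by contradiction. Take two admissible data-generating processes $(\zb,\Gcal,\psi)$ and $(\zb',\Gcal',\psi')$ with $P(\xb_{1:T})=P(\xb'_{1:T})$ for all $T$, and show that $\Gcal=\Gcal'$.

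\textbf{Step 1 (transport the law through the inversion).} Since $g$ in \pref{assum:invertibility} is a fixed measurable map applied to the observations, the pushforwards agree: $P(g(\xb_{1:T}))=P(g(\xb'_{1:T}))$. So $\widehat{\zb}_{1:T}$ and $\widehat{\zb}'_{1:T}$ have the same distribution for every $T$.

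\textbf{Step 2 (asymptotic agreement of latent laws).} By \eqref{eq:latent_recovery}, the empirical mean-squared discrepancy $\frac1T\sum_t\|\widehat{\zb}_t-\zb_t\|^2$ tends to $0$ in probability, and the same holds for the primed process. I will use this to show that the quantities the latent identifiability result depends on coincide for $\zb$ and $\zb'$ up to the equivalence (per-node scaling). These are the finite-window marginals of $(\zb_{t-D:t})$, indexed by environment $e(t)$, together with their empirical limits as $T\to\infty$. Concretely, for any bounded Lipschitz test function $\varphi$ on windows of length $D+1$, I would show that $\frac1T\sum_t\varphi(\widehat{\zb}_{t-D:t})-\frac1T\sum_t\varphi(\zb_{t-D:t})\to0$. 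This follows from Lipschitz continuity, Cauchy--Schwarz, and \eqref{eq:latent_recovery}. The law of $\widehat{\zb}$ is common to both processes, so the limiting empirical window distributions of $\zb$ and $\zb'$ (per regime) coincide.

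\textbf{Step 3 (invoke latent identifiability).} These limiting window laws, per environment, are exactly the objects from which \pref{assum:latent_identifiable_class} identifies $(E,\taub)$. Here I use \pref{assum:no_instantaneous} to ensure that the delayed parents lie within a window of length $D$, \pref{assum:causal_sufficiency} for the independent noises, and \pref{assum:nonstationary_softint} for informative regime changes. Equal inputs to the identification map therefore force $\Gcal=\Gcal'$. Per-node rescalings do not change the support or the delays, which gives the stated qualification.

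The main obstacle is Step 2. Convergence of the average squared error in probability does not by itself transfer \emph{distributional} information unless the identification in \pref{assum:latent_identifiable_class} is phrased through ergodic/empirical window statistics. I would therefore state explicitly that the latent process is ergodic within each regime, so that those empirical limits determine the per-regime stationary laws. I would also need the regimes to occupy non-vanishing time fractions, so that the per-regime averages are not swamped by the global average. My first attempt is to restrict to regimes of positive asymptotic frequency and apply the Lipschitz argument within each regime separately, dividing by the regime frequency.
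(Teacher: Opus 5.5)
Your proposal follows the same three-step reduction as the paper's proof. Equal laws of $\xb_{1:T}$ push forward through the measurable map $g$ to equal laws of $\widehat{\zb}_{1:T}$. Vanishing $d_T$-error plus Lipschitz test functions then transfers this agreement to the latent processes, and \pref{assum:latent_identifiable_class} fixes $(E,\taub)$.

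The difference is in the middle step. The paper works with expectations of whole-path functionals $\varphi\in\mathcal{BL}(1,1)$, which are bounded and $1$-Lipschitz for $d_T$ on $\RR^{NT}$. Lemma~\ref{lem:bl_gap} bounds $|\EE\varphi(\widehat{\zb}_{1:T})-\EE\varphi(\zb_{1:T})|\le\EE[\min\{2,d_T(\widehat{\zb},\zb)\}]$ uniformly in $\varphi$. A three-term decomposition then gives $d_{\mathrm{BL}}(P_\theta(\zb_{1:T}),P_{\theta'}(\zb'_{1:T}))\to0$. From there the paper simply asserts that the two processes ``share the same limiting latent distribution'' and applies the assumption.

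You instead use empirical window averages $\frac1T\sum_t\varphi(\zb_{t-D:t})$. These are a special case of the paper's test functions, since they are $L\sqrt{D+1}$-Lipschitz in $d_T$. You pin down their probability limits via ergodicity, and equality of the laws of $\widehat{\zb}_{1:T}$ then forces those limits to agree.

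Your diagnosis of the obstacle is correct, and it applies equally to the paper's argument. Since $d_T(\zb,\zb')=T^{-1/2}\|\zb-\zb'\|_2$ on $\RR^{NT}$, take two processes with bounded second moments that can be coupled to coincide outside an $o(T)$ set of times. They already satisfy $d_{\mathrm{BL}}\to0$. So this convergence is compatible with the laws differing on every fixed finite-dimensional marginal. ``The limiting latent distribution'' is therefore meaningful only through asymptotic time-averaged statistics, and the paper's last step implicitly reads \pref{assum:latent_identifiable_class} in exactly that sense.

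The trade-off is this. The paper's route is shorter and stays formally inside the stated hypotheses. Yours says which distributional objects are actually matched, but it needs extra hypotheses the theorem does not list: within-regime ergodicity and regimes of non-vanishing frequency.

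One caveat concerns your per-regime refinement. Let $\mathcal{T}_e:=\{t\le T: e(t)=e\}$. For equality of the laws of $\widehat{\zb}_{1:T}$ to force equality of per-regime limits, the average $\frac{1}{|\mathcal{T}_e|}\sum_{t\in\mathcal{T}_e}\varphi(\widehat{\zb}_{t-D:t})$ must be the same measurable function of $\widehat{\zb}_{1:T}$ under both parameter instances. That requires the environment schedule $e(\cdot)$ to be a fixed (or observed) function of time common to $\theta$ and $\theta'$. If $e(t)$ is latent and may differ between instances, only statistics computable from $\widehat{\zb}$ alone transfer, for example window laws averaged over macroscopic time blocks. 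In that case \pref{assum:nonstationary_softint} would have to be strong enough to identify $\Gcal$ from those statistics.
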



\pref{thm:indirect_identifiability} shows that consistent inversion transfers latent-space identifiability to indirect observations. Next, we quantify how inversion error propagates to graph estimation error.


\textbf{Scores and top-$k$.}
We estimate continuous scores $S\in\mathbb{R}^{|V|\times|V|}$ (optionally with delay channels), summarize them as a weighted directed graph, and retain the top-$k$ strongest entries as a sparse edge set---a pragmatic analogue of hard sparsity that aligns with biophysical expectations of sparse anatomical substrate.
Formal stability/margin discussion is deferred to \pref{app:theory_proofs}.

\begin{proposition}[Inversion error propagation]
\label{prop:error_propagation}
Let $\widehat{\zb}_{1:T}=g(\xb_{1:T})$ and $\widehat{S}=h(\widehat{\zb}_{1:T})$, where $h$ is a continuous score function.
Let $S^\star$ denote the score representation of the true delayed graph $\Gcal$, and let $\Lcal$ be a loss on score objects.
Define the distance
$
d_T(\zb,\zb') := (\frac{1}{T}\sum_{t=1}^T \|\zb_t-\zb'_t\|_2^2)^{1/2}.
$
Assume $h$ is $L$-Lipschitz w.r.t.\ $d_T$ and $\Lcal$:
\[
\Lcal(h(\zb),h(\zb')) \le L\, d_T(\zb,\zb') \qquad \forall\,\zb,\zb'.
\]
Then
\begin{equation}
\EE\!\left[\Lcal(\widehat{S},S^\star)\right]
\;\le\;
\EE\!\left[\Lcal\!\big(h(\zb_{1:T}),S^\star\big)\right]
 +
L\left(\EE\!\left[\frac{1}{T}\sum_{t=1}^T \|\widehat{\zb}_t-\zb_t\|_2^2\right]\right)^{1/2}.
\label{eq:error_propagation_expectation}
\end{equation}
\end{proposition}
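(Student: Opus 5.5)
The plan is a pointwise triangle inequality followed by Jensen's inequality.

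\textbf{Step 1: a pointwise bound.} I will need $\Lcal$ to satisfy a triangle inequality on score objects, $\Lcal(a,c)\le \Lcal(a,b)+\Lcal(b,c)$, as any metric or norm-induced loss does. The proposition leaves this implicit. I will state it explicitly, since the bound does not follow from Lipschitzness alone. Inserting the oracle score $h(\zb_{1:T})$ between $\widehat{S}=h(\widehat{\zb}_{1:T})$ and $S^\star$ then gives, realization by realization,
\[
\Lcal(\widehat{S},S^\star)\le \Lcal\big(h(\widehat{\zb}_{1:T}),h(\zb_{1:T})\big)+\Lcal\big(h(\zb_{1:T}),S^\star\big).
\]
Applying the assumed Lipschitz property with $\zb=\widehat{\zb}_{1:T}$ and $\zb'=\zb_{1:T}$ bounds the first term by $L\,d_T(\widehat{\zb}_{1:T},\zb_{1:T})$.

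\textbf{Step 2: take expectations.} Taking expectations over the joint law of $(\zb_{1:T},\xb_{1:T})$ uses only linearity and monotonicity, which yields
\[
\EE[\Lcal(\widehat{S},S^\star)]\le \EE[\Lcal(h(\zb_{1:T}),S^\star)]+L\,\EE[d_T(\widehat{\zb}_{1:T},\zb_{1:T})].
\]
I will note a measurability requirement: $g$ and $h$ must be measurable so that these expectations exist. The bound is trivial if any expectation is infinite.

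\textbf{Step 3: Jensen.} The function $u\mapsto\sqrt{u}$ is concave, so Jensen gives $\EE[\sqrt{U}]\le\sqrt{\EE[U]}$ with $U=\frac1T\sum_t\|\widehat{\zb}_t-\zb_t\|_2^2\ge 0$. Equivalently, Cauchy--Schwarz gives $\EE[d_T]\le(\EE[d_T^2])^{1/2}$. This converts the middle term into the stated right-hand side of \eqref{eq:error_propagation_expectation}.

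\textbf{Main obstacle.} The only substantive point is the triangle inequality for $\Lcal$, and the statement should be made precise about it. If $\Lcal$ is only a quasi-metric with constant $C$, the same argument goes through with a factor $C$ on both terms. For the top-$k$ discrete loss, which is not continuous, I would apply the bound to the continuous scores and defer thresholding to the margin discussion.

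As a remark, combining this bound with \pref{assum:invertibility} and uniform integrability sends the inversion term to $0$. This links the proposition back to \pref{thm:indirect_identifiability}.
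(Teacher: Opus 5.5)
Your proposal is correct and follows the paper's proof essentially step for step. The paper likewise introduces the triangle-inequality assumption on $\Lcal$ inside the proof, since the statement leaves it implicit; it then inserts $h(\zb_{1:T})$, applies the Lipschitz bound, takes expectations, and finishes with Cauchy--Schwarz on $\EE[d_T(\widehat{\zb},\zb)]$.
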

In \pref{prop:error_propagation}, the first term is the intrinsic error of the latent graph estimator (even with true $\zb$); the second term quantifies how inversion error degrades graph recovery. 

\begin{corollary}[Consistency of the end-to-end estimator]
\label{corr:consistency}
Assume \pref{assum:invertibility} and that, given access to $\zb_{1:T}$, the score estimator $h$ is consistent under the latent model class in \pref{assum:latent_identifiable_class}:
$\Lcal(h(\zb),S^\star)\xrightarrow[T\to\infty]{}0$ in probability.
Then the end-to-end estimator $\widehat{S}=h(g(\xb))$ is consistent, i.e.
$\Lcal(\widehat{S},S^\star)\xrightarrow[T\to\infty]{}0$.
\end{corollary}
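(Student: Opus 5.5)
The plan is to obtain the corollary directly from the deterministic, pathwise version of the inequality behind \pref{prop:error_propagation}, rather than from its expectation form. Assumption~\ref{assum:invertibility} only gives convergence in probability, which need not imply convergence of the expected squared error, so the pathwise route avoids needing that extra control. Fix $T$ and a realization. I assume that $\Lcal$ satisfies a triangle inequality on score objects; this is implicit in \pref{prop:error_propagation}, whose proof combines the Lipschitz bound with exactly this step. Writing $\widehat{S}=h(\widehat{\zb}_{1:T})$ and inserting the oracle score $h(\zb_{1:T})$ gives
\[
\Lcal(\widehat{S},S^\star)\;\le\;\Lcal\big(h(\zb_{1:T}),S^\star\big)+\Lcal\big(h(\widehat{\zb}_{1:T}),h(\zb_{1:T})\big)\;\le\;\Lcal\big(h(\zb_{1:T}),S^\star\big)+L\,d_T(\widehat{\zb},\zb).
\]
Taking expectations of this display recovers \eqref{eq:error_propagation_expectation} via Jensen's inequality. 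Here we keep it pathwise.

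Next, I control the two terms separately. The first term, $A_T:=\Lcal(h(\zb_{1:T}),S^\star)$, tends to $0$ in probability by the consistency hypothesis on $h$ under the latent class of \pref{assum:latent_identifiable_class}. For the second term, note that $d_T(\widehat{\zb},\zb)^2$ is exactly the quantity in \eqref{eq:latent_recovery}. By Assumption~\ref{assum:invertibility} it tends to $0$ in probability, and by the continuous mapping theorem (square root) so does $B_T:=L\,d_T(\widehat{\zb},\zb)$. The known equivalences in Assumption~\ref{assum:invertibility}, such as per-node scaling, are handled by assuming $h$, or equivalently $S^\star$, is defined modulo these conventions. Concretely, I replace $\widehat{\zb}$ by its aligned version before applying the Lipschitz bound.

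Finally, for any $\varepsilon>0$, $\PP(\Lcal(\widehat{S},S^\star)>\varepsilon)\le \PP(A_T>\varepsilon/2)+\PP(B_T>\varepsilon/2)\to 0$, which is the claimed consistency. I also use $\Lcal\ge 0$ so that the lower tail is trivial.

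I expect the main obstacle to be the mismatch between the expectation form of \pref{prop:error_propagation} and the in-probability form of Assumption~\ref{assum:invertibility}; the pathwise inequality above resolves it. The secondary subtlety is that the Lipschitz constant $L$ must not grow with $T$. I would state this as part of the hypothesis, as the proposition implicitly does. If $L$ did grow, the argument would go through only if $L_T\,d_T\to 0$, i.e., a rate condition on the inversion.
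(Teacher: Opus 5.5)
Your proposal is correct and takes the same route as the paper: the paper also uses the pathwise bound $\Lcal(\widehat{S},S^\star)\le \Lcal(h(\zb_{1:T}),S^\star)+L\,d_T(\widehat{\zb},\zb)$, obtained from the triangle inequality and the Lipschitz property, and concludes because both terms vanish in probability. Your extra details (the square-root continuous-mapping step, the $\varepsilon/2$ split, nonnegativity of $\Lcal$, alignment for the known equivalences, and requiring $L$ not to grow with $T$) just make explicit what the paper's short argument leaves implicit.
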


\textbf{Takeaway.}
Nonstationary mechanism shifts yield latent-space identifiability (via \citet{huang2019causal}); if the latent trajectories are consistently recoverable from indirect measurements, the delayed graph is identifiable from observed EEG/fMRI, with stability controlled by the inversion error bound in \pref{prop:error_propagation}.

\section{INCAMA: Indirect Causal Mamba}
\label{sec:method}


We detail \textbf{\textsc{\methodname}} as implemented in Figure~\ref{fig:overall_framework}; pseudocode is in Appendix~\ref{appendix:algorithm}.
The architecture follows the problem factorization emphasized above: a modality-specific Stage~1 inverts the forward physics to obtain $\widehat{\zb}_{1:T}$ that are committed to being neural-scale trajectories (not generic embeddings), and a Stage~2 performs delay-aware, nonstationarity-exploiting causal discovery in latent space with a Mamba-based temporal backbone.
The Mamba choice is therefore not an arbitrary architectural claim: it is a scalable selective-SSM instantiation of the state-space perspective used in nonstationary causal discovery~\citep{huang2019causal,gu2024mamba}, although the identifiability statements in Section~\ref{sec:theory} are conditional on the model class and inversion assumptions rather than on Mamba itself.
Selective SSMs are structurally aligned with the problem because they parameterize latent dynamics with long-range delayed dependencies under nonstationary regime shifts, not only because they scale efficiently.
End-to-end training on simulated data lets the causal objective reshape inversion so that reconstructions are useful for graph identifiability, not only for low reconstruction error in $\xb$.
\textsc{\methodname} thus targets the pair $(\widehat{\zb}_{1:T},\widehat{G})$ jointly.

\noindent
We instantiate the map $\Phi$ from \eqref{eq:phi_objective}; $x_{1:T}$ denotes observed EEG or fMRI and $\widehat G$ the inferred directed graph.
We follow the notation of Sections~\ref{sec:problem_setting}--\ref{sec:theory}.

\subsection{Physics-aware Inversion}
\label{subsec:inversion}

Neuroimaging measurements provide only indirect observations of latent neural activity $z_t$, generated through a known but non-invertible forward process
\begin{equation}
x_t = H_\psi(z_{t-L:t}) + \varepsilon_t ,
\end{equation}
rendering recovery of $z_{1:T}$ from $x_{1:T}$ an ill-posed inverse problem.
To address this, we learn a parameterized inverse model $\widehat z_{1:T} = g_\theta(x_{1:T})$, whose architecture and training objectives are constrained by the known forward operator $H_\psi$, ensuring that the recovered trajectories correspond to physically meaningful neural activity rather than generic latent embeddings.


\textbf{Case 1: Spatial inversion in EEG.}
EEG measurements are approximately instantaneous linear mixtures of neural sources~\citep{baillet2001electromagnetic,mosher1999forward},
\begin{equation}
x_t \approx L_\psi z_t + \varepsilon_t ,
\end{equation}
where $L_\psi \in \mathbb{R}^{M\times R}$ is the leadfield matrix.
We learn an inverse model $\widehat z_{1:T} = g^{\mathrm{EEG}}_\theta(x_{1:T})$ constrained such that $L_\psi \widehat z_t$ reconstructs $x_t$, promoting causal discovery on spatially unmixed latent neural activity.

\textbf{Case 2: Temporal inversion in fMRI.}
fMRI measurements are temporally blurred by the hemodynamic response function (HRF)~\citep{friston2003dynamic,logothetis2008what},
\begin{equation}
x_t \approx \sum_{\ell=0}^{L} h_{\psi,\ell}\, z_{t-\ell} + \varepsilon_t .
\end{equation}
The inverse model jointly estimates latent neural activity and HRF parameters, $(\widehat z_{1:T}, \widehat h_\psi) = g^{\mathrm{fMRI}}_\theta(x_{1:T})$, such that convolving $\widehat z_{1:T}$ with $\widehat h_\psi$ reconstructs the observed BOLD signal.
Allowing region-specific HRFs mitigates hemodynamically induced temporal confounds~\citep{handwerker2004variation,deshpande2010hrfGC,rangaprakash2023hrfconfound}.



\subsection{Latent-Space Causal Discovery}
\label{subsec:causal}

Given reconstructed neural activity, the second stage estimates the directed, delay-aware causal graph $\widehat G$ governing the latent dynamics.

\textbf{Temporal encoding and latent dynamics.}
Each ROI series $\widehat z_{i,1:T}$ passes through a shared Mamba encoder~\citep{gu2024mamba}.
Directed effects use a multi-lag linearized model with short/mid/long delay groups
$\mathcal{G}=\{\text{short},\text{mid},\text{long}\}$,
$
\widehat z_{t+1} =
\sum_{g \in \mathcal{G}} \sum_{\ell \in \mathcal{L}_g}
A^{(g)}_\ell \widehat z_{t-\ell} + \xi_t .
$
Aggregated kernel magnitudes yield causal strengths; top-$k$ pruning yields a sparse graph.

\subsection{Training Objective and End-to-End Optimization}

On simulations with known $A^\star$, we minimize
$\mathcal L_{\mathrm{total}}=
\mathcal L_{\mathrm{graph}}
+
\lambda_{\mathrm{asym}} \mathcal L_{\mathrm{asym}}
+
\lambda_{\mathrm{stab}} \mathcal L_{\mathrm{stab}}$
end-to-end in $\theta$ so gradients from the causal term train $g_\theta$ (Appendix~\ref{appendix:implementation_details}).

\textbf{Transfer to real data.}
After pretraining on biophysical simulations, the learned inference map $\Phi$ is applied zero-shot to real fMRI data without finetuning (Section~\ref{sec:real_fmri}).

\section{Experiments}\label{sec:experiments}
\subsection{Experimental Setup}

\textbf{Simulated data.}
We evaluate INCAMA on large-scale physics-aware simulations inspired by The Virtual Brain (TVB)~\citep{sanz2013virtual}, which generates whole-brain neural activity with known directed connectivity, transmission delays, and controlled nonstationarity.
Realistic EEG and fMRI observations are obtained via biophysically grounded forward models.
The simulators are designed to preserve the factors that matter for graph recovery rather than only matching marginal signal statistics: sparse anatomical substrate, strictly positive conduction delays, time-varying coupling strengths, modality-specific measurement distortion, and realistic sampling/noise.
EEG simulations include leadfield mixing and source leakage, while fMRI simulations include ROI-specific HRFs, high-resolution neural dynamics, HRF convolution, and TR downsampling.
This yields a controlled stress test for indirect causal recovery: the ground-truth graph is known, but the observations retain the main distortions that make human EEG/fMRI difficult.
Simulation details are provided in Appendix~\ref{sec:appendix_EEG_simulator} and Appendix~\ref{sec:appendix_fMRI_simulator} for EEG and fMRI, respectively.

\textbf{Real data.}
As an external plausibility check, we use motor-task fMRI data from the Human Connectome Project (HCP S1200), parcellated into 68 cortical regions using the Desikan--Killiany atlas~\citep{desikan2006automated}.
All models are trained exclusively on simulated data and transferred zero-shot to HCP data, where evaluation focuses on whether high-scoring directed edges align with established motor-network organization.
Because ground-truth directed connectivity is not available in humans, we follow an \emph{indirect evaluation paradigm} common in neuroimaging-based causal discovery: (i) quantitative recovery when the graph is known in simulation; (ii) anatomical and task-network face validity on real data; and (iii) stability under physically plausible forward-model perturbations (Appendix~\ref{app:robust_tables}).
We do not interpret the HCP experiment as ground-truth causal validation.

\textbf{Baselines.}
We compare against standard \emph{two-stage} pipelines that decouple (i) inversion or deconvolution from (ii) linear or learned causal discovery.
For EEG, this includes classical source localization (MNE, sLORETA, DeepSIF) paired with Granger/VAR and with TCN/GNN-based structure learners; for fMRI, FIR/Wiener deconvolution followed by Granger/VAR, rDCM, and neural baselines (TCDF, CNN).
This family isolates the benefit of joint physics-aware inversion and latent causal modeling; CausalMamba is discussed as related sequence-modeling work but is not included as a head-to-head baseline because it is not designed for whole-brain indirect neuroimaging with modality-specific forward physics.
Full names, implementation details, and hyperparameters are in Appendix~\ref{appendix:baselines}.

\textbf{Evaluation metrics.}
On simulations, graph recovery is evaluated using F1 score, normalized Structural Hamming Distance (SHD) and Direction-aware Structural Hamming Distance (dSHD).
All metrics are computed against ground-truth graphs for simulated data; formal definitions are provided in Appendix~\ref{appendix:metrics}.
TVB benchmarks are summarized in Tables~\ref{tab:eeg_results}--\ref{tab:fmri_results}.

\subsection{Results on Simulated Data and Ablation Analysis}\label{app:sim_tables}

\textbf{EEG.}
On simulated EEG (Table~\ref{tab:eeg_results}), classical source-localization-plus-linear-causal pipelines perform near chance under residual volume conduction and source leakage; neural baselines offer modest gains; \textsc{\methodname} substantially improves F1, SHD, and dSHD. Ablations show that omitting source localization causes a pronounced degradation, while exploiting time-varying latent dynamics yields a large observed improvement across the three repeated runs currently reported ($\Delta\text{F1}{=}{+}0.391$; reported descriptively because $n{=}3$).
The failure of the \emph{without physical inversion} ablation is the expected negative control: graph learning directly on mixed sensor observations cannot recover ROI-level directed structure.
The stationary/nonstationary split then tests the second stage rather than the inverse problem, showing that once signals are unmixed, temporally varying mechanisms provide useful orientation signal.

\textbf{fMRI.}
On simulated fMRI (Table~\ref{tab:fmri_results}), classical HRF deconvolution combined with linear estimators remains strongly affected by hemodynamic confounds, while \textsc{\methodname} jointly models HRF variability and delayed interactions in latent space and achieves the strongest agreement with the simulated ground-truth graph.
The \emph{without HRF deconvolution} variant remains competitive but trails the full model, indicating that causal supervision can learn useful structure from BOLD while explicit HRF inversion still improves agreement with the latent graph.
Stationary and nonstationary \textsc{\methodname} variants are practically indistinguishable here ($\Delta\text{F1}{=}{+}0.0008$ across the three repeated runs currently reported); this is theoretically expected---at TR$={}$2~s the HRF smoothing attenuates coupling fluctuations below the sampling resolution---so the gain from explicit nonstationarity emerges precisely in the modality that can resolve the regime shifts.
At matched scale, \textsc{\methodname} reaches F1 $0.610$ at $5{\times}10^{-3}$ GFLOPs vs.\ rDCM $\sim$$10^1$ GFLOPs (F1 $0.258$); see Table~\ref{tab:complexity}.

\textbf{Nonstationarity-aware baseline.}
Against CD-NOD~\citep{huang2020causal} on the matched nonstationary fMRI test set, \textsc{\methodname} attains $13.1\times$ higher F1 and runs $\sim$$188\times$ faster even when CD-NOD is given oracle context labels (Appendix~\ref{app:cdnod}).
This comparison isolates the benefit of doing nonstationary discovery after physics-aware inversion rather than directly on BOLD.

\begin{table*}[t]
\centering
\normalsize
\setlength{\tabcolsep}{6pt}
\setlength{\aboverulesep}{0.5ex}
\setlength{\belowrulesep}{0.5ex}
\renewcommand{\arraystretch}{1.10}
\captionsetup[sub]{skip=3pt,margin=0pt,font=normalsize}
\setlength{\abovecaptionskip}{1pt}
\setlength{\belowcaptionskip}{0pt}

\begin{subtable}{\linewidth}
\centering
\begin{tabular}{@{}lccc@{}}
\toprule
\textbf{Method} & \textbf{F1} $\uparrow$ & \textbf{SHD} $\downarrow$ & \textbf{dSHD} $\downarrow$ \\
\midrule
MNE + Granger & 0.163$\pm$0.000 & 0.530$\pm$0.000 & 0.278$\pm$0.000 \\
MNE + VAR & 0.160$\pm$0.002 & 0.533$\pm$0.002 & 0.280$\pm$0.001 \\
sLORETA + Granger & 0.163$\pm$0.000 & 0.530$\pm$0.000 & 0.278$\pm$0.000 \\
sLORETA + VAR & 0.162$\pm$0.003 & 0.531$\pm$0.003 & 0.279$\pm$0.002 \\
DeepSIF + Granger & 0.176$\pm$0.001 & 0.533$\pm$0.001 & 0.279$\pm$0.000 \\
DeepSIF + TCN + sparse MLP & 0.215$\pm$0.000 & 0.495$\pm$0.000 & 0.254$\pm$0.000 \\
sLORETA + TCN + sparse MLP & 0.174$\pm$0.003 & 0.534$\pm$0.002 & 0.280$\pm$0.001 \\
DeepSIF + GNN + sparse MLP & 0.192$\pm$0.000 & 0.511$\pm$0.001 & 0.265$\pm$0.000 \\
sLORETA + GNN + sparse MLP & 0.193$\pm$0.001 & 0.509$\pm$0.001 & 0.264$\pm$0.001 \\
\midrule
\methodname\ (without physical inversion) & 0.000$\pm$0.000 & 0.102$\pm$0.000 & 0.075$\pm$0.000 \\
\methodname\ (Stationary) & 0.252$\pm$0.034 & 0.148$\pm$0.004 & 0.163$\pm$0.002 \\
\textbf{\methodname\ (Nonstationary)} & \textbf{0.643$\pm$0.012} & \textbf{0.081$\pm$0.010} & \textbf{0.109$\pm$0.009} \\
\bottomrule
\end{tabular}
\caption{\texorpdfstring{TVB-simulated EEG. Structure metrics (MAE ill-defined under nonstationarity); degenerate F1$\approx$0 yields misleadingly low SHD.}{(a) TVB-simulated EEG. MAE ill-defined; degenerate F1 near 0 yields misleadingly low SHD.}}
\label{tab:eeg_results}
\end{subtable}

\vspace{0.35em}

\begin{subtable}{\linewidth}
\centering
\begin{tabular}{@{}lccc@{}}
\toprule
\textbf{Method} & \textbf{F1} $\uparrow$ & \textbf{SHD} $\downarrow$ & \textbf{dSHD} $\downarrow$ \\
\midrule
FIR + Granger & 0.247$\pm$0.000 & 0.400$\pm$0.000 & 0.384$\pm$0.001  \\
FIR + VAR & 0.259$\pm$0.000 & 0.450$\pm$0.000 & 0.460$\pm$0.000  \\
Wiener + Granger & 0.229$\pm$0.000 & 0.352$\pm$0.000 & 0.350$\pm$0.001 \\
Wiener + VAR & 0.257$\pm$0.000 & 0.450$\pm$0.000 & 0.430$\pm$0.000  \\
rDCM & 0.258$\pm$0.006 & 0.450$\pm$0.003 & 0.460$\pm$0.003  \\
Deconv + TCDF & 0.000$\pm$0.000 & 0.174$\pm$0.001 & 0.177$\pm$0.001 \\
EndtoEndCNN & 0.061$\pm$0.073 & 0.534$\pm$0.022 & 0.454$\pm$0.002 \\
\midrule
\methodname\ (without HRF deconvolution) & 0.583$\pm$0.006 & 0.093$\pm$0.002 & 0.093$\pm$0.002 \\
\methodname\ (Stationary) & 0.608$\pm$0.001 & 0.089$\pm$0.000 & 0.093$\pm$0.000 \\
\textbf{\methodname\ (Nonstationary)} & \textbf{0.610$\pm$0.005} & \textbf{0.088$\pm$0.001} & \textbf{0.092$\pm$0.001} \\
\bottomrule
\end{tabular}
\caption{TVB-simulated fMRI; same metrics as (a).}
\label{tab:fmri_results}
\end{subtable}

\vspace{0.2em}
\caption{\texorpdfstring{\textbf{Simulation benchmarks (TVB).} Classical pipelines are strongly affected by mixing/HRF blur; \textsc{\methodname} gives the closest directed-graph agreement with simulator ground truth.}{Simulation benchmarks (TVB). Classical pipelines are affected by mixing/HRF blur; INCAMA gives the closest graph agreement with simulator ground truth.}}
\end{table*}

\subsection{Computational Cost vs.\ Accuracy}\label{app:comp}
\begin{table*}[t]
\centering
\normalsize
\renewcommand{\arraystretch}{1.08}
\begin{tabular}{@{}lccc@{}}
\toprule
\textbf{Method} & \textbf{Complexity} & \textbf{GFLOPs} & \textbf{F1 $\uparrow$} \\
\midrule
rDCM           & $\mathcal{O}(I\times [T R^3 + R^4])$ & $1.0\times 10^{1}$ & 0.258  \\
EndtoEndCNN    & $\mathcal{O}(R T + R^2)$              & $2.0\times 10^{-5}$ & 0.061 \\
Deconv+TCDF    & $\mathcal{O}(R T + R^2)$              & $3.5\times 10^{-4}$ & 0.289 \\
\midrule
\textsc{\methodname} & $\mathcal{O}(R T^2 + R^2 T)$  & $\mathbf{5.0\times 10^{-3}}$ & \textbf{0.610}\\
\bottomrule
\end{tabular}
\caption{\texorpdfstring{Accuracy--cost trade-off on TVB fMRI ($R{=}68$, $T{=}240$; rDCM $I{=}100$). Stage~1+2 inference per subject.}{Accuracy-cost trade-off on TVB fMRI (R=68, T=240; rDCM I=100). Stage 1+2 inference per subject.}}
\label{tab:complexity}
\end{table*}

\subsection{Robustness to Physics Misspecification}\label{sec:robustness}
Simulation-trained pipelines can either memorize simulator physics or learn transferable structure.
These experiments operationalize \pref{thm:indirect_identifiability} by perturbing the forward model while keeping the latent causal structure fixed.
With weights frozen, we stress-test HRF scaling ($\pm20\%$) on TVB fMRI and Frobenius-noise leadfields ($\alpha\in\{0,1,5\}$) on TVB EEG; F1 changes by less than $0.002$ under HRF scaling and drops by at most $5.1\%$ under heavy leadfield noise.
Graph scores therefore move smoothly with bounded drops---consistent with \pref{prop:error_propagation}---rather than collapsing, which argues against a model that merely memorizes nominal simulator physics.
Protocols, tables, and interpretation are in Appendix~\ref{app:robust_tables}.

\subsection{Zero-shot Plausibility Check on Real fMRI}\label{sec:real_fmri}

We apply \textsc{\methodname} zero-shot to HCP S1200 motor-task fMRI ($n{=}1{,}079$ subjects), assessing consistency with the canonical visuo-motor hierarchy \citep{nassi2007specialized,hoshi2004differential}.
\begin{table}[!t]
\centering
\begin{tabular}{lccc}
\toprule
\textbf{Method} & \textbf{Precision} $\uparrow$ & \textbf{Recall} $\uparrow$ & \textbf{F1} $\uparrow$ \\
\midrule
\textbf{\methodname}  & \textbf{0.011$\pm$0.002} & \textbf{0.610$\pm$0.120} & \textbf{0.022$\pm$0.004} \\
rDCM                  & 0.002$\pm$0.002 & 0.140$\pm$0.097 & 0.005$\pm$0.003 \\
FIR + Granger         & 0.003$\pm$0.002 & 0.148$\pm$0.108 & 0.005$\pm$0.004 \\
Random                & 0.002$\pm$0.002 & 0.137$\pm$0.091 & 0.005$\pm$0.003 \\
\bottomrule
\end{tabular}
\setlength{\abovecaptionskip}{3pt}
\setlength{\belowcaptionskip}{0pt}
\vspace*{0.05in}
\caption{\textbf{Overlap with canonical visuo-motor pathways on the HCP S1200 motor task.}
Precision, recall, and F1 are averaged \emph{per subject} across $n{=}1{,}079$ subjects against $12$ canonical visuo-motor edges (mean$\pm$std).
\textsc{\methodname} recovers high recall for canonical visuo-motor edges relative to the strongest baseline (FIR$+$Granger), but absolute precision remains low in the full $68^2$ edge space; we therefore treat this table as an indirect task-network enrichment check rather than ground-truth causal validation.}
\label{tab:causal_recovery}
\vspace*{0.05in}
\end{table}

Fig.~\ref{fig:Causal_discovery_fig} summarizes group-level directed scores; Table~\ref{tab:causal_recovery} reports per-subject overlap with a small set of canonical visuo-motor edges: F1 $0.022$ with low absolute precision, recall $0.610$ vs.\ $0.148$ for FIR$+$Granger; bilateral motor/visual edges (M1$\leftrightarrow$S1, V1$\rightarrow$V2, PM$\rightarrow$M1) appear in $>$$76\%$ of subjects under top-$k$ sparsification (left PM$\rightarrow$M1 is lowest at $\sim$$77\%$; all others exceed $78\%$).
The absolute precision is low, so we interpret this result as task-network enrichment rather than edge-level causal certification.
This low precision is partly structural to the evaluation: among $68\times 67=4{,}556$ possible directed cortical ROI pairs, the canonical reference set labels only $12$ visuo-motor edges as positives, so any plausible task-related edge outside this small list is counted as a false positive.
One likely contributor is cortical-only modeling: subcortical motor structures are not included in the 68-ROI atlas and may act as unmodeled drivers, so HCP edges should be read as effective cortical associations under the stated assumptions.
Observation-space baselines sit near random F1, suggesting that hemodynamic distortion is a plausible bottleneck for direct-BOLD pipelines in this evaluation.
The HCP analysis therefore shows that simulation-trained scores retain task- and anatomy-consistent structure when transferred without finetuning.
Predicted density $14.6\%$ matches $\sim$$15\%$ structural density; edges concentrate in Visual ($40.9\%$) and Somatomotor ($39.6\%$) networks.
Robustness to forward-model perturbation is detailed in Appendix~\ref{app:robust_tables}.

\begin{figure}[t]
    \centering
    \includegraphics[width=0.7\linewidth]{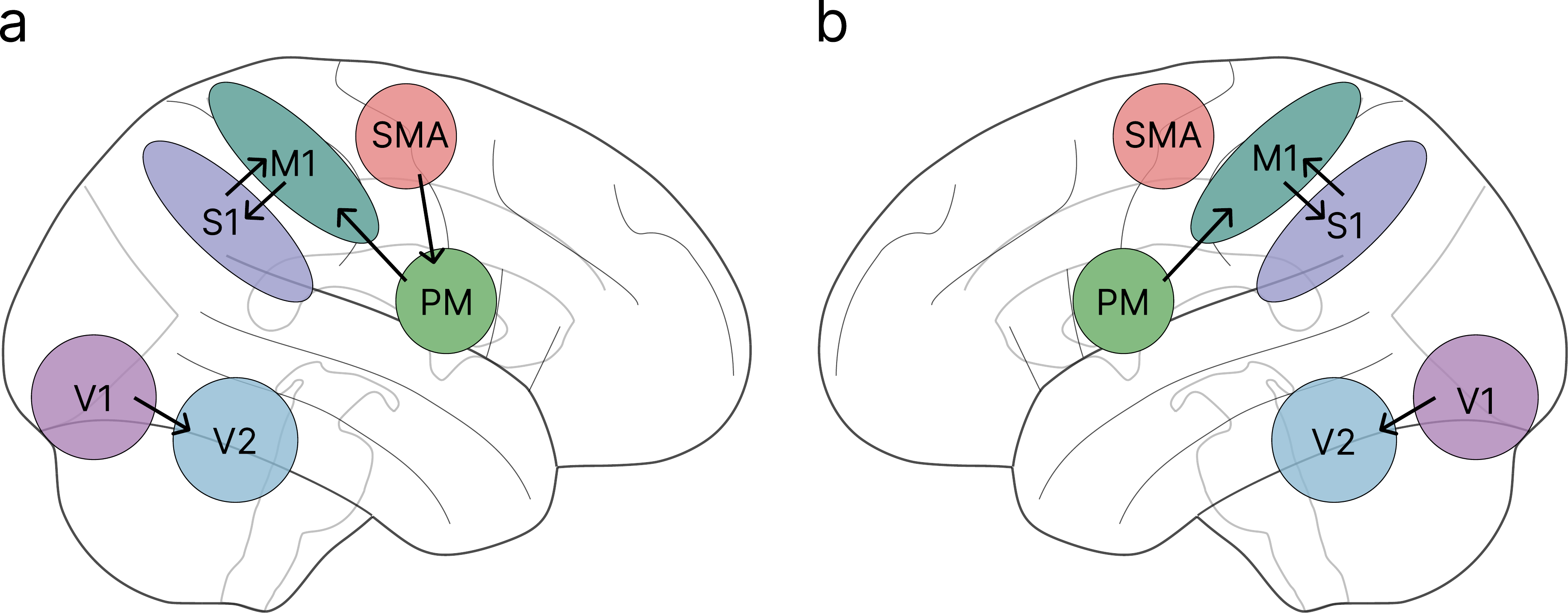}
    \caption{\texorpdfstring{\textbf{Group-averaged visuo-motor pathways} (HCP motor task, $n{=}1{,}079$). Desikan--Killiany ROIs; arrows are cross-subject mean directed estimates.}{Group-averaged visuo-motor pathways (HCP motor task, n=1079). Desikan-Killiany ROIs; arrows are cross-subject mean directed estimates.}}
    \label{fig:Causal_discovery_fig}
\end{figure}

\paragraph{Network-level edge distribution.}\label{app:hcp_network}
Table~\ref{tab:causal_recovery} reports per-subject global metrics; here we summarize the spatial distribution of predicted edges across canonical functional networks for the $1{,}079$-subject HCP cohort during the motor task.
Predicted edges concentrate in task-relevant networks rather than spreading diffusely across the brain:
\begin{itemize}[leftmargin=1.5em,itemsep=0.15em,topsep=0pt]
    \item Visual within-network density: $40.9\%$
    \item Somatomotor within-network density: $39.6\%$
    \item Somatomotor $\leftrightarrow$ Dorsal Attention cross-network density: $32.7\%$ (consistent with attention--motor coupling during voluntary movement)
    \item Overall predicted graph density: $14.6\%$, closely matching structural connectivity density ($\sim$$15\%$) for the Desikan--Killiany atlas
\end{itemize}
The combination of task-network concentration and plausible global density provides convergent evidence that simulation-trained scores transfer to real motor-task fMRI.

\subsection{Linking theory and experiments}\label{sec:theory_exp_link}\label{app:theory_mapping}
Each ingredient in Section~\ref{sec:theory} was mapped to a concrete module \emph{before} running the benchmarks; the experiments are deliberate probes of those assumptions rather than a post-hoc checklist.
Stage~2 implements multi-scale delay-aware kernels and both simulators enforce strictly positive transmission delays, aligning the estimated graph semantics with \pref{assum:no_instantaneous}.
The simulated generators apply time-varying coupling drifts; empirically, EEG benefits from explicit nonstationarity when regime shifts remain temporally resolvable, whereas fMRI shows little stationary/nonstationary split after HRF low-pass filtering.
Stage~1 and the perturbation experiments probe recoverable inversion: under $\pm20\%$ HRF scaling or heavy leadfield Frobenius noise (Appendix~\ref{app:robust_tables}), graph scores degrade smoothly with bounded drops rather than collapsing.
This is the empirical counterpart of the error-propagation view: bounded degradation in the recovered latent trajectory should induce bounded degradation in continuous graph scores, while poor inversion or unmodeled confounding remains the main failure mode.
Continuous scores are finally thresholded with top-$k$ sparsification; the formal margin discussion in Appendix~\ref{app:topk_stability} clarifies that \pref{prop:error_propagation} controls scores, while discrete edge stability additionally requires separation between retained and discarded scores.
This mapping is intended as a compact sanity check, not a separate validation theorem: quantitative claims come from simulated ground truth, while HCP remains an external plausibility check.

\color{black}

\section{Conclusion}\label{sec:conclusion}
\textsc{\methodname} is a neuroimaging causal discovery framework that treats directed connectivity as \emph{latent}: measurement inversion and delayed graph scoring are coupled rather than conflated, and temporal variation is used as identifying signal rather than nuisance.
We formulated sufficient conditions under which delayed graphs are identifiable from indirect observations within the stated model class and bounded graph-estimation error by inversion error (\pref{sec:theory}); empirically, large-scale TVB simulations provide quantitative benchmarks with known graphs, while HCP motor-task fMRI provides zero-shot plausibility evidence through canonical visuo-motor enrichment under the indirect evaluation paradigm.
These results position \textsc{\methodname} as a scalable route to whole-brain directed-connectivity inference when measurement physics and temporal structure are modeled jointly; future work includes invasive validation and cross-task transfer.

\textbf{Limitations.}\label{sec:limitations}
\textsc{\methodname} models $68$ Desikan--Killiany cortical ROIs and excludes subcortical structures, so unmodeled drivers such as thalamic or basal-ganglia activity may mediate apparent cortical edges. The recovered graph should therefore be interpreted as \emph{effective cortical connectivity conditioned on subcortical influences}---a limitation shared by all cortical-only methods (DCM, Granger). Training relies on biophysical simulations because ground-truth causal connectivity is unavailable in human neuroimaging, so real-data evaluation proceeds indirectly via simulation recovery, anatomical consistency, and robustness to physics misspecification (Appendix~\ref{app:robust_tables}); stronger validation against invasive electrophysiology or cross-task generalization is important future work. \pref{thm:indirect_identifiability} is asymptotic and assumes recoverable inversion, while \pref{prop:error_propagation} provides the corresponding finite-sample bound with constants that depend on the inversion procedure.

\textbf{Impact.}
This work aims to improve directed-connectivity estimation by modeling measurement distortion rather than attributing it to neural interactions, supporting scientific understanding of brain circuit organization and hypothesis generation in neuroscience and clinical research. However, inferred graphs from observational neuroimaging remain sensitive to modeling assumptions, preprocessing choices, and unmodeled confounding, and should not guide medical decisions in isolation. Human neuroimaging also raises privacy considerations; our experiments use established de-identified datasets and report only aggregate results.



\begingroup
\small
\bibliography{references}

@article{frassle2021regression,
  title={Regression dynamic causal modeling for resting-state fMRI},
  author={Fr{\"a}ssle, Stefan and Harrison, Samuel J and Heinzle, Jakob and Clementz, Brett A and Tamminga, Carol A and Sweeney, John A and Gershon, Elliot S and Keshavan, Matcheri S and Pearlson, Godfrey D and Powers, Albert and others},
  journal={Human brain mapping},
  volume={42},
  number={7},
  pages={2159--2180},
  year={2021},
  publisher={Wiley Online Library}
}

@article{friston2003dynamic,
  title={Dynamic causal modelling},
  author={Friston, Karl J and Harrison, Lee and Penny, Will},
  journal={Neuroimage},
  volume={19},
  number={4},
  pages={1273--1302},
  year={2003},
  publisher={Elsevier}
}

@article{granger1969investigating,
  title={Investigating causal relations by econometric models and cross-spectral methods},
  author={Granger, Clive WJ},
  journal={Econometrica: journal of the Econometric Society},
  pages={424--438},
  year={1969},
  publisher={JSTOR}
}

@article{daunizeau2011dynamic,
  title={Dynamic causal modelling: a critical review of the biophysical and statistical foundations},
  author={Daunizeau, Jean and David, Olivier and Stephan, Klaas E},
  journal={Neuroimage},
  volume={58},
  number={2},
  pages={312--322},
  year={2011},
  publisher={Elsevier}
}

@inproceedings{huang2019causal,
  title={Causal discovery and forecasting in nonstationary environments with state-space models},
  author={Huang, Biwei and Zhang, Kun and Gong, Mingming and Glymour, Clark},
  booktitle={International conference on machine learning},
  pages={2901--2910},
  year={2019},
  organization={Pmlr}
}

@article{valdes2011effective,
  title={Effective connectivity: influence, causality and biophysical modeling},
  author={Valdes-Sosa, Pedro A and Roebroeck, Alard and Daunizeau, Jean and Friston, Karl},
  journal={Neuroimage},
  volume={58},
  number={2},
  pages={339--361},
  year={2011},
  publisher={Elsevier}
}

@article{barnett2014mvgc,
  title={The MVGC multivariate Granger causality toolbox: a new approach to Granger-causal inference},
  author={Barnett, Lionel and Seth, Anil K},
  journal={Journal of neuroscience methods},
  volume={223},
  pages={50--68},
  year={2014},
  publisher={Elsevier}
}

@article{yin2022deep,
  title={Deep recurrent modelling of Granger causality with latent confounding},
  author={Yin, Zexuan and Barucca, Paolo},
  journal={Expert Systems with Applications},
  volume={207},
  pages={118036},
  year={2022},
  publisher={Elsevier}
}

@article{nag2024transformer,
  title={Transformer-aided dynamic causal model for scalable estimation of effective connectivity},
  author={Nag, Sayan and Uludag, Kamil},
  journal={Imaging Neuroscience},
  volume={2},
  pages={1--22},
  year={2024},
  publisher={MIT Press 255 Main Street, 9th Floor, Cambridge, Massachusetts 02142, USA~…}
}

@article{zheng2018dags,
  title={Dags with no tears: Continuous optimization for structure learning},
  author={Zheng, Xun and Aragam, Bryon and Ravikumar, Pradeep K and Xing, Eric P},
  journal={Advances in neural information processing systems},
  volume={31},
  year={2018}
}

@inproceedings{yu2019dag,
  title={DAG-GNN: DAG structure learning with graph neural networks},
  author={Yu, Yue and Chen, Jie and Gao, Tian and Yu, Mo},
  booktitle={International conference on machine learning},
  pages={7154--7163},
  year={2019},
  organization={PMLR}
}

@book{peters2017elements,
  title={Elements of causal inference: foundations and learning algorithms},
  author={Peters, Jonas and Janzing, Dominik and Sch{\"o}lkopf, Bernhard},
  year={2017},
  publisher={The MIT press}
}

@article{schoffelen2009source,
  title={Source connectivity analysis with MEG and EEG},
  author={Schoffelen, Jan-Mathijs and Gross, Joachim},
  journal={Human brain mapping},
  volume={30},
  number={6},
  pages={1857--1865},
  year={2009},
  publisher={Wiley Online Library}
}

@article{tank2021neural,
  title={Neural granger causality},
  author={Tank, Alex and Covert, Ian and Foti, Nicholas and Shojaie, Ali and Fox, Emily B},
  journal={IEEE Transactions on Pattern Analysis and Machine Intelligence},
  volume={44},
  number={8},
  pages={4267--4279},
  year={2021},
  publisher={IEEE}
}

@article{shimizu2006linear,
  title={A linear non-Gaussian acyclic model for causal discovery.},
  author={Shimizu, Shohei and Hoyer, Patrik O and Hyv{\"a}rinen, Aapo and Kerminen, Antti and Jordan, Michael},
  journal={Journal of Machine Learning Research},
  volume={7},
  number={10},
  year={2006}
}

@inproceedings{gu2024mamba,
  title={Mamba: Linear-time sequence modeling with selective state spaces},
  author={Gu, Albert and Dao, Tri},
  booktitle={First conference on language modeling},
  year={2024}
}

@article{sanz2013virtual,
  title={The Virtual Brain: a simulator of primate brain network dynamics},
  author={Sanz Leon, Paula and Knock, Stuart A and Woodman, M Marmaduke and Domide, Lia and Mersmann, Jochen and McIntosh, Anthony R and Jirsa, Viktor},
  journal={Frontiers in neuroinformatics},
  volume={7},
  pages={10},
  year={2013},
  publisher={Frontiers Media SA}
}

@article{scholkopf2021toward,
  title={Toward causal representation learning},
  author={Sch{\"o}lkopf, Bernhard and Locatello, Francesco and Bauer, Stefan and Ke, Nan Rosemary and Kalchbrenner, Nal and Goyal, Anirudh and Bengio, Yoshua},
  journal={Proceedings of the IEEE},
  volume={109},
  number={5},
  pages={612--634},
  year={2021},
  publisher={IEEE}
}

@article{bae2025causalmamba,
  title={CausalMamba: Scalable Conditional State Space Models for Neural Causal Inference},
  author={Bae, Sangyoon and Cha, Jiook},
  journal={arXiv preprint arXiv:2510.17318},
  year={2025}
}

@article{desikan2006automated,
  title={An automated labeling system for subdividing the human cerebral cortex on MRI scans},
  author={Desikan, Rahul S and others},
  journal={NeuroImage},
  volume={31},
  number={3},
  pages={968--980},
  year={2006}
}

@article{nassi2007specialized,
  title={Specialized circuits from primary visual cortex to V2 and area MT},
  author={Nassi, Jonathan J and Callaway, Edward M},
  journal={Neuron},
  volume={55},
  number={5},
  pages={799--808},
  year={2007},
  publisher={Elsevier}
}

@article{hoshi2004differential,
  title={Differential roles of neuronal activity in the supplementary and presupplementary motor areas: from information retrieval to motor planning and execution},
  author={Hoshi, Eiji and Tanji, Jun},
  journal={Journal of neurophysiology},
  volume={92},
  number={6},
  pages={3482--3499},
  year={2004},
  publisher={American Physiological Society}
}

@article{baillet2001electromagnetic,
  title={Electromagnetic brain mapping},
  author={Baillet, Sylvain and Mosher, John C and Leahy, Richard M},
  journal={IEEE Signal Processing Magazine},
  year={2001}
}

@article{mosher1999forward,
  title={EEG and MEG: Forward solutions for inverse methods},
  author={Mosher, John C and Leahy, Richard M and Lewis, Paul S},
  journal={IEEE Transactions on Biomedical Engineering},
  volume={46},
  number={3},
  pages={245--259},
  year={1999}
}

@article{logothetis2008what,
  title={What we can do and what we cannot do with fMRI},
  author={Logothetis, Nikos K},
  journal={Nature},
  volume={453},
  number={7197},
  pages={869--878},
  year={2008}
}

@article{handwerker2004variation,
  title={Variation of {BOLD} hemodynamic responses across subjects and brain regions and their effects on statistical analyses},
  author={Handwerker, Daniel A. and Ollinger, John M. and D'Esposito, Mark},
  journal={NeuroImage},
  volume={21},
  number={4},
  pages={1639--1651},
  year={2004}
}

@article{deshpande2010hrfGC,
  title={Effect of hemodynamic variability on {Granger} causality analysis of f{MRI} data},
  author={Deshpande, Gourav and Sathian, Kiran and Hu, Xiaoping},
  journal={NeuroImage},
  year={2010}
}

@article{rangaprakash2023hrfconfound,
  title={The confound of hemodynamic response function variability in human resting-state f{MRI} connectivity},
  author={Rangaprakash, D. and others},
  journal={Frontiers in Neuroscience},
  year={2023}
}

@article{hamalainen1994interpreting,
  title={Interpreting magnetic fields of the brain: minimum norm estimates},
  author={H{\"a}m{\"a}l{\"a}inen, Matti S. and Ilmoniemi, Risto J.},
  journal={Medical \& Biological Engineering \& Computing},
  volume={32},
  number={1},
  pages={35--42},
  year={1994}
}

@article{pascual2002sloreta,
  title={Standardized low-resolution brain electromagnetic tomography (sLORETA): technical details},
  author={Pascual-Marqui, Roberto D.},
  journal={Methods and Findings in Experimental and Clinical Pharmacology},
  volume={24},
  pages={5--12},
  year={2002}
}

@book{lutkepohl2005new,
  title={New introduction to multiple time series analysis},
  author={L{\"u}tkepohl, Helmut},
  publisher={Springer},
  year={2005}
}

@article{wu2013blind,
  title={A blind deconvolution approach to recover effective connectivity brain networks from resting state fMRI data},
  author={Wu, Guo-Rong and Liao, Wei and Stramaglia, Sebastiano and Ding, Ju-Rong and Chen, Huafu and Marinazzo, Daniele},
  journal={Medical image analysis},
  volume={17},
  number={3},
  pages={365--374},
  year={2013},
  publisher={Elsevier}
}

@article{bai2018tcn,
  title={An empirical evaluation of generic convolutional and recurrent networks for sequence modeling},
  author={Bai, Shaojie and Kolter, J. Zico and Koltun, Vladlen},
  journal={arXiv preprint arXiv:1803.01271},
  year={2018}
}

@article{kipf2017semi,
  title={Semi-supervised classification with graph convolutional networks},
  author={Kipf, Thomas N. and Welling, Max},
  journal={ICLR},
  year={2017}
}

@article{sun2022deep,
  title={Deep neural networks constrained by neural mass models improve electrophysiological source imaging of spatiotemporal brain dynamics},
  author={Sun, Rui and Sohrabpour, Abbas and Worrell, Gregory A and He, Bin},
  journal={Proceedings of the National Academy of Sciences},
  volume={119},
  number={31},
  pages={e2201128119},
  year={2022},
  publisher={National Academy of Sciences}
}

@article{nadeau2018tcdf,
  title={Temporal causal discovery framework},
  author={Nad{\'e}au, Chlo{\'e} and Bengio, Yoshua},
  journal={NeurIPS},
  year={2018}
}

@article{huang2020causal,
  title={Causal discovery from heterogeneous/nonstationary data},
  author={Huang, Biwei and Zhang, Kun and Zhang, Jiji and Ramsey, Joseph and Sanchez-Romero, Ruben and Glymour, Clark and Sch{\"o}lkopf, Bernhard},
  journal={Journal of Machine Learning Research},
  volume={21},
  number={89},
  pages={1--53},
  year={2020}
}

@inproceedings{zhang2017causal,
  title={Causal discovery from nonstationary/heterogeneous data: Skeleton estimation and orientation determination},
  author={Zhang, Kun and Huang, Biwei and Zhang, Jiji and Glymour, Clark and Sch{\"o}lkopf, Bernhard},
  booktitle={IJCAI: Proceedings of the Conference},
  volume={2017},
  pages={1347},
  year={2017}
}

@article{pfister2019invariant,
  title={Invariant causal prediction for sequential data},
  author={Pfister, Niklas and B{\"u}hlmann, Peter and Peters, Jonas},
  journal={Journal of the American Statistical Association},
  volume={114},
  number={527},
  pages={1264--1276},
  year={2019},
  publisher={Taylor \& Francis}
}

@article{saggioro2020reconstructing,
  title={Reconstructing regime-dependent causal relationships from observational time series},
  author={Saggioro, Elena and de Wiljes, Jana and Kretschmer, Marlene and Runge, Jakob},
  journal={Chaos: An Interdisciplinary Journal of Nonlinear Science},
  volume={30},
  number={11},
  year={2020},
  publisher={AIP Publishing}
}

@article{gao2023causal,
  title={Causal discovery in semi-stationary time series},
  author={Gao, Shanyun and Addanki, Raghavendra and Yu, Tong and Rossi, Ryan and Kocaoglu, Murat},
  journal={Advances in Neural Information Processing Systems},
  volume={36},
  pages={46624--46657},
  year={2023}
}

@inproceedings{huang2015identification,
  title={Identification of Time-Dependent Causal Model: A Gaussian Process Treatment.},
  author={Huang, Biwei and Zhang, Kun and Sch{\"o}lkopf, Bernhard},
  booktitle={IJCAI},
  pages={3561--3568},
  year={2015}
}

@article{sadeghi2025causal,
  title={Causal discovery from nonstationary time series},
  author={Sadeghi, Agathe and Gopal, Achintya and Fesanghary, Mohammad},
  journal={International Journal of Data Science and Analytics},
  volume={19},
  number={1},
  pages={33--59},
  year={2025},
  publisher={Springer}
}

@article{gu2020hippo,
  title={Hippo: Recurrent memory with optimal polynomial projections},
  author={Gu, Albert and Dao, Tri and Ermon, Stefano and Rudra, Atri and R{\'e}, Christopher},
  journal={Advances in neural information processing systems},
  volume={33},
  pages={1474--1487},
  year={2020}
}

@article{yu2008gaussian,
  title={Gaussian-process factor analysis for low-dimensional single-trial analysis of neural population activity},
  author={Yu, Byron M and Cunningham, John P and Santhanam, Gopal and Ryu, Stephen and Shenoy, Krishna V and Sahani, Maneesh},
  journal={Advances in neural information processing systems},
  volume={21},
  year={2008}
}

@article{pandarinath2018inferring,
  title={Inferring single-trial neural population dynamics using sequential auto-encoders},
  author={Pandarinath, Chethan and O’Shea, Daniel J and Collins, Jasmine and Jozefowicz, Rafal and Stavisky, Sergey D and Kao, Jonathan C and Trautmann, Eric M and Kaufman, Matthew T and Ryu, Stephen I and Hochberg, Leigh R and others},
  journal={Nature methods},
  volume={15},
  number={10},
  pages={805--815},
  year={2018},
  publisher={Nature Publishing Group US New York}
}

@inproceedings{gu2022s4,
  title     = {Efficiently Modeling Long Sequences with Structured State Spaces},
  author    = {Gu, Albert and Goel, Karan and R{\'e}, Christopher},
  booktitle = {International Conference on Learning Representations},
  year      = {2022},
  url       = {https://openreview.net/forum?id=uYLFoz1vlAC},
  doi       = {10.48550/arXiv.2111.00396}
}

@article{gu2022parameterization,
  title={On the parameterization and initialization of diagonal state space models},
  author={Gu, Albert and Goel, Karan and Gupta, Ankit and R{\'e}, Christopher},
  journal={Advances in Neural Information Processing Systems},
  volume={35},
  pages={35971--35983},
  year={2022}
}

@inproceedings{smith2023s5,
  title     = {Simplified State Space Layers for Sequence Modeling},
  author    = {Smith, Jimmy T. H. and Warrington, Andrew and Linderman, Scott W.},
  booktitle = {International Conference on Learning Representations},
  year      = {2023},
  url       = {https://openreview.net/forum?id=Ai8Hw3AXqks},
  doi       = {10.48550/arXiv.2208.04933}
}

@inproceedings{goel2022s,
  title={It’s raw! audio generation with state-space models},
  author={Goel, Karan and Gu, Albert and Donahue, Chris and R{\'e}, Christopher},
  booktitle={International conference on machine learning},
  pages={7616--7633},
  year={2022},
  organization={PMLR}
}

@inproceedings{dao2024transformers,
  title     = {Transformers are {SSMs}: Generalized Models and Efficient Algorithms through Structured State Space Duality},
  author    = {Dao, Tri and Gu, Albert},
  booktitle = {International Conference on Machine Learning},
  series    = {Proceedings of Machine Learning Research},
  volume    = {235},
  year      = {2024},
  publisher = {PMLR},
  url       = {https://proceedings.mlr.press/v235/dao24a.html}
}
\endgroup

\newpage
\appendix
\onecolumn

\section{Extended Literature Review} \label{app:extended-lit}

\textbf{Causal discovery from brain data.}
Inferring directed interactions from neuroimaging data has long relied on frameworks such as Dynamic Causal Modeling (DCM) and Granger Causality (GC)~\cite{friston2003dynamic, granger1969investigating}.
DCM explicitly incorporates biophysical mechanisms by modeling the mapping from latent neural activity to observed hemodynamic responses, but in practice its performance is highly sensitive to model assumptions (e.g., fixed or misspecified HRFs) and does not scale favorably to whole-brain settings~\cite{daunizeau2011dynamic, nag2024transformer}.
GC-based approaches, by contrast, operate directly on observed time series and are therefore vulnerable to hemodynamic distortions, capturing statistical dependencies that may diverge from underlying neural causation~\cite{yin2022deep, barnett2014mvgc}.
More general causal structure learning methods, including NOTEARS~\cite{zheng2018dags} and DAG-GNN~\cite{yu2019dag}, offer scalable optimization-based formulations but remain poorly matched to neuroimaging applications, as they do not explicitly address indirect measurement processes or region-specific neurovascular variability. The closest sequence-modeling predecessor is CausalMamba~\cite{bae2025causalmamba}, which applies Mamba-style sequence modeling to causal discovery.
However, CausalMamba is not a whole-brain indirect-neuroimaging method: it does not model modality-specific forward physics such as HRF convolution or EEG leadfield mixing, and its experimental setting is not matched to the 68-ROI whole-brain TVB/HCP transfer problem considered here.
In contrast, \textsc{\methodname} explicitly models indirect observation processes per modality via a physics-aware inversion stage, then applies regime-conditioned causal discovery to the recovered latent trajectories.
We therefore treat CausalMamba as important architectural and conceptual context rather than as a directly comparable baseline for the whole-brain indirect-observation benchmark.

\textbf{Causal discovery with nonstationary dynamics.}
While many time-series causal discovery methods assume stationary mechanisms, real systems often exhibit distribution shifts over time or across contexts. A line of work shows that such nonstationarity can be exploited as informative variation—akin to a sequence of soft interventions—to improve identifiability of causal direction. Early time-dependent functional causal models treat time as an explicit variable and use Gaussian-process priors to estimate smoothly varying causal effects \cite{huang2015identification}. The CD-NOD framework extends constraint-based discovery to heterogeneous/nonstationary data by detecting which conditional modules change and orienting edges via independent changes across modules \cite{zhang2017causal, huang2020causal}. Complementarily, Huang et al.\ model nonstationary time series with identifiable nonlinear state-space causal models with time-varying causal strengths and noise, linking causal discovery to improved forecasting \cite{huang2019causal}. Related invariance-based approaches identify causal predictors in sequential settings by searching for conditional distributions that remain stable across unknown time segments/environments \cite{pfister2019invariant}. More recent methods address regime-dependent causal graphs by jointly inferring latent regimes and causal links \cite{saggioro2020reconstructing, gao2023causal}, and refine CD-NOD-style ideas to explicitly handle lagged nonstationary time series \cite{sadeghi2025causal}. In this work, we adopt the perspective of \citet{huang2019causal}—viewing nonstationarity as providing the mechanism shifts needed for identifiability in dynamical systems—because it aligns naturally with flexible state-space/sequence-model parameterizations; however, unlike our neuroimaging setting, most of this literature assumes \emph{direct} access to the causal variables rather than indirect, physics-corrupted measurements.

\textbf{Limitations of inverse modeling in EEG and fMRI.}
EEG source localization methods such as MNE \cite{hamalainen1994interpreting} and sLORETA \cite{pascual2002sloreta} provide distributed inverse solutions but are known to suffer from depth bias, spatial smoothing, and source leakage, which can introduce spurious dependencies between reconstructed sources.
Neural approaches such as DeepSIF \cite{sun2022deep} improve localization accuracy under realistic simulations but rely on training distributions and forward models that may not generalize across subjects, sensor configurations, or noise regimes.
In fMRI, blind deconvolution \cite{wu2013blind} methods attempt to jointly estimate latent neural activity and hemodynamic responses from BOLD signals alone, but this joint estimation problem is weakly identifiable and often implemented via non-differentiable, multi-stage pipelines, limiting their integration with end-to-end learning.
DCM-style HRF modeling \cite{friston2003dynamic} provides a biophysically interpretable alternative but requires strong structural and parametric assumptions and scales poorly to whole-brain settings.
As a result, residual inversion mismatch remains unavoidable in practice and can propagate nontrivially to downstream causal discovery.


\textbf{State-space sequence models.}
State-space models (SSMs) describe dynamical systems through latent state transitions coupled to an observation model, and form the basis of classical filtering/smoothing and system identification.
In neuroscience, SSMs are central to effective-connectivity methods such as Dynamic Causal Modeling (DCM), which uses biophysically informed state and observation equations to connect latent neural dynamics to measured fMRI/EEG/MEG signals~\cite{friston2003dynamic,daunizeau2011dynamic,valdes2011effective}; variants such as regression-DCM aim to improve scalability to larger networks~\cite{frassle2021regression}.
SSM formulations also appear in data-driven latent dynamical models used to extract low-dimensional neural trajectories from high-dimensional recordings (e.g., GPFA and LFADS)~\cite{yu2008gaussian,pandarinath2018inferring}.
Beyond neuroscience, SSMs have been used directly for causal discovery in nonstationary time series, where time-varying parameters/mechanisms provide the variation needed for identifiability~\cite{huang2019causal}.
In machine learning, a parallel line of ``deep'' structured SSM sequence models repurposes linear SSMs as scalable alternatives to attention: HiPPO introduces principled continuous-time memory operators~\cite{gu2020hippo}, S4 provides an efficient parameterization for long-range sequence modeling~\cite{gu2022s4}, and simplified variants such as S4D and S5 improve practicality while retaining long-context performance~\cite{gu2022parameterization, smith2023s5}.
These layers can often be computed either via parallel scans or as recurrent updates, enabling long-sequence applications across modalities (e.g., SaShiMi for raw audio generation)~\cite{goel2022s}.
Mamba further introduces input-dependent selectivity and a hardware-aware ``selective scan'' to achieve strong performance on information-dense sequences with linear-time scaling~\cite{gu2024mamba}, and subsequent work connects SSMs and masked attention through structured state-space duality (Mamba-2/SSD)~\cite{dao2024transformers}.
We leverage these advances by using a scalable SSM backbone for long neural recordings while preserving a biophysically interpretable separation between latent dynamics and measurement physics.

\section{\texorpdfstring{Proofs for \pref{sec:theory}}{Proofs for Section 4 (Identifiability)}}
\label{app:theory_proofs}

We provide proofs for \pref{thm:indirect_identifiability}, \pref{prop:error_propagation}, and \pref{corr:consistency}.
Throughout, let $\widehat{\zb}_{1:T}=g(\xb_{1:T})$ denote the (regularized) inversion output from \pref{assum:invertibility},
and let $\widehat{S}=h(\widehat{\zb}_{1:T})$ denote the continuous score estimate in \pref{prop:error_propagation}.
Let $S^\star$ denote the score representation of the true delayed graph $\Gcal$.
Let $d_T$ be the trajectory distance
$
d_T(\zb,\zb') := \big(\frac{1}{T}\sum_{t=1}^T \|\zb_t-\zb'_t\|_2^2\big)^{1/2}.
$
When \pref{assum:invertibility} holds ``up to known equivalences'' (e.g., per-node scalings), interpret $d_T$ after applying the
corresponding canonical alignment.

\subsection{A basic approximation lemma for path distributions}
\label{app:bl_lemma}

We treat the full latent trajectory as a single vector
$\zb_{1:T}:=(\zb_1,\dots,\zb_T)\in\RR^{NT}$, and measure trajectory error by the per-time RMS distance
\[
d_T(\zb,\zb') \;:=\; \left(\frac{1}{T}\sum_{t=1}^T \|\zb_t-\zb'_t\|_2^2\right)^{1/2}.
\]
This is the natural scale for our inversion assumption \eqref{eq:latent_recovery}. The lemma below says that if the inversion error
vanishes in this metric (in probability), then \emph{any bounded, Lipschitz summary statistic of the entire path} has (asymptotically)
the same expectation under the reconstructed path $\widehat{\zb}_{1:T}$ and the true path $\zb_{1:T}$. Equivalently, the two path
distributions become close in the bounded-Lipschitz (weak) sense.

Let $\mathcal{BL}(1,1)$ denote the class of functions $\varphi:\RR^{NT}\to\RR$ such that:
(i) $|\varphi(\zb)|\le 1$ for all $\zb$, and
(ii) $\varphi$ is $1$-Lipschitz w.r.t.\ $d_T$, i.e.,
\[
|\varphi(\zb)-\varphi(\zb')|\le d_T(\zb,\zb') \qquad \forall\,\zb,\zb'\in\RR^{NT}.
\]
(Examples include many ``smooth'' path functionals, e.g., averages of bounded Lipschitz per-time features, or bounded versions of
correlation/energy statistics; the Lipschitz condition encodes robustness to small reconstruction errors.)

\begin{lemma}[Vanishing reconstruction error implies vanishing test-function gap]
\label{lem:bl_gap}
If $d_T(\widehat{\zb},\zb)\xrightarrow[T\to\infty]{P}0$, then for every $\varphi\in\mathcal{BL}(1,1)$,
\[
\Big|\EE\big[\varphi(\widehat{\zb}_{1:T})\big]-\EE\big[\varphi(\zb_{1:T})\big]\Big|\xrightarrow[T\to\infty]{}0.
\]
In particular, defining the bounded-Lipschitz distance between path laws
\[
d_{\mathrm{BL}}\!\big(P(\widehat{\zb}_{1:T}),P(\zb_{1:T})\big)
\;:=\;
\sup_{\varphi\in\mathcal{BL}(1,1)}
\Big|\EE\big[\varphi(\widehat{\zb}_{1:T})\big]-\EE\big[\varphi(\zb_{1:T})\big]\Big|,
\]
we have $d_{\mathrm{BL}}(P(\widehat{\zb}_{1:T}),P(\zb_{1:T}))\to 0$.
\end{lemma}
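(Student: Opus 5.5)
The plan is a coupling argument. The reconstructed path $\widehat{\zb}_{1:T}=g(\xb_{1:T})$ and the true path $\zb_{1:T}$ live on the same probability space, since both are functions of the same realization of the latent process and measurement noise. I will therefore compare $\varphi(\widehat{\zb}_{1:T})$ and $\varphi(\zb_{1:T})$ pathwise, and only afterwards take expectations. No property of the laws beyond this joint realization is needed.

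Fix $\varphi\in\mathcal{BL}(1,1)$. First, by Jensen (or the triangle inequality for integrals),
\[
\Big|\EE[\varphi(\widehat{\zb}_{1:T})]-\EE[\varphi(\zb_{1:T})]\Big|\le \EE\big|\varphi(\widehat{\zb}_{1:T})-\varphi(\zb_{1:T})\big|.
\]
Second, I use both defining properties of $\mathcal{BL}(1,1)$ pathwise. The Lipschitz condition gives $|\varphi(\widehat{\zb})-\varphi(\zb)|\le d_T(\widehat{\zb},\zb)$. The bound $|\varphi|\le 1$ gives $|\varphi(\widehat{\zb})-\varphi(\zb)|\le 2$. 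Together,
\[
\big|\varphi(\widehat{\zb}_{1:T})-\varphi(\zb_{1:T})\big|\le Y_T:=\min\{2,\,d_T(\widehat{\zb}_{1:T},\zb_{1:T})\}.
\]
The key point is that $Y_T$ does not depend on $\varphi$. Hence taking the supremum over $\varphi$ gives
\[
d_{\mathrm{BL}}\big(P(\widehat{\zb}_{1:T}),P(\zb_{1:T})\big)\le \EE[Y_T],
\]
and both claims of Lemma~\ref{lem:bl_gap} follow once $\EE[Y_T]\to 0$.

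Third, I show $\EE[Y_T]\to 0$. By hypothesis (equivalently \pref{assum:invertibility}, since $d_T^2$ is exactly the quantity in \eqref{eq:latent_recovery}), $d_T\to 0$ in probability. Because $Y_T\le d_T$, also $Y_T\to 0$ in probability, and $0\le Y_T\le 2$. A bounded sequence converging in probability converges in $L^1$; I will write this out directly rather than cite bounded convergence, which is stated for almost-sure limits. For any $\varepsilon\in(0,2)$,
\[
\EE[Y_T]\le \varepsilon+2\,\PP(d_T>\varepsilon).
\]
Letting $T\to\infty$ and then $\varepsilon\downarrow 0$ gives the limit.

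I expect no real analytic obstacle; the main subtlety is bookkeeping.

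\begin{itemize}
\item The ambient space $\RR^{NT}$ and the function class $\mathcal{BL}(1,1)$ change with $T$. The argument handles this because the bound $\EE[Y_T]$ is a scalar sequence that holds uniformly over the class at each $T$.
\item The "known equivalences" in \pref{assum:invertibility} must be respected. Following the convention stated at the start of this appendix, $d_T$ is computed after canonical alignment, and I will note that $\varphi$ is applied to the aligned reconstruction.
\item The conclusion concerns the joint realization, i.e.\ the reconstruction and truth coupled on one space. No independence between $\widehat{\zb}_{1:T}$ and $\zb_{1:T}$ is required or used.
\end{itemize}
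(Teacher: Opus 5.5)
Your proof is correct and follows the paper's argument essentially line for line: the triangle inequality, the pathwise bound $|\varphi(\widehat{\zb}_{1:T})-\varphi(\zb_{1:T})|\le\min\{2,d_T(\widehat{\zb},\zb)\}$, and the split $\EE[\min\{2,d_T\}]\le\varepsilon+2\,\PP(d_T>\varepsilon)$. Your explicit remark that this bound does not depend on $\varphi$ is what actually justifies the paper's brief ``take the supremum'' step for $d_{\mathrm{BL}}$, since convergence for each $\varphi$ separately would not be enough on its own.
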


\begin{proof}
Fix any $\varphi\in\mathcal{BL}(1,1)$. By the triangle inequality,
\[
\Big|\EE[\varphi(\widehat{\zb}_{1:T})]-\EE[\varphi(\zb_{1:T})]\Big|
\;\le\;
\EE\Big[\,\big|\varphi(\widehat{\zb}_{1:T})-\varphi(\zb_{1:T})\big|\,\Big].
\]
Now use (ii) (Lipschitzness) and (i) (boundedness): Lipschitzness gives
$
|\varphi(\widehat{\zb}_{1:T})-\varphi(\zb_{1:T})|
\le d_T(\widehat{\zb},\zb),
$
while boundedness implies the trivial bound
$
|\varphi(\widehat{\zb}_{1:T})-\varphi(\zb_{1:T})|\le 2.
$
Combining,
\[
\big|\varphi(\widehat{\zb}_{1:T})-\varphi(\zb_{1:T})\big|
\;\le\;
\min\{2,\ d_T(\widehat{\zb},\zb)\}.
\]
Therefore,
\[
\Big|\EE[\varphi(\widehat{\zb}_{1:T})]-\EE[\varphi(\zb_{1:T})]\Big|
\le
\EE\big[\min\{2,\ d_T(\widehat{\zb},\zb)\}\big].
\]
To show the RHS vanishes, fix $\epsilon>0$ and split on the event
$A_T:=\{d_T(\widehat{\zb},\zb)\le \epsilon\}$:
\begin{align*}
\EE\big[\min\{2,\ d_T(\widehat{\zb},\zb)\}\big]
&=
\EE\big[\min\{2,\ d_T(\widehat{\zb},\zb)\}\,\mathbf{1}_{A_T}\big]
+\EE\big[\min\{2,\ d_T(\widehat{\zb},\zb)\}\,\mathbf{1}_{A_T^c}\big]\\
&\le
\epsilon\cdot \PP(A_T) + 2\cdot \PP(A_T^c)
\;\le\;
\epsilon + 2\,\PP\!\big(d_T(\widehat{\zb},\zb)>\epsilon\big).
\end{align*}
Since $d_T(\widehat{\zb},\zb)\to 0$ in probability, the probability term tends to $0$ as $T\to\infty$, so the RHS tends to $\epsilon$.
Because $\epsilon>0$ was arbitrary, the RHS (and hence the test-function gap) must converge to $0$. The final claim about $d_{\mathrm{BL}}$ follows immediately by taking the supremum over $\varphi\in\mathcal{BL}(1,1)$.
\end{proof}

\textbf{Remark (how we use the lemma).}
Lemma~\ref{lem:bl_gap} justifies treating $\widehat{\zb}_{1:T}$ as an ``asymptotically equivalent'' surrogate for $\zb_{1:T}$ for any
downstream procedure whose relevant statistics are continuous/robust in the $d_T$ metric. In our setting, this enables a clean
reduction: consistent inversion implies that any latent-space identifiability argument based on the distribution of $\zb_{1:T}$
carries over to the indirect observation setting via $\widehat{\zb}_{1:T}=g(\xb_{1:T})$.

\subsection{\texorpdfstring{Proof of \pref{thm:indirect_identifiability}}{Proof of Theorem (indirect identifiability)}}

\begin{proof}[Proof of \pref{thm:indirect_identifiability}]
Consider two parameter instances $\theta$ and $\theta'$ in the model class of \pref{thm:indirect_identifiability}.
Assume they induce the same observation law for every $T$:
\[
P_\theta(\xb_{1:T}) = P_{\theta'}(\xb_{1:T}) \qquad \forall\,T.
\]
We show $\Gcal_\theta$ and $\Gcal_{\theta'}$ coincide up to the known equivalences in \pref{assum:invertibility}.

Because $\widehat{\zb}_{1:T}=g(\xb_{1:T})$ is a measurable function of $\xb_{1:T}$, equality in distribution of $\xb_{1:T}$ implies
equality in distribution of $\widehat{\zb}_{1:T}$:
\[
P_\theta(\widehat{\zb}_{1:T}) = P_{\theta'}(\widehat{\zb}_{1:T}) \qquad \forall\,T.
\]
Let $\zb_{1:T}$ (resp.\ $\zb'_{1:T}$) denote the latent process under $\theta$ (resp.\ $\theta'$).
By \pref{assum:invertibility}, $d_T(\widehat{\zb},\zb)\to 0$ in probability under $\theta$, and $d_T(\widehat{\zb},\zb')\to 0$
in probability under $\theta'$.

Fix any test function $\varphi\in\mathcal{BL}(1,1)$. Then
\begin{align*}
\EE_\theta[\varphi(\zb_{1:T})]-\EE_{\theta'}[\varphi(\zb'_{1:T})]
&=
\Big(\EE_\theta[\varphi(\zb_{1:T})]-\EE_\theta[\varphi(\widehat{\zb}_{1:T})]\Big)
+\Big(\EE_\theta[\varphi(\widehat{\zb}_{1:T})]-\EE_{\theta'}[\varphi(\widehat{\zb}_{1:T})]\Big)\\
&\qquad+\Big(\EE_{\theta'}[\varphi(\widehat{\zb}_{1:T})]-\EE_{\theta'}[\varphi(\zb'_{1:T})]\Big).
\end{align*}
The middle term is exactly $0$ since $P_\theta(\widehat{\zb}_{1:T})=P_{\theta'}(\widehat{\zb}_{1:T})$.
The first and third terms vanish as $T\to\infty$ by Lemma~\ref{lem:bl_gap} (applied under $\theta$ and $\theta'$, respectively).
Therefore, for every $\varphi\in\mathcal{BL}(1,1)$,
\[
\EE_\theta[\varphi(\zb_{1:T})]-\EE_{\theta'}[\varphi(\zb'_{1:T})] \xrightarrow[T\to\infty]{} 0.
\]
Equivalently, the latent path laws $P_\theta(\zb_{1:T})$ and $P_{\theta'}(\zb'_{1:T})$ become indistinguishable (in the bounded-Lipschitz
sense induced by $d_T$) as $T\to\infty$, hence share the same limiting latent distribution (up to the equivalences in \pref{assum:invertibility}).

Finally, by \pref{assum:latent_identifiable_class}, within the identifiable nonstationary SEM/SSM class of \citet{huang2019causal}
(adapted to delayed parents), the delayed graph $\Gcal=(V,E,\taub)$ is uniquely determined by the limiting latent distribution.
Thus $\Gcal_\theta \equiv \Gcal_{\theta'}$, proving identifiability from $P(\xb_{1:T})$ as $T\to\infty$.
\end{proof}

\subsection{\texorpdfstring{Proof of \pref{prop:error_propagation}}{Proof of Proposition (error propagation)}}
\begin{proof}[Proof of \pref{prop:error_propagation}]
Let $\widehat{\zb}_{1:T}=g(\xb_{1:T})$ and $\widehat{S}=h(\widehat{\zb}_{1:T})$.
Assume $\Lcal$ is a (pseudo-)metric (or any loss satisfying the triangle inequality) on score objects.
Then
\[
\Lcal(\widehat{S},S^\star)
=
\Lcal\big(h(\widehat{\zb}_{1:T}),S^\star\big)
\le
\Lcal\big(h(\widehat{\zb}_{1:T}),h(\zb_{1:T})\big)
+\Lcal\big(h(\zb_{1:T}),S^\star\big).
\]
By the assumed Lipschitz property of $h$,
\[
\Lcal\big(h(\widehat{\zb}_{1:T}),h(\zb_{1:T})\big)\le L\,d_T(\widehat{\zb},\zb).
\]
Taking expectations yields
\[
\EE\!\left[\Lcal(\widehat{S},S^\star)\right]
\le
\EE\!\left[\Lcal\!\big(h(\zb_{1:T}),S^\star\big)\right] + L\,\EE[d_T(\widehat{\zb},\zb)].
\]
Finally, by Cauchy--Schwarz,
\[
\EE[d_T(\widehat{\zb},\zb)]
=
\EE\!\left[\left(\frac{1}{T}\sum_{t=1}^T \|\widehat{\zb}_t-\zb_t\|_2^2\right)^{1/2}\right]
\le
\left(\EE\!\left[\frac{1}{T}\sum_{t=1}^T \|\widehat{\zb}_t-\zb_t\|_2^2\right]\right)^{1/2},
\]
which gives \eqref{eq:error_propagation_expectation}.
\end{proof}

\subsection{\texorpdfstring{Proof of \pref{corr:consistency}}{Proof of Corollary (consistency)}}

\begin{proof}[Proof of \pref{corr:consistency}]
Assume \pref{assum:invertibility}, so
$
d_T(\widehat{\zb},\zb)
=\big(\frac{1}{T}\sum_{t=1}^T\|\widehat{\zb}_t-\zb_t\|_2^2\big)^{1/2}
\xrightarrow[T\to\infty]{P}0,
$
and assume $h$ is consistent on the true latent process:
$
\Lcal(h(\zb_{1:T}),S^\star)\xrightarrow[T\to\infty]{P}0.
$
Using the same triangle-inequality step as in the proof of Proposition~\ref{prop:error_propagation} and then the Lipschitz property,
\[
\Lcal(\widehat{S},S^\star)
=
\Lcal\big(h(\widehat{\zb}_{1:T}),S^\star\big)
\le
\Lcal\big(h(\zb_{1:T}),S^\star\big) + L\,d_T(\widehat{\zb},\zb).
\]
Both terms on the RHS converge to $0$ in probability, hence their sum converges to $0$ in probability:
$
\Lcal(\widehat{S},S^\star)\xrightarrow[T\to\infty]{P}0.
$
\end{proof}

\subsection{\texorpdfstring{Stability of top-$k$ sparsification}{Stability of top-k sparsification}}
\label{app:topk_stability}

In practice we may form a sparse edge set by retaining the top-$k$ largest-magnitude entries of a score matrix. This post-processing is not Lipschitz in general, but it is stable under a mild separation (margin) condition.

\textbf{Top-$k$ operator.}
For $v\in\mathbb{R}^m$, let $|v|_{(1)}\ge\cdots\ge |v|_{(m)}$ be the order statistics of $\{|v_1|,\dots,|v_m|\}$.
Define $\TopK_k(v)$ as the set of indices of the $k$ largest entries of $|v|$ (with deterministic tie-breaking).
For a matrix $S\in\mathbb{R}^{p\times p}$, we apply $\TopK_k$ to $\mathrm{vec}(S)$.

\begin{lemma}[\texorpdfstring{Top-$k$ support stability under a margin}{Top-k support stability under a margin}]
\label{lem:topk_margin}
Let $v\in\mathbb{R}^m$ and define the margin $\Delta := |v|_{(k)}-|v|_{(k+1)}$ (with the convention $|v|_{(m+1)}:=0$).
If $\|\widehat{v}-v\|_\infty < \Delta/2$, then $\TopK_k(\widehat{v})=\TopK_k(v)$.
\end{lemma}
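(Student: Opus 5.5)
Write $\delta := \|\widehat{v}-v\|_\infty$ and $I := \TopK_k(v)$, with the margin $\Delta = |v|_{(k)}-|v|_{(k+1)}$ from \pref{lem:topk_margin}. The proof is a direct order-statistics comparison. First, the reverse triangle inequality gives $\big||\widehat{v}_i|-|v_i|\big|\le |\widehat{v}_i-v_i|\le\delta$ for every index $i$, so it suffices to work with the magnitude vectors. For $i\in I$ we have $|v_i|\ge |v|_{(k)}$, hence
\[
|\widehat{v}_i|\ge |v|_{(k)}-\delta .
\]
For $j\notin I$ we have $|v_j|\le |v|_{(k+1)}$, hence
\[
|\widehat{v}_j|\le |v|_{(k+1)}+\delta .
\]
Since $\delta<\Delta/2$, we get $|v|_{(k)}-\delta > |v|_{(k+1)}+\delta$. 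Therefore every index in $I$ has strictly larger perturbed magnitude than every index outside $I$:
\[
\min_{i\in I}|\widehat{v}_i| > \max_{j\notin I}|\widehat{v}_j| .
\]

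Second, I would conclude that $I$ is exactly the set of the $k$ largest entries of $|\widehat{v}|$. It has cardinality $k$, and all of its elements strictly dominate the complement. Any set of $k$ largest entries must therefore equal $I$, and tie-breaking is irrelevant because no tie crosses the boundary between $I$ and $I^c$. Hence $\TopK_k(\widehat{v})=I=\TopK_k(v)$.

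The one step needing care is that $I=\TopK_k(v)$ is well defined with $\min_{i\in I}|v_i|=|v|_{(k)}$ and $\max_{j\notin I}|v_j|=|v|_{(k+1)}$. Deterministic tie-breaking could matter in principle, but the hypothesis $\delta<\Delta/2$ with $\delta\ge 0$ forces $\Delta>0$, so $|v|_{(k)}>|v|_{(k+1)}$ and the top-$k$ set of $v$ is unambiguous. The edge case $k=m$ is trivial because both sides equal $\{1,\dots,m\}$, and the convention $|v|_{(m+1)}:=0$ is then unused. For the matrix case, apply the lemma to $\mathrm{vec}(S)$.
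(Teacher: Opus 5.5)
Your proof is correct and follows the same route as the paper's: both bound $|\widehat{v}_i|$ from below on $\TopK_k(v)$ and $|\widehat{v}_j|$ from above on its complement, using $\Delta/2$ of slack on each side. Your version is slightly tighter: by keeping the inequality strict you get strict separation, so tie-breaking never matters, whereas the paper derives only $|\widehat{v}_i|-|\widehat{v}_j|\ge 0$ and then appeals to the tie-break rule.
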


\begin{proof}
Let $S=\TopK_k(v)$ and $S^c$ its complement. For any $i\in S$ and $j\in S^c$,
we have $|v_i|-|v_j|\ge \Delta$. If $\|\widehat{v}-v\|_\infty<\Delta/2$, then
$|\widehat{v}_i|\ge |v_i|-\Delta/2$ and $|\widehat{v}_j|\le |v_j|+\Delta/2$, hence
$|\widehat{v}_i|-|\widehat{v}_j|\ge \Delta-\Delta=0$, so the ordering separating $S$ and $S^c$ is preserved (up to the fixed tie-break rule).
\end{proof}

\textbf{Consequence.}
If $\Lcal(\widehat{S},S^\star)\to 0$ under a loss that controls $\|\mathrm{vec}(\widehat{S})-\mathrm{vec}(S^\star)\|_\infty$
(e.g., taking $\Lcal$ as an $\ell_\infty$ loss), and if the top-$k$ margin of $\mathrm{vec}(S^\star)$ is bounded away from $0$,
then the top-$k$ support extracted from $\widehat{S}$ is consistent.

\textbf{Empirical margin distribution and $\rho$-stability on HCP.}
\pref{lem:topk_margin} provides a \emph{sufficient} condition for support stability; in practice the score landscape on real fMRI is heavy-tailed and individual margins can be small.
Across the full $1{,}079$-subject HCP motor-task cohort, the empirical top-$k$ margin $\Delta = |\mathrm{vec}(\widehat{S})|_{(k)} - |\mathrm{vec}(\widehat{S})|_{(k+1)}$ has mean $0.006$ and median $0.000$, indicating that the strict pointwise margin condition does not always hold per subject.
The operative empirical evidence for stability is therefore the $\rho$-sweep: as the sparsity ratio is swept over $\rho\in[0.10,0.30]$, the per-subject F1 stays essentially constant at $0.022$ (saturating at $\rho{=}0.15$).
This is consistent with the typical situation where Lemma~\ref{lem:topk_margin} fails for a small subset of low-margin entries but the resulting support changes affect only weakly identified edges, leaving the strong-edge set---and hence F1---unchanged.
We therefore treat the $\rho$-sweep as the operationally meaningful stability guarantee, with Lemma~\ref{lem:topk_margin} providing the worst-case theoretical complement.

\newpage
\section{Algorithm Details}\label{appendix:algorithm}
\begin{algorithm}[htbp]
\begin{algorithmic}[1]
\REQUIRE Observations $x_{1:T}$ (EEG or fMRI), modality forward operator $\mathcal{H}_\psi$, max delay $D$, sparsity ratio $\rho$
\REQUIRE Train data (simulation): ground-truth adjacency/delays $(A^\star,\tau^\star)$ if available
\ENSURE Reconstructed latent trajectories $\hat z_{1:T}$ and sparse directed graph $\hat G=(\hat E,\hat\tau)$

\STATE \textbf{Stage 1: physics-aware inversion (modality-specific)}

\IF{modality = EEG}
    \STATE \textit{/* $x_t \approx L_\psi z_t + \epsilon_t$ */}
    \STATE Canonicalize montage: $x_{1:T} \leftarrow \mathrm{MapToCanonical}(x_{1:T})$
    \STATE $\hat z_{1:T} \leftarrow g^{\mathrm{EEG}}_\theta(x_{1:T})$
    \STATE \textit{/* DeepSIF-style source imaging */}
    \STATE \textbf{(physics consistency)} enforce $x_{1:T}\approx L_\psi \hat z_{1:T}$ via forward reconstruction loss
\ELSE
    \STATE \textit{/* $x_t \approx \sum_{\ell=0}^{L} h_{\psi,\ell} z_{t-\ell} + \epsilon_t$ */}
    \STATE $(\hat z_{1:T}, \hat h) \leftarrow g^{\mathrm{fMRI}}_\theta(x_{1:T})$
    \STATE \textit{/* ROI-specific HRF deconvolution */}
    \STATE \textbf{(physics consistency)} enforce $x_{1:T}\approx (\hat z * \hat h)_{1:T}$ via BOLD reconstruction loss
\ENDIF

\STATE \textbf{Stage 2: Latent-space causal discovery}

\FOR{$i=1$ to $R$}
    \STATE $h_{i,1:T} \leftarrow \mathrm{Mamba}(\hat z_{i,1:T})$
\ENDFOR

\STATE Estimate delay-aware interactions via multi-lag kernels
\STATE $\hat z_{t+1}\approx \sum_{\ell=1}^{D} A_\ell \hat z_{t-\ell} + \xi_t$
\STATE Multi-scale grouping (short/mid/long): $\{A^{(g)}_\ell\}_{g\in\mathcal{G},\,\ell\in\mathcal{L}_g}$
\STATE Aggregate causal strength:
\STATE $\hat C_{ij} \leftarrow \sum_{g\in\mathcal{G}} w_g \sum_{\ell\in\mathcal{L}_g} |A^{(g)}_{\ell,ij}|$
\STATE Compute dense graph scores (optional):
\STATE $G_{\mathrm{dense}}\leftarrow \mathrm{PairwiseMLP}(\{h_i\}_{i=1}^R)$
\STATE Combine scores:
\STATE $G_{\mathrm{comb}}\leftarrow (1-\alpha)G_{\mathrm{dense}}+\alpha\cdot \mathrm{Scale}(\hat C)$

\STATE \textbf{Top-$k$ sparsification}
\STATE $k \leftarrow \lfloor \rho R^2 \rfloor$
\STATE $M\leftarrow \mathrm{TopKMask}(|G_{\mathrm{comb}}|,k)$
\STATE $\hat G \leftarrow G_{\mathrm{comb}}\odot M$
\STATE \textit{/* optional */} $\hat \tau_{ij} \leftarrow \arg\max_{\ell\in\{1,\dots,D\}} |A_{\ell,ij}|$

\STATE \textbf{End-to-end training (simulation)}
\STATE Minimize $\mathcal{L}_{\mathrm{total}}
= \mathcal{L}_{\mathrm{graph}}
+ \lambda_{\mathrm{asym}}\mathcal{L}_{\mathrm{asym}}
+ \lambda_{\mathrm{stab}}\mathcal{L}_{\mathrm{stab}}$
\STATE Update $\theta$ (and causal parameters) by backpropagation so that causal loss shapes inversion

\STATE \textbf{return} $(\hat z_{1:T}, \hat G, \hat \tau)$
\end{algorithmic}
\caption{\texorpdfstring{\textsc{\methodname}: physics-aware Inversion + Latent-Space Causal Discovery (in detail)}{INCAMA: physics-aware inversion and latent-space causal discovery (detail)}}
\label{alg:overview}
\label{alg:whole_alg_in_detail}
\end{algorithm}

\section{Implementation Details}\label{appendix:implementation_details}

We propose \textbf{INCAMA}, a unified framework designed to solve the physics-aware inverse problem defined in Section \ref{sec:problem_setting}. As illustrated in Figure \ref{fig:overall_framework}, our architecture consists of two coupled modules: (1) a physics-aware Inversion Layer that disentangles latent neural drivers from distorted observations, and (2) a Mamba-based Causal Discovery Model that infers directed connectivity via non-stationary state-space modeling.

\subsection{Stage 1: fMRI HRF Deconvolution}
\label{subsec:stage1_hrf}

Stage 1 addresses the inverse problem of estimating latent neural activity from observed BOLD signals.
Since BOLD signals result from the convolution of neural activity with the hemodynamic response function (HRF), we design a learnable deconvolution framework that jointly estimates ROI-specific HRF parameters and underlying neural activity.

\textbf{ROI-Specific HRF Estimation.}
Based on neuroimaging evidence that hemodynamic response characteristics vary across brain regions, our model estimates individualized HRF parameters for each ROI.
Given a BOLD signal $\mathbf{x} \in \mathbb{R}^{L \times 1}$, a conditional Mamba encoder extracts temporal features, which are then mapped to six HRF parameters via a linear layer:
\begin{itemize}
    \item \textbf{peak\_delay} (3.0--10.0s): delay to main response peak
    \item \textbf{undershoot\_delay} (10.0--20.0s): delay to undershoot peak
    \item \textbf{peak\_disp}, \textbf{undershoot\_disp} (0.5--2.0): dispersion of each component
    \item \textbf{undershoot\_scale} (0.0--1.0): relative magnitude of undershoot
    \item \textbf{kernel\_duration} (28.0--34.0s): effective HRF kernel length
\end{itemize}

From the predicted parameters, we generate the HRF kernel $h(t)$ using a double-gamma function:
\begin{equation}
    h(t) = \frac{t^{a_1-1} e^{-t/b_1}}{b_1^{a_1}\Gamma(a_1)} - c \cdot \frac{t^{a_2-1} e^{-t/b_2}}{b_2^{a_2}\Gamma(a_2)},
\end{equation}
where $a_1, b_1$ are shape and scale parameters for the main peak, $a_2, b_2$ for the undershoot, and $c$ is the undershoot scale.

\textbf{Neural Activity Parameter Generation.}
We model neural activity as a sequence of spike events.
An attention-based parameter generator predicts spike parameters from BOLD features $\mathbf{F} \in \mathbb{R}^{L \times H}$ extracted by the conditional Mamba encoder.
Using learnable query vectors with multi-head attention, we predict for each spike $k$:
\begin{itemize}
    \item \textbf{timing} $t_k$: spike occurrence time
    \item \textbf{amplitude} $a_k$: spike magnitude
    \item \textbf{width} $w_k$: spike width (standard deviation)
\end{itemize}

The neural activity time series is then generated as a sum of Gaussian pulses:
\begin{equation}
    n(t) = \sum_{k=1}^{K} a_k \cdot \exp\left(-\frac{(t - t_k)^2}{2w_k^2}\right) + \epsilon(t),
\end{equation}
where $K$ is the number of spikes and $\epsilon(t)$ is background noise with $1/f$ characteristics.

\textbf{BOLD Reconstruction.}
The estimated neural activity $\widehat{n}(t)$ is convolved with the HRF kernel $h(t)$ to reconstruct the BOLD signal:
\begin{equation}
    \widehat{x}(t) = (\widehat{n} * h)(t).
\end{equation}
This is implemented via grouped convolution, applying independent HRFs to each ROI.

\textbf{Training Objective.}
The Stage 1 loss function consists of two terms:
\begin{equation}
    \mathcal{L}_{\text{stage1}} = \lambda_{\text{na}} \cdot \text{MSE}(\widehat{n}, n^*) + \lambda_{\text{bold}} \cdot \text{MSE}(\widehat{x}, x),
\end{equation}
where $n^*$ is the ground-truth neural activity from simulation and $x$ is the input BOLD signal.
The first term ensures accurate neural activity estimation, while the second enforces consistency of the HRF model.

\subsection{Stage 1: EEG Source Localization}
\label{subsec:stage1_eeg}

Stage 1 addresses the inverse problem of estimating latent ROI-level local field potential (LFP) signals from observed scalp EEG measurements.
Since scalp EEG signals result from spatial mixing due to volume conduction, we employ DeepSIF (Deep Learning-based Source Imaging Framework) to perform end-to-end learnable source imaging that directly maps sensor-space EEG to source-space neural activity.

\textbf{Architecture.}
DeepSIF is based on a one-dimensional convolutional neural network (1D CNN) encoder-decoder architecture.
Given scalp EEG signals $\mathbf{E} \in \mathbb{R}^{B \times S \times T}$ (batch size $B$, number of sensors $S$, time points $T$), DeepSIF outputs ROI-level LFP signals $\widehat{\mathbf{L}} \in \mathbb{R}^{B \times R \times T}$ (number of ROIs $R$).

Specifically, DeepSIF consists of the following layer structure:
\begin{enumerate}
    \item \textbf{First convolutional layer}: Expands $S$ channels to 256 channels (kernel size 3, padding 1)
    \item \textbf{Second convolutional layer}: Reduces 256 channels to 128 channels (kernel size 3, padding 1)
    \item \textbf{Output projection layer}: Maps 128 channels to $R$ channels (1$\times$1 convolution)
\end{enumerate}

Each convolutional layer is followed by batch normalization and LeakyReLU activation (negative slope 0.1).
DeepSIF processes broadband signals directly without frequency division, corresponding to setting the number of frequency bands $N=1$.

\textbf{Loss Function.}
The training objective for Stage 1 minimizes two reconstruction losses:

\textbf{1. Forward Model Reconstruction Loss.}
The predicted ROI LFP $\widehat{\mathbf{L}}$ is reconstructed back to scalp EEG using the leadfield matrix $\mathbf{H} \in \mathbb{R}^{S \times R}$, and the difference from the actual scalp EEG $\mathbf{E}$ is measured:
\begin{equation}
    \mathcal{L}_{\text{forward}} = \|\mathbf{E} - \mathbf{H}\widehat{\mathbf{L}}\|_1.
\end{equation}
This loss ensures consistency of predictions through the physical forward model and enables unsupervised learning even when ground-truth supervision is unavailable.

\textbf{2. Supervised Loss.}
When ground-truth ROI LFP $\mathbf{L}_{\text{gt}}$ is available, direct comparison with predictions is used for learning:
\begin{equation}
    \mathcal{L}_{\text{gt}} = \|\widehat{\mathbf{L}} - \mathbf{L}_{\text{gt}}\|_2^2.
\end{equation}

\textbf{3. Adaptive Loss Weights.}
We apply a curriculum learning strategy that gradually adjusts the weights of loss components during training.
For epoch $e$ with progress $p = e / E_{\max}$ (total epochs $E_{\max}$):
\begin{itemize}
    \item \textbf{Stability weight}: $\lambda_{\text{stab}} = 10^{-3} \times (1 + p)$
    \item \textbf{Structural connectivity weight}: $\lambda_{\text{sc}} = 0.1 \times (1 + 0.5p)$
\end{itemize}
Early training focuses on reconstruction, while later stages place greater emphasis on structural consistency.

\textbf{Training Procedure.}
Stage 1 training is performed with the following settings:
\begin{itemize}
    \item \textbf{Optimizer}: AdamW (learning rate $5 \times 10^{-4}$, weight decay $10^{-5}$)
    \item \textbf{Learning rate scheduler}: Cosine annealing scheduler
    \item \textbf{Gradient clipping}: Maximum norm 1.0
    \item \textbf{Gradient accumulation}: Accumulated over a specified number of steps before update
\end{itemize}

Parameters of Stage 2 (causal discovery module) are kept frozen, and only Stage 1 parameters are trained.
This enables step-by-step learning of source reconstruction and causal discovery.

\textbf{Implementation Details.}
DeepSIF is implemented as the \texttt{Stage1\_DeepSIF} class, which internally contains the \texttt{DeepSIF} module.
During forward propagation, it takes scalp EEG as input and outputs ROI LFP, expanding the output dimensions to $(B, R, T, 1)$ for compatibility.
Loss computation performs both forward model reconstruction using the leadfield matrix and, when possible, direct comparison with ground-truth values.
This design enables DeepSIF to provide a source imaging model that satisfies physical constraints while flexibly learning from data.

\subsection{Stage 2: Latent Causal Dynamics via Mamba}
\label{subsec:stage2_mamba}

Given ROI-level neural activity reconstructed from Stage 1, our goal is to estimate directed connectivity scores between brain regions.
Stage 2 addresses this through three key innovations: (1) \textbf{standard Mamba encoder} for efficient ROI time series representation learning, (2) \textbf{multi-scale causal kernel} modeling temporal dynamics across different time scales, and (3) \textbf{top-$k$ sparsification} for biologically plausible sparse connectivity structure.

\subsubsection{Mamba Encoder}

To encode neural activity time series for each ROI, we employ a standard Mamba encoder that is shared across all ROIs.
The standard Mamba encoder, based on efficient state-space model implementation, captures long-range temporal dependencies with linear complexity, enabling effective modeling of long time series.

For each ROI $i \in \{1, \ldots, R\}$ with neural activity time series $x_i \in \mathbb{R}^{T}$ (where $T$ is the temporal length), the encoding process proceeds as follows:
\begin{enumerate}
    \item \textbf{Input projection}: $x_i^{\mathrm{proj}} = \mathrm{Linear}(x_i) \in \mathbb{R}^{T \times H}$ where $H$ is the hidden dimension
    \item \textbf{Mamba-based encoding}: 
    \begin{equation}
        h_i = \mathrm{Mamba}(x_i^{\mathrm{proj}}) \in \mathbb{R}^{T \times H}
    \end{equation}
    \item \textbf{Temporal pooling} (optional): $e_i = \mathrm{AdaptiveAvgPool1d}(h_i) \in \mathbb{R}^{H}$ when temporal dimension is not preserved
\end{enumerate}

The standard Mamba encoder, as an efficient implementation of state-space models, captures long-range temporal dependencies with linear complexity.
By applying the same encoder to all ROIs, we learn shared temporal patterns while allowing ROI-specific characteristics to be captured in subsequent structure learning stages.

\subsubsection{Multi-Scale Causal Kernel}

Neural signal transmission occurs at various time scales due to synaptic delays, axonal conduction velocities, and indirect pathway transmission.
To explicitly model these multi-scale temporal dynamics, we introduce a multi-scale VAR-like causal kernel that simultaneously considers short-term, mid-term, and long-term temporal dependencies.

The multi-scale causal dynamics are modeled as:
\begin{equation}
    \mathbf{x}_{t+1} = \sum_{g \in \mathcal{G}} \sum_{\ell \in \mathcal{L}_g} \mathbf{A}^{(g)}_\ell \mathbf{x}_{t-\ell} + \boldsymbol{\epsilon}_t
    \label{eq:multiscale_var}
\end{equation}
where $\mathcal{G} = \{\mathrm{short}, \mathrm{mid}, \mathrm{long}\}$ denotes temporal scale groups, each with a corresponding lag set $\mathcal{L}_g$.
For fMRI data with TR = 2.0 seconds:
\begin{itemize}
    \item \textbf{Short-term}: $\mathcal{L}_{\mathrm{short}} = \{1, 2\}$ TRs (2--4 seconds) --- capturing direct neural connections and immediate causal influences
    \item \textbf{Mid-term}: $\mathcal{L}_{\mathrm{mid}} = \{3, 4, 5\}$ TRs (6--10 seconds) --- modeling intermediate pathway transmission and local network dynamics
    \item \textbf{Long-term}: $\mathcal{L}_{\mathrm{long}} = \{6, 7, \ldots, 10\}$ TRs (12--20 seconds) --- capturing indirect pathways, feedback loops, and long-range interactions
\end{itemize}

Each group $g$ has learnable kernel matrices $\mathbf{A}^{(g)}_\ell \in \mathbb{R}^{R \times R}$ for each lag $\ell \in \mathcal{L}_g$, where $A^{(g)}_{\ell,ij}$ represents the causal influence strength from ROI $i$ to ROI $j$ at time lag $\ell$ within scale group $g$.

The final causal strength matrix $\mathbf{C} \in \mathbb{R}^{R \times R}$ is computed as a weighted aggregation across all temporal scales:
\begin{equation}
    C_{ij} = \sum_{g \in \mathcal{G}} w_g \sum_{\ell \in \mathcal{L}_g} |A^{(g)}_{\ell, ij}|
    \label{eq:causal_strength}
\end{equation}
where $w_g$ are group-specific weights that can emphasize different temporal scales.
The absolute value operation $|\cdot|$ ensures that causal strength is represented as a positive scalar, aggregating influences across all time lags within each scale group.

\textbf{Integration with Dense Graph.}
The multi-scale causal kernel is integrated with the dense graph learned from ROI embeddings through a weighted combination:
\begin{equation}
    \mathbf{G}_{\mathrm{combined}} = (1-\alpha) \mathbf{G}_{\mathrm{dense}} + \alpha \mathbf{C}_{\mathrm{scaled}}
\end{equation}
where $\mathbf{G}_{\mathrm{dense}}$ is the pairwise MLP output from ROI embeddings, $\mathbf{C}_{\mathrm{scaled}}$ is the multi-scale causal matrix scaled to match the magnitude of $\mathbf{G}_{\mathrm{dense}}$, and $\alpha = 0.1$ is a small mixing coefficient that allows gradual integration of temporal dynamics.

\subsubsection{Top-$K$ Sparsification}

Real brain networks are extremely sparse, with empirical estimates suggesting that less than 10--15\% of possible connections represent actual causal relationships.
Standard $\ell_1$ regularization or threshold-based sparsification methods suffer from gradient vanishing under extreme sparsity and cannot guarantee exact target sparsity ratios.

We apply a top-$k$ selection mechanism that guarantees exact sparsity while maintaining gradient flow through selected connections.
The sparsification process consists of three steps:

\textbf{1. Dense Graph Generation.}
From ROI embeddings $\mathbf{E} \in \mathbb{R}^{B \times R \times H}$, we compute a dense causal graph using pairwise MLP:
\begin{equation}
    \mathbf{G}_{\mathrm{dense}} = \mathrm{PairwiseMLP}(\mathbf{E}) \in \mathbb{R}^{B \times R \times R}
\end{equation}
The dense graph is then passed through a learnable activation function that adapts to dataset statistics:
\begin{equation}
    \mathbf{G}_{\mathrm{scaled}} = \mathrm{CouplingActivation}(\mathbf{G}_{\mathrm{dense}})
\end{equation}
where $\mathrm{CouplingActivation}$ applies dataset-aware scaling to prevent scale collapse and ensures outputs are in a normalized space suitable for training.

\textbf{2. Top-$K$ Selection.}
For each sample in the batch, we select the top $k$ strongest connections globally:
\begin{equation}
    k = \lfloor \rho \cdot R^2 \rfloor
\end{equation}
where $\rho \in [0, 1]$ is the target sparsity ratio (e.g., $\rho = 0.1$ for 10\% connectivity).
The selection mask is constructed as:
\begin{equation}
    M_{ij}^{(b)} = \begin{cases}
        1 & \text{if } (i,j) \in \mathrm{TopK}(|\mathbf{G}_{\mathrm{scaled}}^{(b)}|, k) \\
        0 & \text{otherwise}
    \end{cases}
\end{equation}
where $\mathrm{TopK}(\cdot, k)$ returns the indices of the $k$ largest entries by absolute value.

\textbf{3. Sparse Graph Generation.}
The final sparse causal graph is obtained by element-wise multiplication:
\begin{equation}
    \mathbf{G}_{\mathrm{sparse}} = \mathbf{G}_{\mathrm{scaled}} \odot \mathbf{M}
    \label{eq:topk_sparse}
\end{equation}

This approach ensures: (i) \textbf{exact target sparsity ratio} $\rho$ is maintained, (ii) \textbf{gradient flow} is preserved through the selected $k$ edges, and (iii) \textbf{robustness} against overfitting to the overwhelming majority of null connections.

\textbf{Dynamic Sparsity Scheduling.}
During training, we employ a dynamic sparsity schedule that starts with a relatively dense graph and gradually increases sparsity:
\begin{equation}
    \rho(t) = \rho_{\mathrm{start}} - (\rho_{\mathrm{start}} - \rho_{\mathrm{end}}) \cdot \min\left(\frac{t}{T_{\mathrm{schedule}}}, 1\right)
\end{equation}
where $\rho_{\mathrm{start}} = 0.35$ (35\% connections), $\rho_{\mathrm{end}} = 0.25$ (25\% connections), and $T_{\mathrm{schedule}} = 50$ epochs.
This warmup strategy facilitates easier learning in early training while converging to biologically plausible sparse structures in later stages.

\subsection{Training Objectives}
\label{subsec:optimization}

Training causal models on biological neural data is challenging due to extreme sparsity, non-stationarity, and measurement noise.
Our objective function combines predictive accuracy with biological constraints:
\begin{equation}
    \mathcal{L}_{\mathrm{total}} = \mathcal{L}_{\mathrm{MAE}} + \lambda_{\mathrm{asym}} \mathcal{L}_{\mathrm{asym}} + \lambda_{\mathrm{scale}} \mathcal{L}_{\mathrm{scale}} + \lambda_{\mathrm{stab}} \mathcal{L}_{\mathrm{stab}}
\end{equation}

\textbf{1. Top-$K$ Weighted MAE Loss.}
To focus learning on the most informative connections, we adopt a weighted MAE loss over the top-$k$ strongest edges:
\begin{equation}
    \mathcal{L}_{\mathrm{MAE}} = \frac{1}{|\Omega_k|} \sum_{(i,j) \in \Omega_k} |A_{ij} - \widehat{A}_{ij}|
\end{equation}
where $\Omega_k$ denotes indices of the top-$k$ largest entries in the ground-truth adjacency matrix.
This acts as a robust estimator that prevents overfitting to the overwhelming majority of null connections.

\textbf{2. Asymmetry Loss.}
To preserve directionality ($A \to B$ vs $B \to A$), we penalize deviations in the asymmetric component:
\begin{equation}
    \mathcal{L}_{\mathrm{asym}} = \| (\widehat{\mathbf{A}} - \widehat{\mathbf{A}}^\top) - (\mathbf{A}^* - \mathbf{A}^{*\top}) \|_1
\end{equation}

\textbf{3. Scale Loss.}
To prevent output distribution collapse, we anchor the predicted statistics to target values:
\begin{equation}
    \mathcal{L}_{\mathrm{scale}} = |\mu_{\widehat{A}} - \mu^*| + |\sigma_{\widehat{A}} - \sigma^*|
\end{equation}

\textbf{4. Stability Loss.}
To ensure dynamical stability (spectral radius $< 1$), we impose a soft constraint via spectral norm:
\begin{equation}
    \mathcal{L}_{\mathrm{stab}} = \max\left(0, \|\widehat{\mathbf{A}}\|_2 - 1 \right)^2
\end{equation}

The hyperparameters $\lambda_{\mathrm{asym}}, \lambda_{\mathrm{scale}}, \lambda_{\mathrm{stab}}$ control the trade-off between predictive accuracy and biological plausibility.
We train the entire pipeline end-to-end, allowing the causal objective to guide earlier stages toward representations most suitable for causal discovery.

\section{Experiment Settings}
\label{appendix:experiment_settings}

This section summarizes all hyperparameters used to train \textbf{\textsc{\methodname}} across modalities and stages.

\subsection{fMRI Settings (TVB\_realistic\_nonstationary\_10k)}
\label{app:exp_fmri}

\textbf{Data \& preprocessing.}
All fMRI stages use \texttt{data/TVB\_realistic\_nonstationary\_10k} with BOLD normalization enabled.

\textbf{Core architecture and optimization.}
Table~\ref{tab:fmri_core} reports the backbone, dimensionality, and optimization hyperparameters across stages.

\begin{table}[H]
\centering
\small
\begin{tabular}{lccc}
\toprule
\textbf{Category} & \textbf{Stage 1 (BOLD$\rightarrow$Neural)} & \textbf{Stage 2 (Neural$\rightarrow$Causal)} & \textbf{Stage 3 (Joint)}\\
\midrule
Backbone & Mamba & Mamba & Mamba \\
Hidden size & 128 & 128 & 128 \\
\# ROIs & 68 & 68 & 68 \\
Causality option & --- & \texttt{sparse\_top\_k} & \texttt{sparse\_top\_k} \\
\midrule
Batch size & 16 & 32 & 32 \\
Epochs & 50 & 50 & 50 \\
Learning rate & $1\times 10^{-4}$ & $7.5\times 10^{-5}$ & $1\times 10^{-3}$ \\
Seed & 1 & 1 & 1 \\
\bottomrule
\end{tabular}
\caption{fMRI: core architecture and optimization settings.}
\label{tab:fmri_core}
\end{table}

\textbf{Loss weights and regularization.}
Table~\ref{tab:fmri_losses_regs} summarizes the loss weights and regularization coefficients used in each stage.

\begin{table}[H]
\centering
\small
\begin{tabular}{lccc}
\toprule
\textbf{Loss / regularizer} & \textbf{Stage 1} & \textbf{Stage 2} & \textbf{Stage 3}\\
\midrule
BOLD reconstruction weight & 0.5 & --- & 0.5 \\
Coupling loss weight & --- & 1.0 & 3.0 \\
Delay loss weight & --- & 30.0 & 30.0 \\
\midrule
$\lambda_{\mathrm{sparse}}$ & --- & $1\times 10^{-6}$ & 0.01 \\
$\lambda_{\mathrm{stab}}$ & --- & $5\times 10^{-3}$ & $5\times 10^{-3}$ \\
$\lambda_{\mathrm{sc}}$ & --- & 0.0 & 0.0 \\
$\lambda_{\mathrm{asym}}$ & --- & 1.0 & 1.0 \\
$\lambda_{\mathrm{scale}}$ & --- & 0.0 & 0.0 \\
$\lambda_{\mathrm{lap}}$ & --- & $5\times 10^{-4}$ & $5\times 10^{-4}$ \\
\midrule
$\lambda_{\mathrm{diversity}}$ & --- & --- & 10.0 \\
Diversity target std & --- & --- & 0.05 \\
$\lambda_{\mathrm{scale\_match}}$ & --- & --- & 1.0 \\
$\lambda_{\mathrm{recall}}$ & --- & --- & 0.1 \\
\bottomrule
\end{tabular}
\caption{fMRI: loss weights and regularization hyperparameters.}
\label{tab:fmri_losses_regs}
\end{table}

\textbf{Stage 2--3 details (selection, schedules, statistics, toggles).}
Table~\ref{tab:fmri_stage23_details} reports the remaining hyperparameters that are specific to the causal module and its training schedule.

\begin{longtable}{p{0.55\linewidth}cc}
\caption{fMRI: Stage 2--3 selection/schedules/statistics and architectural toggles.}
\label{tab:fmri_stage23_details}\\

\toprule
\textbf{Category} & \textbf{Stage 2} & \textbf{Stage 3} \\
\midrule
\endfirsthead

\multicolumn{3}{l}{\small\textit{Table \thetable\ (continued).}}\\
\toprule
\textbf{Category} & \textbf{Stage 2} & \textbf{Stage 3} \\
\midrule
\endhead

\midrule
\multicolumn{3}{r}{\small\textit{Continued on next page.}}\\
\endfoot

\bottomrule
\endlastfoot

\multicolumn{3}{l}{\textbf{Sparsity / selection}}\\
Sparsity ratio & 0.15 & --- \\
MAE Top-K ratio & 0.25 & 0.35 \\
Edge weight & 1.5 & 1.5 \\
Edge threshold & 0.015 & 0.015 \\
Edge percentile & 0.9 & 0.9 \\
Scale weight & 0.1 & 0.1 \\
\addlinespace[0.35em]

\multicolumn{3}{l}{\textbf{Scheduling}}\\
Regularization warmup epochs & 3 & 3 \\
MAE Top-K start ratio & 0.15 & 0.15 \\
MAE Top-K end ratio & 0.05 & 0.05 \\
MAE Top-K schedule length & 10 epochs & 10 epochs \\
\addlinespace[0.35em]

\multicolumn{3}{l}{\textbf{Coupling statistics}}\\
Auto coupling statistics & Enabled & Enabled \\
Auto statistics samples & 512 & 512 \\
Coupling norm & 1.0 & 1.0 \\
\addlinespace[0.35em]

\multicolumn{3}{l}{\textbf{Architecture toggles}}\\
Multi-scale & Enabled & Enabled \\
Groups mode & \texttt{default} & \texttt{default} \\
Laplacian & Enabled & Enabled \\
Downsample factor & 1 & 1 \\
Efficient GAT & Disabled & Disabled \\
Preserve temporal & Disabled & Disabled \\
\end{longtable}

\textbf{Notes (fMRI).}
As shown in Table~\ref{tab:fmri_losses_regs}, Stage~3 increases the coupling weight (1.0$\rightarrow$3.0) and sparsity regularization
($10^{-6}\rightarrow 10^{-2}$), and introduces diversity, scale-matching, and recall regularization to stabilize joint training.

\subsection{EEG Settings (nonstationary realistic datasets)}
\label{app:exp_eeg}

\textbf{Data \& preprocessing.}
EEG stages use \texttt{data\_10k\_nonstationary\_realistic}. ENIGMA structural connectivity is enabled.

\textbf{Stage-wise core settings.}
Table~\ref{tab:eeg_core} summarizes architecture, training, and checkpointing across EEG stages.

\begin{table}[H]
\centering
\small
\begin{tabular}{lccc}
\toprule
\textbf{Category} & \textbf{Stage 1 (DeepSIF pretrain)} & \textbf{Stage 2 (Causal mapping)} & \textbf{Stage 5 (Joint finetune)}\\
\midrule
Architecture & DeepSIF & Mamba & DeepSIF + Causal Mapper (Mamba)\\
\# ROIs & 68 & 68 & 68 \\
Preserve temporal & --- & 0 & 0 \\
ENIGMA SC & Enabled & Enabled & Enabled \\
\midrule
Batch size & 32 & 32 & 16 \\
Epochs & 50 & 50 & 20 \\
Learning rate & $5\times 10^{-4}$ & $7.5\times 10^{-5}$ & $1\times 10^{-5}$ \\
Seed & 2 & 2 & 1 \\
\midrule
Checkpoints & --- & --- & Stage 1 + Stage 2 pretrained\\
\bottomrule
\end{tabular}
\caption{EEG: stage-wise core settings.}
\label{tab:eeg_core}
\end{table}

\textbf{Stage 2 and Stage 5 details.}
Table~\ref{tab:eeg_stage2_stage5_details} reports regularization, edge filtering, scheduling, statistics, and architecture toggles for the EEG causal module.

\begin{longtable}{p{0.55\linewidth}cc}
\caption{EEG: Stage 2 and Stage 5 regularization, edge filtering, schedules, and toggles.}
\label{tab:eeg_stage2_stage5_details}\\

\toprule
\textbf{Category} & \textbf{Stage 2} & \textbf{Stage 5} \\
\midrule
\endfirsthead

\multicolumn{3}{l}{\small\textit{Table \thetable\ (continued).}}\\
\toprule
\textbf{Category} & \textbf{Stage 2} & \textbf{Stage 5} \\
\midrule
\endhead

\midrule
\multicolumn{3}{r}{\small\textit{Continued on next page.}}\\
\endfoot

\bottomrule
\endlastfoot

\multicolumn{3}{l}{\textbf{Regularization hyperparameters}}\\
$\lambda_{\mathrm{sparse}}$ & $1\times 10^{-6}$ & $1\times 10^{-6}$ \\
$\lambda_{\mathrm{stab}}$ & $5\times 10^{-3}$ & $5\times 10^{-3}$ \\
$\lambda_{\mathrm{sc}}$ & 0.0 & 0.0 \\
$\lambda_{\mathrm{asym}}$ & 1.0 & 1.0 \\
$\lambda_{\mathrm{lap}}$ & $5\times 10^{-4}$ & $5\times 10^{-4}$ \\
\addlinespace[0.35em]

\multicolumn{3}{l}{\textbf{Edge filtering}}\\
Edge weight & 1.5 & 1.5 \\
Edge threshold & 0.015 & 0.015 \\
Edge percentile & 0.9 & 0.9 \\
Scale weight & 0.002 & 0.002 \\
\addlinespace[0.35em]

\multicolumn{3}{l}{\textbf{Scheduling / MAE Top-K}}\\
MAE Top-K start ratio & 0.35 & 0.35 \\
MAE Top-K end ratio & 0.25 & 0.25 \\
MAE Top-K schedule length & 50 epochs & 20 epochs \\
Regularization warmup epochs & 40 & 20 \\
\addlinespace[0.35em]

\multicolumn{3}{l}{\textbf{Coupling statistics}}\\
Auto coupling statistics & Enabled & Enabled \\
Auto statistics samples & 512 & 512 \\
\addlinespace[0.35em]

\multicolumn{3}{l}{\textbf{Architecture toggles}}\\
Multi-scale & Enabled & Enabled \\
Groups mode & \texttt{default} & \texttt{checkpoint\_compat} \\
Laplacian & Enabled & Enabled \\
\end{longtable}

\textbf{Common EEG settings.}
WandB logging is enabled; ENIGMA structural connectivity is enabled across EEG stages.

\subsection{Compute / Server Details}
\label{app:server_details}

The compute environment is summarized in Table~\ref{tab:server_specs}.

\begin{table}[H]
\centering
\small
\begin{tabular}{ll}
\toprule
\textbf{Component} & \textbf{Specification} \\
\midrule
GPU & 7$\times$ NVIDIA GeForce RTX 3090 \\
GPU memory & 24{,}576 MiB per GPU ($\approx$24 GB), $\approx$168 GB total \\
Driver / CUDA & Driver 550.90.07; CUDA 12.4 (driver), 12.5 (compiler) \\
\midrule
CPU & Intel Xeon Gold 6240R @ 2.40 GHz \\
Cores / threads & 48 cores (24$\times$2 sockets), 96 threads (HT) \\
Max clock & 4.0 GHz \\
\midrule
RAM & 502 GB total (436 GB available) \\
Swap & 8 GB \\
\midrule
OS / Python & Ubuntu 20.04.6 LTS; Python 3.8.10 \\
\bottomrule
\end{tabular}
\caption{Compute environment used for all experiments.}
\label{tab:server_specs}
\end{table}

\newpage

\section{EEG Simulation Framework: Stationary and Non-stationary Data Generation}
\label{sec:appendix_EEG_simulator}

This appendix describes the physics-aware EEG simulation framework used to generate stationary and non-stationary datasets with biologically plausible characteristics. The simulator balances biological realism, controllability, and scalability, while providing ground-truth causal topology, time-varying coupling strengths, and transmission delays.

\subsection{At-a-glance summary}

\begin{table}[h!]
\centering
\small
\begin{tabular}{p{0.22\linewidth}p{0.72\linewidth}}
\toprule
\textbf{Component} & \textbf{Key choices} \\
\midrule
ROI parcellation & Desikan--Killiany atlas, $N=68$ ROIs \\
Structural connectivity & ENIGMA SC (healthy controls; MNI coordinates) \\
Effective connectivity & SC-to-EC sparsification + directionality + self-inhibition \\
Stationarity modes & Stationary ($\sigma_{\text{drift}}=0$) vs.\ non-stationary AR(1) drift ($\rho=0.90$, $\sigma_{\text{drift}}=0.10$) \\
Delays & Distance-based: $\tau_{ij}=d_{ij}/v+\tau_{\mathrm{syn}}$, with clipping (5--20 ms) \\
Neural dynamics & Neural mass with band assignments (alpha-dominant) + synaptic time constants \\
Forward model & Leadfield $L\in\mathbb{R}^{129\times N}$ (GSN-128 + Cz) + low-rank perturbation \\
Sensor-space shaping & Band-weighted reconstruction + correlation control + RMS normalization + optional artifacts \\
Downsampling & 500 Hz $\rightarrow$ 250 Hz via FIR anti-aliasing (\texttt{scipy.signal.decimate}, factor 2) \\
Outputs & LFP $X$, topology $M$, coupling $B$ or $B_t$, delays $\Tau$, EEG, leadfield, metadata \\
\bottomrule
\end{tabular}
\caption{At-a-glance summary of the EEG simulation framework.}
\label{tab:eeg_sim_at_a_glance}
\end{table}

\subsection{Stationary vs.\ non-stationary coupling dynamics}
\label{app:eeg_stationary_nonstationary}

We generate two EEG dataset variants with identical biological constraints but different coupling dynamics (Table~\ref{tab:eeg_stationary_nonstationary}).

\begin{table}[h!]
\centering
\small
\begin{tabular}{p{0.36\linewidth}p{0.28\linewidth}p{0.28\linewidth}}
\toprule
\textbf{Property} & \textbf{Stationary} & \textbf{Non-stationary} \\
\midrule
Coupling matrix & Fixed $\mathbf{B}\in\mathbb{R}^{N\times N}$ & Time-varying $\mathbf{B}_t\in\mathbb{R}^{T\times N\times N}$ \\
Topology mask & Implicit in $\mathbf{B}$ & Fixed $\mathbf{M}\in\{0,1\}^{N\times N}$ \\
Construction & --- & $\mathbf{B}_t=\mathbf{M}\odot \mathbf{W}_t$ \\
Drift process & $\sigma_{\text{drift}}=0.0$ & AR(1): $\rho=0.90$, $\sigma_{\text{drift}}=0.10$ \\
\bottomrule
\end{tabular}
\caption{Stationary vs.\ non-stationary dataset generation.}
\label{tab:eeg_stationary_nonstationary}
\end{table}

\subsection{Structural connectivity and causal topology}
\label{app:eeg_topology}

\textbf{ENIGMA structural connectivity.}
We leverage ENIGMA structural connectivity (SC) under the Desikan--Killiany atlas ($N=68$ ROIs), providing biologically grounded connectivity and ROI coordinates in MNI space.

\textbf{SC-to-effective connectivity conversion.}
Because SC is dense, we convert it to sparse effective connectivity using the procedure summarized in Table~\ref{tab:eeg_sc_to_ec}.

\begin{table}[h!]
\centering
\small
\begin{tabular}{p{0.30\linewidth}p{0.62\linewidth}}
\toprule
\textbf{Step} & \textbf{Specification} \\
\midrule
Weak connection filtering & Remove bottom 50\% weakest SC edges \\
Probabilistic sampling & Retain $(i,j)$ with probability $\min\!\left(1, \mathrm{SC}_{\text{norm}}(i,j)\cdot r_{\text{retention}}\right)$, $r_{\text{retention}}=1.5$ \\
Target density & 10--15\% off-diagonal density (effective connectivity) \\
Directional bias & 63\% feedforward-only, 30\% bidirectional, 7\% feedback-only \\
Self-inhibition & Add self-connection with probability $p_{\text{self}}=0.25$ per ROI \\
\bottomrule
\end{tabular}
\caption{Procedure for converting dense ENIGMA SC into sparse effective connectivity.}
\label{tab:eeg_sc_to_ec}
\end{table}

\subsection{Distance-based transmission delays}
\label{app:eeg_delays}

Each directed edge is assigned a transmission delay
\[
\tau_{ij}=\frac{d_{ij}}{v}+\tau_{\mathrm{syn}},
\]
with parameters summarized in Table~\ref{tab:eeg_delays}.

\begin{table}[h!]
\centering
\small
\begin{tabular}{p{0.38\linewidth}p{0.54\linewidth}}
\toprule
\textbf{Quantity} & \textbf{Specification} \\
\midrule
Distance $d_{ij}$ & Euclidean distance between ROI centroids (mm; MNI coordinates) \\
Conduction velocity $v$ & $v\sim\mathcal{U}(6.0,12.0)$ m/s \\
Synaptic delay $\tau_{\mathrm{syn}}$ & $\tau_{\mathrm{syn}}\sim\mathcal{U}(0.001,0.003)$ s (1--3 ms) \\
Clipping & Delays clipped to 5--20 ms \\
Self-delays & Self-connections: 5--10 ms (distance-independent) \\
\bottomrule
\end{tabular}
\caption{Delay model and parameter ranges.}
\label{tab:eeg_delays}
\end{table}

\subsection{Neural mass model configuration}
\label{app:eeg_nmm}

\textbf{Frequency band assignments.}
Each ROI is assigned a dominant band with probability
\[
P(\text{band})=[0.10,0.15,0.70,0.05,0.0] \text{ for [Delta, Theta, Alpha, Beta, Gamma]}.
\]
Table~\ref{tab:eeg_bands_tau} summarizes the spectral and synaptic parameters.

\begin{table}[h!]
\centering
\small
\begin{tabular}{p{0.38\linewidth}p{0.54\linewidth}}
\toprule
\textbf{Parameter} & \textbf{Specification} \\
\midrule
Band distribution & Delta 10\%, Theta 15\%, Alpha 70\%, Beta 5\%, Gamma 0\% \\
Alpha band & 8--13 Hz (dominant in resting-state mimicry) \\
Excitatory time constant $\tau_E$ & $\tau_E\sim\mathcal{U}(0.010,0.015)$ s (10--15 ms) \\
Inhibitory time constant $\tau_I$ & $\tau_I\sim\mathcal{U}(0.060,0.100)$ s (60--100 ms) \\
Sigmoid parameters & threshold $=0.5$, slope $=2.0$ \\
\bottomrule
\end{tabular}
\caption{Neural mass model spectral and synaptic parameters.}
\label{tab:eeg_bands_tau}
\end{table}

\subsection{Time-varying coupling weights (non-stationary only)}
\label{app:eeg_drift}

For non-stationary datasets, coupling weights evolve via AR(1) drift:
\[
\delta_t^{(i,j)}=\rho\,\delta_{t-1}^{(i,j)}+\epsilon_t, \quad
w_t^{(i,j)}=w_0^{(i,j)}+\delta_t^{(i,j)}, \quad
\epsilon_t\sim\mathcal{N}(0,\sigma_{\text{drift}}^2).
\]
Key choices are summarized in Table~\ref{tab:eeg_drift}.

\begin{table}[h!]
\centering
\small
\begin{tabular}{p{0.38\linewidth}p{0.54\linewidth}}
\toprule
\textbf{Quantity} & \textbf{Specification} \\
\midrule
AR(1) coefficient & $\rho=0.90$ \\
Drift noise & $\sigma_{\text{drift}}=0.10$ \\
Initial weights $w_0^{(i,j)}$ & From normalized SC, clipped to $[0.03,0.3]$ \\
Sign assignment & Self: inhibitory (negative); inter-ROI: 80\% excitatory, 20\% inhibitory \\
\bottomrule
\end{tabular}
\caption{Non-stationary drift process and coupling initialization.}
\label{tab:eeg_drift}
\end{table}

\subsection{LFP generation}
\label{app:eeg_lfp}

\textbf{Latent activity constraints.}
LFP generation enforces: (i) delayed coupling via $\Tau$, (ii) band-specific oscillations per ROI,
(iii) slow amplitude modulation (0.1--0.5 Hz), and (iv) signal mixing (30\% original, 70\% clean oscillation).

\subsection{Leadfield matrix and scalp EEG generation}
\label{app:eeg_forward_model}

\textbf{Leadfield matrix.}
We generate a leadfield $\mathbf{L}\in\mathbb{R}^{129\times N}$ using the HydroCel GSN-128 layout (129 channels including Cz reference) with standard 3D electrode coordinates. To model inter-individual variability, we apply a low-rank perturbation (magnitude 0.8--1.5, rank 5--10).

\textbf{Scalp projection.}
ROI activity $\mathbf{X}(t)$ is projected to sensor space as
\[
\mathbf{y}_{\text{raw}}(t)=\mathbf{X}(t)\mathbf{L}^\top.
\]

\textbf{Sensor-space shaping.}
We match empirical resting-state statistics by combining band-filtered components from $\mathbf{y}_{\text{raw}}$ with channel-specific synthetic oscillations. The target band mixture and shaping steps are summarized in Table~\ref{tab:eeg_sensor_shaping}.

\begin{longtable}{p{0.30\linewidth}p{0.62\linewidth}}
\caption{EEG sensor-space shaping: target band mixture, correlation control, normalization, and artifacts.}
\label{tab:eeg_sensor_shaping}\\
\toprule
\textbf{Component} & \textbf{Specification} \\
\midrule
\endfirsthead
\multicolumn{2}{l}{\small\textit{Table \thetable\ (continued).}}\\
\toprule
\textbf{Component} & \textbf{Specification} \\
\midrule
\endhead
\midrule
\multicolumn{2}{r}{\small\textit{Continued on next page.}}\\
\endfoot
\bottomrule
\endlastfoot

Target band mixture & Delta 10\%, Theta 12\%, \textbf{Alpha 70\%}, Beta 6\%, Gamma 2\% \\
Band extraction & Extract each band from raw EEG to preserve spatial correlation structure \\
Synthetic oscillations & Generate independent oscillations per channel (100\% channel-specific) \\
Band weighting & Alpha: $\text{raw\_weight}=3.0$, $\text{synth\_weight}=2.5$; other bands use lower weights \\
Correlation target & Inter-channel correlation 0.2--0.4 \\
Decorrelation strategy & Add independent $1/f$ noise per channel; minimize shared noise; avoid average reference \\
RMS normalization & $\text{target\_RMS}\sim\mathcal{U}(30,50)$ $\mu$V per channel; soft-clip outliers $>80$ $\mu$V via \texttt{tanh} \\
Optional artifacts: blinks & 10--20/min, duration $\sim$300 ms, stronger in frontal channels \\
Optional artifacts: muscle & 20--60 Hz, duration 100--500 ms, stronger in temporal/peripheral channels \\
Optional artifacts: $1/f$ & Spectral slope $1/f^{1.2}$; 95\% shared, 5\% independent per channel \\
\end{longtable}

\subsection{Downsampling with anti-aliasing}
\label{app:eeg_downsampling}

Signals are downsampled using \texttt{scipy.signal.decimate} with FIR filtering (Table~\ref{tab:eeg_downsampling}).

\begin{table}[h!]
\centering
\small
\begin{tabular}{p{0.40\linewidth}p{0.52\linewidth}}
\toprule
\textbf{Quantity} & \textbf{Specification} \\
\midrule
Original sampling rate & 500 Hz \\
Downsampling factor & 2 (default) \\
Effective sampling rate & 250 Hz \\
Method & FIR anti-aliasing via \texttt{scipy.signal.decimate} \\
\bottomrule
\end{tabular}
\caption{Downsampling configuration.}
\label{tab:eeg_downsampling}
\end{table}

\subsection{Validation checks}
\label{app:eeg_validation}

The first subject (\texttt{subject\_0000}) is automatically validated against biological plausibility checks summarized in Table~\ref{tab:eeg_validation}.

\begin{table}[h!]
\centering
\small
\begin{tabular}{p{0.18\linewidth}p{0.33\linewidth}p{0.39\linewidth}}
\toprule
\textbf{Object} & \textbf{Metric} & \textbf{Target range / criterion} \\
\midrule
EEG & RMS amplitude & 10--100 $\mu$V \\
EEG & Frequency distribution & Alpha band 30--50\% (target: 70\%) \\
EEG & Spatial correlation & 0.2--0.5 \\
EEG & Kurtosis & 3--8 (artifact-driven) \\
\midrule
LFP & Std.\ deviation & 0.2--0.8 \\
LFP & Oscillatory content & Clear frequency bands present \\
LFP & Connectivity-driven dynamics & Consistent with coupling matrix \\
\midrule
Leadfield & Rank & $\ge N_{\text{ROIs}}$ \\
Leadfield & Condition number & Numerical stability check \\
\midrule
Topology & Sparsity & 10--20\% off-diagonal density \\
Topology & Asymmetry & Directional connections verified \\
Topology & Hierarchy & Feedforward/feedback ratio enforced \\
\bottomrule
\end{tabular}
\caption{Automatic plausibility validation for \texttt{subject\_0000}.}
\label{tab:eeg_validation}
\end{table}

\subsection{Output file structure}
\label{app:eeg_outputs}

Each subject directory (\texttt{subject\_XXXX/}) contains the files listed in Table~\ref{tab:eeg_outputs}.

\begin{table}[h!]
\centering
\small
\begin{tabular}{p{0.30\linewidth}p{0.60\linewidth}}
\toprule
\textbf{File} & \textbf{Contents (shape)} \\
\midrule
\texttt{X.npy} & LFP signal $(T, N)$ \\
\texttt{B\_t.npy} & Time-varying couplings $(T, N, N)$ [Non-stationary] \\
\texttt{B.npy} & Fixed coupling matrix $(N, N)$ [Stationary] \\
\texttt{M.npy} & Fixed topology mask $(N, N)$ \\
\texttt{Tau.npy} & Delay matrix $(N, N)$ \\
\texttt{EEG.npy} & Scalp EEG signal $(T, 129)$ \\
\texttt{leadfield.npy} & Leadfield matrix $(129, N)$ \\
\texttt{channel\_names.npy} & Channel names $(129,)$ \\
\texttt{sensor\_positions.npy} & Sensor positions $(129, 3)$ \\
\texttt{meta.json} & Metadata (sampling frequency, duration, parameters, etc.) \\
\bottomrule
\end{tabular}
\caption{Per-subject output files produced by the EEG simulator.}
\label{tab:eeg_outputs}
\end{table}

\subsection{Summary of biological plausibility constraints}
\label{app:eeg_bio_summary}

The simulation enforces the biological realism constraints listed in Table~\ref{tab:eeg_bio_summary}.

\begin{table}[h!]
\centering
\small
\begin{tabular}{p{0.36\linewidth}p{0.54\linewidth}}
\toprule
\textbf{Constraint} & \textbf{Implementation} \\
\midrule
Real brain connectivity & ENIGMA SC (healthy controls) \\
Distance-based delays & $v\in[6,12]$ m/s; synaptic delay 1--3 ms; clip 5--20 ms \\
Spectral realism & Alpha dominance (70\%) in ROI band assignments; sensor target mixture uses alpha 70\% \\
Electrode layout & HydroCel GSN-128 + Cz (129 channels) \\
Spatial correlations & Target inter-channel correlation 0.2--0.4 via noise-based decorrelation \\
Amplitude realism & RMS $\sim\mathcal{U}(30,50)$ $\mu$V with soft clipping \\
Artifacts & Optional eye blinks + muscle bursts + $1/f$ noise \\
$1/f$ spectrum & Slope $1/f^{1.2}$ (with shared/independent components) \\
E/I balance & 80\% excitatory, 20\% inhibitory inter-ROI; self inhibitory \\
Hierarchy & Feedforward/bidirectional/feedback = 63\% / 30\% / 7\% \\
\bottomrule
\end{tabular}
\caption{Key biological plausibility constraints enforced by the simulator.}
\label{tab:eeg_bio_summary}
\end{table}

These design choices ensure the generated EEG preserves key empirical characteristics while providing synthetic data with known causal structure for principled model training and evaluation.

\newpage

\section{fMRI Simulation Framework: Stationary and Non-stationary Data Generation}
\label{sec:appendix_fMRI_simulator}

This appendix describes the physics-aware fMRI simulation framework used to generate stationary and non-stationary fMRI datasets with biologically plausible characteristics. The simulator is based on The Virtual Brain (TVB) architecture and provides ground truth for causal topology, coupling strengths (fixed or time-varying), conduction delays, and neural activity prior to hemodynamic filtering.

\subsection{At-a-glance summary}

\begin{table}[h!]
\centering
\small
\begin{tabular}{p{0.42\linewidth}p{0.52\linewidth}}
\toprule
\textbf{Component} & \textbf{Key choices} \\
\midrule
ROI parcellation & Desikan--Killiany atlas, $N=68$ ROIs \\
Structural connectivity & ENIGMA SC (healthy controls; MNI coordinates) \\
Effective connectivity & SC-to-EC sparsification + directionality + self-inhibition \\
Stationarity modes & Stationary (fixed $\mathbf{B}$) vs.\ non-stationary AR(1) drift ($\rho=0.90$, $\sigma_{\text{drift}}=0.10$) \\
Delays & Distance-based: $\tau_{ij}=d_{ij}/v+\tau_{\text{syn}}$, clipped (2.5--50 ms) \\
Neural dynamics & Wilson--Cowan E/I neural mass model with external drive + delayed coupling \\
Hemodynamics & ROI-specific HRF (double-gamma), normalized and scaled to 1--5\% PSC \\
Simulation resolution & High-res simulation at 100 Hz ($dt=10$ ms) then downsample to TR (e.g.\ 2.0 s) \\
BOLD preprocessing & Smoothing, high-pass 0.008 Hz, optional low-pass 0.15 Hz, PSC conversion \\
Dataset scale & 10,000 subjects per condition; duration 480 s; TR 2.0 s; 240 timepoints \\
Outputs & BOLD, neural activity, $\mathbf{B}$ or $\mathbf{B}_t$, topology $\mathbf{M}$, delays $\boldsymbol{\tau}$, HRF params \\
\bottomrule
\end{tabular}
\caption{At-a-glance summary of the fMRI simulation framework.}
\label{tab:fmri_sim_at_a_glance}
\end{table}

\subsection{Stationary vs.\ non-stationary coupling dynamics}
\label{app:fmri_stationary_nonstationary}

We generate stationary and non-stationary fMRI datasets that share the same biological constraints but differ in coupling dynamics (Table~\ref{tab:fmri_stationary_nonstationary}).

\begin{table}[h!]
\centering
\small
\begin{tabular}{p{0.36\linewidth}p{0.28\linewidth}p{0.28\linewidth}}
\toprule
\textbf{Property} & \textbf{Stationary} & \textbf{Non-stationary} \\
\midrule
Coupling & Fixed $\mathbf{B}\in\mathbb{R}^{N\times N}$ & Time-varying $\mathbf{B}_t\in\mathbb{R}^{T\times N\times N}$ \\
Topology mask & Implicit in $\mathbf{B}$ & Fixed $\mathbf{M}\in\{0,1\}^{N\times N}$ \\
Construction & --- & $\mathbf{B}_t=\mathbf{M}\odot \mathbf{W}_t$ \\
Drift process & --- & AR(1): $\rho=0.90$, $\sigma_{\text{drift}}=0.10$ \\
Temporal resolution & TR resolution & Generated at 100 Hz, averaged to TR \\
\bottomrule
\end{tabular}
\caption{Stationary vs.\ non-stationary coupling generation.}
\label{tab:fmri_stationary_nonstationary}
\end{table}

\subsection{ROI-specific hemodynamic response function (HRF)}
\label{app:fmri_hrf}

\textbf{HRF heterogeneity.}
We incorporate ROI-specific HRF parameters to reflect regional variability in hemodynamic responses. Each ROI $i$ is assigned a double-gamma HRF
\[
h_i(t)=\frac{t^{a_1^{(i)}-1}e^{-t/b_1^{(i)}}}{b_1^{(i)a_1^{(i)}}\Gamma(a_1^{(i)})}-c^{(i)}\cdot
\frac{t^{a_2^{(i)}-1}e^{-t/b_2^{(i)}}}{b_2^{(i)a_2^{(i)}}\Gamma(a_2^{(i)})},
\]
where $(a_1^{(i)},b_1^{(i)})$ and $(a_2^{(i)},b_2^{(i)})$ are derived from peak and undershoot delays, respectively. HRFs are normalized to unit peak and scaled to yield realistic BOLD changes (1--5\% PSC).

Key HRF parameter ranges and grouping are summarized in Table~\ref{tab:fmri_hrf_params}.

\begin{table}[h!]
\centering
\small
\begin{tabular}{p{0.42\linewidth}p{0.48\linewidth}}
\toprule
\textbf{Quantity} & \textbf{Specification} \\
\midrule
Peak delay $t_{\text{peak}}^{(i)}$ & $[4.0, 9.0]$ s \\
Undershoot delay $t_{\text{undershoot}}^{(i)}$ & $[12.0, 22.0]$ s \\
Undershoot scale $c^{(i)}$ & $[0.15, 0.50]$ \\
ROI HRF groups & Fast (peak $\sim$5--6 s), Medium (6--7 s), Slow (6.5--8 s) \\
Normalization / scaling & Unit peak; scaled to 1--5\% PSC \\
\bottomrule
\end{tabular}
\caption{ROI-specific HRF parameterization (double-gamma).}
\label{tab:fmri_hrf_params}
\end{table}

\subsection{Structural connectivity and causal topology}
\label{app:fmri_topology}

\textbf{ENIGMA structural connectivity.}
We use ENIGMA structural connectivity (SC) under the Desikan--Killiany atlas ($N=68$ ROIs) to ground the topology in biologically realistic anatomy.

\textbf{SC-to-effective connectivity conversion.}
Because SC is dense, we convert it to sparse effective connectivity using the procedure summarized in Table~\ref{tab:fmri_sc_to_ec}.

\begin{table}[h!]
\centering
\small
\begin{tabular}{p{0.30\linewidth}p{0.62\linewidth}}
\toprule
\textbf{Step} & \textbf{Specification} \\
\midrule
Weak connection filtering & Remove bottom 25\% weakest SC edges \\
Probabilistic sampling & Retain $(i,j)$ with probability $\min\!\left(1, \mathrm{SC}_{\text{norm}}(i,j)\cdot r_{\text{retention}}\right)$, $r_{\text{retention}}=1.5$ \\
Target density & 10--15\% off-diagonal density \\
Directional bias & 45\% feedforward-only, 50\% bidirectional, 5\% feedback-only \\
Self-inhibition & Add self-connection with probability $p_{\text{self}}=0.25$ per ROI \\
\bottomrule
\end{tabular}
\caption{Procedure for converting ENIGMA SC into sparse effective connectivity for fMRI.}
\label{tab:fmri_sc_to_ec}
\end{table}

\subsection{Distance-based conduction delays}
\label{app:fmri_delays}

Transmission delays are computed from inter-regional distances:
\[
\tau_{ij}=\frac{d_{ij}}{v}+\tau_{\text{syn}}.
\]
Parameters are summarized in Table~\ref{tab:fmri_delays}.

\begin{table}[h!]
\centering
\small
\begin{tabular}{p{0.38\linewidth}p{0.54\linewidth}}
\toprule
\textbf{Quantity} & \textbf{Specification} \\
\midrule
Distance $d_{ij}$ & Euclidean distance between ROI centroids (mm; MNI coordinates) \\
Conduction velocity $v$ & $v\sim\mathcal{U}(4.0, 8.0)$ m/s \\
Synaptic delay $\tau_{\text{syn}}$ & $\tau_{\text{syn}}\sim\mathcal{U}(0.003, 0.008)$ s (3--8 ms) \\
Clipping & 2.5--50 ms \\
Self-delays & 3--8 ms (distance-independent) \\
\bottomrule
\end{tabular}
\caption{Distance-based conduction delay model.}
\label{tab:fmri_delays}
\end{table}

\subsection{Wilson--Cowan neural mass model}
\label{app:fmri_wc}

\textbf{Local dynamics.}
Latent neural activity is generated using a Wilson--Cowan neural mass model with excitatory/inhibitory subpopulations per ROI.
Local parameters and time constants are summarized in Table~\ref{tab:fmri_wc_params}.

\begin{table}[h!]
\centering
\small
\begin{tabular}{p{0.42\linewidth}p{0.48\linewidth}}
\toprule
\textbf{Quantity} & \textbf{Specification} \\
\midrule
$w_{EE}^{(i)}$ & $\mathcal{U}(1.2, 1.6)$ \\
$w_{EI}^{(i)}$ & $\mathcal{U}(0.8, 1.2)$ \\
$w_{IE}^{(i)}$ & $\mathcal{U}(1.0, 1.4)$ \\
$w_{II}^{(i)}$ & $\mathcal{U}(0.4, 0.8)$ \\
$\tau_E$ & 0.010 s (10 ms) \\
$\tau_I$ & 0.100 s (100 ms) \\
Activation & $\sigma(x)=1/(1+\exp(-2x))$ \\
\bottomrule
\end{tabular}
\caption{Wilson--Cowan local parameters and time constants.}
\label{tab:fmri_wc_params}
\end{table}

\textbf{Dynamics.}
For ROI $i$,
\begin{align}
\tau_E \dot{E}_i(t) &= -E_i(t) + \sigma\!\left(w_{EE}^{(i)} E_i - w_{EI}^{(i)} I_i + u_i(t)\right), \\
\tau_I \dot{I}_i(t) &= -I_i(t) + \sigma\!\left(w_{IE}^{(i)} E_i - w_{II}^{(i)} I_i\right).
\end{align}

\textbf{External input.}
The external drive combines pink noise, infraslow oscillations, and sparse events, with weights summarized in Table~\ref{tab:fmri_external_input}:
\[
u_i^{\text{external}}(t)=0.5\,\text{slow}_i(t)+0.3\,\text{pink}_i(t)+0.2\,\text{events}_i(t).
\]

\begin{table}[h!]
\centering
\small
\begin{tabular}{p{0.36\linewidth}p{0.54\linewidth}}
\toprule
\textbf{Component} & \textbf{Specification} \\
\midrule
Pink noise & Independent $1/f$ noise per ROI \\
Slow fluctuations & 0.01--0.04 Hz; sum of 8 sinusoids with random phases \\
Sparse events & Rate $\sim 0.08$/s; duration $\sim 150$ ms \\
Mixture weights & slow: 0.5, pink: 0.3, events: 0.2 \\
\bottomrule
\end{tabular}
\caption{External input components for Wilson--Cowan dynamics.}
\label{tab:fmri_external_input}
\end{table}

\textbf{Delayed coupling input.}
Coupled input is
\[
u_i^{\text{coupled}}(t)=\sum_{j=1}^{N}B_{ji}(t)\,E_j(t-\tau_{ji}),
\quad
u_i(t)=u_i^{\text{external}}(t)+0.05\,u_i^{\text{coupled}}(t).
\]
The coupling term is scaled (0.05) to maintain stability.

\subsection{High-resolution simulation, HRF convolution, and TR downsampling}
\label{app:fmri_pipeline}

Neural dynamics are simulated at 100 Hz ($dt=0.01$ s) prior to hemodynamic filtering and TR downsampling.
The end-to-end pipeline is summarized in Table~\ref{tab:fmri_pipeline_steps}.

\begin{table}[h!]
\centering
\small
\begin{tabular}{p{0.22\linewidth}p{0.70\linewidth}}
\toprule
\textbf{Step} & \textbf{Specification} \\
\midrule
Simulation & Euler integration with $dt=0.01$ s at 100 Hz; store $E_i(t)$ \\
Noise / smoothing & Add 2\% pink noise; Gaussian smoothing with $\sigma=0.3$ s \\
Hemodynamics & Convolve each ROI with its HRF kernel $h_i(t)$ \\
TR downsampling & Average over $T_{\text{TR}}/dt$ samples (e.g.\ TR=2.0 s) \\
\bottomrule
\end{tabular}
\caption{High-resolution simulation pipeline prior to TR sampling.}
\label{tab:fmri_pipeline_steps}
\end{table}

\subsection{Time-varying coupling weights (non-stationary only)}
\label{app:fmri_drift}

For non-stationary datasets, coupling weights evolve via AR(1) drift at 100 Hz:
\[
\delta_t^{(i,j)}=\rho\,\delta_{t-1}^{(i,j)}+\epsilon_t,\quad
w_t^{(i,j)}=w_0^{(i,j)}+\delta_t^{(i,j)},\quad
\epsilon_t\sim\mathcal{N}(0,\sigma_{\text{drift}}^2),
\]
with typical values $\rho=0.90$ and $\sigma_{\text{drift}}=0.10$.
Initialization and constraints are summarized in Table~\ref{tab:fmri_drift_params}.

\begin{table}[h!]
\centering
\small
\begin{tabular}{p{0.40\linewidth}p{0.52\linewidth}}
\toprule
\textbf{Quantity} & \textbf{Specification} \\
\midrule
AR(1) coefficient & $\rho=0.90$ \\
Drift noise & $\sigma_{\text{drift}}=0.10$ \\
Initialization & $w_0^{(i,j)}=0.1+1.4\cdot \dfrac{\mathrm{SC}_{\text{norm}}(i,j)-\mathrm{SC}_{\min}}{\mathrm{SC}_{\max}-\mathrm{SC}_{\min}}$ \\
Clipping & $w_t^{(i,j)}\in[0.1,1.5]$ at each timestep \\
Sign assignment & Self: inhibitory; inter-ROI: 85\% excitatory, 15\% inhibitory \\
Downsampling & Average $\mathbf{B}_t$ over each TR window to get TR-resolution $\mathbf{B}_t\in\mathbb{R}^{T\times N\times N}$ \\
\bottomrule
\end{tabular}
\caption{Non-stationary drift process and coupling constraints (fMRI).}
\label{tab:fmri_drift_params}
\end{table}

\subsection{BOLD preprocessing and PSC conversion}
\label{app:fmri_preproc}

To match empirical resting-state fMRI characteristics while preserving low-frequency content, we apply the preprocessing steps in Table~\ref{tab:fmri_preproc}.

\begin{table}[h!]
\centering
\small
\begin{tabular}{p{0.28\linewidth}p{0.64\linewidth}}
\toprule
\textbf{Step} & \textbf{Specification} \\
\midrule
Temporal smoothing & Gaussian smoothing with $\sigma=0.5$ TR \\
High-pass filtering & 2nd-order Butterworth, cutoff 0.008 Hz \\
Low-pass filtering (optional) & 2nd-order Butterworth, cutoff 0.15 Hz (below Nyquist for TR=2.0 s) \\
PSC conversion & Center each ROI; scale to target std.\ 20.0 (PSC units); ROI amplitude factor $\sim\mathcal{U}(0.8,1.2)$ \\
\bottomrule
\end{tabular}
\caption{BOLD preprocessing for realistic PSC-scale resting-state fMRI.}
\label{tab:fmri_preproc}
\end{table}

This preprocessing targets PSC ranges of roughly $\pm 3$--$8\%$ (std.\ $\sim 1.5$--$4\%$) while preserving the dominant 0.01--0.04 Hz band.

\subsection{Connectivity constraints and sparsity}
\label{app:fmri_constraints}

\textbf{E/I balance and sparsity.}
Connectivity is constrained to be sparse (10--15\% non-zero off-diagonal density). Inter-regional connections are 85\% excitatory and 15\% inhibitory; self-connections are always inhibitory.

\textbf{Distance-dependent connectivity (synthetic option).}
When ENIGMA is not used, synthetic edges follow
\[
P(i\!\leftrightarrow\! j)\propto \exp\!\left(-\frac{d_{ij}}{\lambda}\right),
\quad \lambda=0.5,
\]
with a $2\times$ boost for local connections ($d_{ij}<0.3$).

\subsection{Dataset generation protocol and outputs}
\label{app:fmri_protocol}

We generate 10{,}000 simulated subjects for each condition (stationary and non-stationary). Per-subject randomized components and global simulation settings are summarized in Table~\ref{tab:fmri_protocol_params}.

\begin{table}[h!]
\centering
\small
\begin{tabular}{p{0.40\linewidth}p{0.52\linewidth}}
\toprule
\textbf{Quantity} & \textbf{Specification} \\
\midrule
Per-subject randomized components & $\mathbf{M}$, ROI HRFs, local WC parameters, delays $\boldsymbol{\tau}$; plus drift params for non-stationary \\
Duration & 480 s (8 min) \\
TR & 2.0 s \\
\# timepoints & 240 \\
\# ROIs & 68 \\
\midrule
Stored outputs & $\mathbf{X}_{\text{BOLD}}\in\mathbb{R}^{68\times 240}$; $\mathbf{X}_{\text{neural}}\in\mathbb{R}^{68\times 240}$; $\mathbf{B}$ or $\mathbf{B}_t\in\mathbb{R}^{240\times 68\times 68}$; $\mathbf{M}$; $\boldsymbol{\tau}$; per-ROI HRF params \\
\bottomrule
\end{tabular}
\caption{Dataset protocol and stored outputs (fMRI).}
\label{tab:fmri_protocol_params}
\end{table}

\subsection{Summary of biological plausibility features}

The simulator enforces the biological realism constraints summarized in Table~\ref{tab:fmri_bio_summary}.

\begin{table}[h!]
\centering
\small
\begin{tabular}{p{0.36\linewidth}p{0.54\linewidth}}
\toprule
\textbf{Constraint} & \textbf{Implementation} \\
\midrule
Real brain connectivity & ENIGMA SC (healthy controls) \\
Distance-based delays & $v\in[4,8]$ m/s; synaptic delay 3--8 ms; clip 2.5--50 ms \\
HRF heterogeneity & ROI-specific double-gamma HRF; $t_{\text{peak}}\in[4,9]$ s \\
Resting-state band & External drive emphasizes 0.01--0.04 Hz fluctuations \\
$1/f$ realism & Pink-noise component in external input; +2\% pink noise added post-sim \\
Wilson--Cowan dynamics & Explicit E/I subpopulations per ROI \\
E/I balance & Inter-ROI: 85\% excitatory, 15\% inhibitory; self inhibitory \\
Temporal non-stationarity & AR(1) drift in coupling weights (non-stationary only) \\
Empirical-like preprocessing & Smoothing + filtering + PSC scaling with ROI heterogeneity \\
Hierarchy & Feedforward/bidirectional/feedback = 45\% / 50\% / 5\% \\
\bottomrule
\end{tabular}
\caption{Key biological plausibility constraints enforced by the fMRI simulator.}
\label{tab:fmri_bio_summary}
\end{table}

These design choices yield synthetic fMRI datasets that preserve core empirical properties of resting-state BOLD while providing known causal structure and time-varying connectivity for principled evaluation of causal discovery methods under realistic measurement distortions.

\newpage

\section{Full Baseline Definitions and Hyperparameters}\label{appendix:baselines}

This appendix provides detailed definitions and hyperparameters for all 16 baseline methods used in our comparative evaluation. Baselines are organized into two categories: fMRI baselines (7 methods) and EEG baselines (9 methods).

\subsection{fMRI Baselines}
All baseline methods except rDCM follow a two-stage pipeline: (1) \textbf{HRF deconvolution}: BOLD signal $\rightarrow$ neural activity, and (2) \textbf{Causality estimation}: neural activity $\rightarrow$ causal graph.

\subsubsection{Statistical Baselines}

\paragraph{FIR-VAR.}
This method combines FIR (Finite Impulse Response) deconvolution with a VAR (Vector AutoRegressive) model. The HRF deconvolution uses a canonical double-gamma HRF kernel (SPM-style) with peak response parameters $a_1=6$, $b_1=1$, undershoot parameters $a_2=16$, $b_2=1$, and undershoot ratio $c=1/6$. The HRF length is 32 timepoints, TR is 2.0 seconds, and Wiener filtering (noise=0.1) is applied for noise robustness. The VAR model uses maximum lag $p=2$ (reduced from default 5 to avoid fitting errors), AIC for model selection, coupling strength computed as the sum of absolute coefficients across all lags, and delay estimated as the lag with maximum coefficient. Hyperparameters: sequence length 240, TR 2.0, HRF length 32, max\_lag 2, output normalization enabled with scale 1.2.

\paragraph{FIR-Granger.}
This method combines FIR deconvolution with Granger Causality testing. The HRF deconvolution is identical to FIR-VAR. The Granger causality estimation performs pairwise Granger causality tests with maximum lag 2, significance level 0.05, F-test (SSR F-test), coupling strength set to F-statistic (normalized to [0, 1] range), and delay estimated as the lag with maximum F-statistic. Hyperparameters: same as FIR-VAR, with additional significance parameter 0.05.

\paragraph{Wiener-VAR.}
This method combines Wiener deconvolution with VAR modeling. The Wiener deconvolution operates in the frequency domain using a regularized inverse filter: $W(f) = H^*(f) / (|H(f)|^2 + \sigma_n^2)$, where $H(f)$ is the HRF frequency response and $\sigma_n^2=0.1$ is the estimated noise power. The HRF kernel uses the same canonical double-gamma parameters as FIR. The VAR estimation is identical to FIR-VAR. Hyperparameters: sequence length 240, TR 2.0, HRF length 32, noise\_power 0.1, max\_lag 2, output normalization enabled with scale 1.2.

\paragraph{Wiener-Granger.}
This method combines Wiener deconvolution with Granger Causality. The HRF deconvolution is identical to Wiener-VAR, and the causality estimation is identical to FIR-Granger. Hyperparameters: sequence length 240, TR 2.0, HRF length 32, noise\_power 0.1, max\_lag 2, significance 0.05, output normalization enabled with scale 1.2.

\paragraph{rDCM.}
Regression Dynamic Causal Modeling (rDCM) \cite{frassle2021regression} uses Bayesian inverse inference to directly estimate causal connections from fMRI data without explicit HRF deconvolution. The method employs the TAPAS (Toolbox for Advanced Parameter Averaging) rDCM framework with resting-state model type. Bayesian model inversion is performed with sparsity constraints. Hyperparameters: SNR 3, TR (y\_dt) 2.0 seconds, prior probability (p0\_all) 0.15 (initial sparsity for all connections), iterations 100, filter strength 5, restrictInputs 0, u\_shift 0, type 'r' (resting-state), padding 0, visualization disabled, signal computation enabled. The method does not require training and is evaluated directly as a statistical Bayesian approach.

\subsubsection{Deep Learning Baselines}

\paragraph{DeepDeconvTCDF (Option 1).}
This end-to-end deep learning method combines LSTM-based deconvolution with the Temporal Causal Discovery Framework (TCDF) \cite{nadeau2018tcdf}. The architecture consists of: (1) \textbf{LSTM deconvolution module}: bidirectional LSTM with input size $R$ (number of ROIs), hidden size 64, 2 layers, outputting neural activity $(B, R, T)$ with Softplus activation for non-negativity; (2) \textbf{TCDF module}: dilated depthwise separable 1D-CNN with kernel size 4, 2 layers, dilation coefficients $[1, 4]$, hidden size 64, processing neural activity to extract causal features; (3) \textbf{Feature extraction and graph prediction}: adaptive average pooling for temporal aggregation, 2-layer MLP feature extractor $(1 \rightarrow 64 \rightarrow 64)$, pairwise predictor $(128 \rightarrow 64 \rightarrow 1)$ for coupling, delay predictor with identical structure, and CouplingActivation for output range $[-1.2, 1.2]$. Hyperparameters: n\_rois 48 or 68, tcdf\_kernel 4, tcdf\_layers 2, hidden\_dim 64, lstm\_hidden\_size 64, lstm\_num\_layers 2, batch\_size 32, learning\_rate 1e-3, weight\_decay 1e-5, epochs 50, loss\_weight\_delay 1.0. Loss functions: MSE for deconvolution (neural activity prediction), MSE for coupling, MSE for delay (only on GT edges).

\paragraph{EndToEndCNN (Option 2).}
This end-to-end 1D-CNN method directly predicts causal graphs from BOLD signals. The architecture consists of: (1) \textbf{1D-CNN encoder}: three Conv1d layers $(R \rightarrow 128 \rightarrow 256 \rightarrow 256)$ with kernel size 3, padding 1, BatchNorm and LeakyReLU(0.1) after each layer; (2) \textbf{Temporal pooling}: adaptive average pooling to $(B, 256)$; (3) \textbf{Graph generator}: 3-layer MLP $(256 \rightarrow 512 \rightarrow 1024 \rightarrow R^2)$ with Dropout(0.1) and LeakyReLU(0.1), outputting $(B, R, R)$ with CouplingActivation; (4) \textbf{Delay head}: 2-layer MLP $(256 \rightarrow 512 \rightarrow R^2)$ with Dropout(0.1) and LeakyReLU(0.1), outputting $(B, R, R)$ with Softplus activation (non-negative). Hyperparameters: n\_rois 48 or 68, hidden\_dim 128, intermediate\_dim 512 (hidden\_dim $\times$ 4), dropout 0.1, leaky\_relu\_negative\_slope 0.1, batch\_size 32, learning\_rate 1e-3, weight\_decay 1e-5, epochs 50, loss\_weight\_delay 1.0. Loss functions: MSE for coupling, MSE for delay (only on GT edges).

\subsection{EEG Baselines}

All EEG baseline methods follow a two-stage pipeline: (1) \textbf{Source localization}: scalp EEG $\rightarrow$ ROI neural activity, and (2) \textbf{Causality estimation}: ROI neural activity $\rightarrow$ causal graph.

\subsubsection{Statistical Baselines}
\paragraph{MNE+Granger.}
This method combines MNE (Minimum Norm Estimate) \cite{hamalainen1994interpreting} source localization with Granger Causality. MNE uses regularization parameter $\lambda=0.1$ and inverse operator $W = L^T (LL^T + \lambda I)^{-1}$, where $L$ is the leadfield matrix. Input is scalp EEG $(S, T)$ with $S=129$ channels and $T=2500$ timepoints, output is ROI neural activity $(R, T)$ with $R=68$ ROIs. Granger causality performs pairwise tests with maximum lag (order) 5, Ridge regression regularization $\alpha=1e-3$, coupling strength computed as $\log(\text{RSS}_{\text{restricted}} / \text{RSS}_{\text{full}})$, and delay estimated as the lag with maximum coefficient. Hyperparameters: lambda\_reg 0.1, order 5, alpha 1e-3, sampling frequency 250.0 Hz.

\paragraph{MNE+VAR.}
This method combines MNE source localization with VAR modeling. Source localization is identical to MNE+Granger. VAR estimation uses maximum lag (order) 5, Ridge regression regularization $\alpha=1e-3$, coupling strength as the sum of absolute coefficients across all lags, and delay as the lag with maximum coefficient. Hyperparameters: lambda\_reg 0.1, order 5, alpha 1e-3, sampling frequency 250.0 Hz.

\paragraph{sLORETA+Granger.}
This method combines sLORETA (standardized Low Resolution Electromagnetic Tomography) \cite{pascual2002sloreta} source localization with Granger Causality. sLORETA uses regularization parameter $\lambda=0.05$ and inverse operator $W = S^{-1} L^T T$ (including standardization matrix). Input and output dimensions are the same as MNE. Causality estimation is identical to MNE+Granger. Hyperparameters: lambda\_reg 0.05, order 5, alpha 1e-3, sampling frequency 250.0 Hz.

\paragraph{sLORETA+VAR.}
This method combines sLORETA source localization with VAR modeling. Source localization is identical to sLORETA+Granger, and causality estimation is identical to MNE+VAR. Hyperparameters: lambda\_reg 0.05, order 5, alpha 1e-3, sampling frequency 250.0 Hz.

\paragraph{DeepSIF+Granger.}
This method combines DeepSIF (Deep Neural Source Imaging) \cite{sun2022deep} source localization with a Granger Estimator. DeepSIF is a deep learning-based source imaging method with focal constraints, trained end-to-end. Input is scalp EEG $(B, S, T)$ with $S=129$, output is ROI neural activity $(B, R, T)$ with $R=68$. The Granger Estimator is a lag-based non-trainable statistical method with maximum lag 10, operating on ROI neural activity to output coupling matrix $(B, R, R)$ and delay matrix $(B, R, R)$. Hyperparameters: n\_sensors 129, n\_sources 68, max\_lag 10, sampling frequency 250.0 Hz, batch\_size 32, learning\_rate 1e-3, epochs 50, self\_threshold 0.1. Loss functions: MSE for source localization (neural activity prediction), BCEWithLogitsLoss for coupling, MSE for delay. Note: Only the source localization stage is trainable; the Granger Estimator has no learnable parameters.

\subsubsection{Deep Learning Baselines}

\paragraph{DeepSIF+TCN+SparseMLP.}
This method combines DeepSIF source localization, a TCN (Temporal Convolutional Network) \cite{bai2018tcn} encoder, and a SparseMLP structure learner. DeepSIF architecture and hyperparameters are as described in DeepSIF+Granger. The TCN encoder processes $(B, R, T)$ input with hidden dimension 64, outputting ROI embeddings $(B, R, H)$. The SparseMLP structure learner takes ROI embeddings and outputs coupling matrix $(B, R, R)$ and delay matrix $(B, R, R)$ using BCEWithLogitsLoss for coupling and MSE for delay. Hyperparameters: n\_sensors 129, n\_sources 68, hidden\_dim 64, batch\_size 32, learning\_rate 1e-3, epochs 50, self\_threshold 0.1.

\paragraph{DeepSIF+GNN+SparseMLP.}
This method combines DeepSIF source localization, a GNN (Graph Neural Network) \cite{kipf2017semi} encoder, and a SparseMLP structure learner. Source localization is identical to DeepSIF+TCN+SparseMLP. The GNN encoder processes $(B, R, T)$ input with hidden dimension 64, outputting ROI embeddings $(B, R, H)$. The SparseMLP structure learner is identical to DeepSIF+TCN+SparseMLP. Hyperparameters are identical to DeepSIF+TCN+SparseMLP.

\paragraph{sLORETA+TCN+SparseMLP.}
This method combines sLORETA source localization, a TCN encoder, and a SparseMLP structure learner. sLORETA uses the same parameters as described in sLORETA+Granger and is non-trainable (fixed). Causality estimation is identical to DeepSIF+TCN+SparseMLP. Hyperparameters are identical to DeepSIF+TCN+SparseMLP.

\paragraph{sLORETA+GNN+SparseMLP.}
This method combines sLORETA source localization, a GNN encoder, and a SparseMLP structure learner. Source localization is identical to sLORETA+TCN+SparseMLP, and causality estimation is identical to DeepSIF+GNN+SparseMLP. Hyperparameters are identical to DeepSIF+TCN+SparseMLP.

\subsection{Common Evaluation Settings}

All baseline methods use the following common evaluation settings: data split 80\% train / 10\% validation / 10\% test, triple repetition which have different random seeds, evaluation metrics including F1 Score, SHD (Structural Hamming Distance), dSHD (directional Structural Hamming Distance), and Directionality Accuracy. Thresholds: self-connection threshold 0.1, edge threshold 0.01 (for F1 calculation). All statistical baseline methods require no training and are evaluated directly. Deep learning baseline methods are trained using the Adam optimizer. Statistical methods normalize outputs to range $[-1.2, 1.2]$ via the \texttt{normalize\_coupling\_output} function to facilitate comparison with deep learning models.

\section{Additional Experiments and Analyses}\label{app:additional}

\subsection{Robustness to Physics Misspecification}\label{app:robust_tables}\label{app:robustness}

A central concern for any simulation-trained pipeline is whether the learned mapping memorizes simulator-specific physics (HRF parameters, leadfield matrix) or genuinely transferable causal structure.
We address this directly by perturbing the test-time physics priors while keeping all model weights frozen at values trained on the nominal simulator.
If \textsc{\methodname} were tightly tied to a single simulator setting, F1 would be expected to drop sharply under forward-model perturbations; under \pref{prop:error_propagation}, in contrast, recovery error should degrade gradually with inversion error.

\textbf{HRF perturbation (fMRI).}
We regenerate $1{,}000$-subject test sets at the underlying $100$~Hz TVB level with canonical HRF parameters (peak delay, undershoot delay, undershoot scale) jointly scaled by $1+\epsilon$ for $\epsilon\in\{-20\%,-10\%,0,+10\%,+20\%\}$.
Stage~1 retains the nominal HRF assumption; only the data-generating HRF changes.

\begin{table}[h]
\centering
\small
\begin{tabular}{lcccc}
\toprule
$\epsilon$ & \textbf{F1} $\uparrow$ & \textbf{Precision} $\uparrow$ & \textbf{Recall} $\uparrow$ & \textbf{SHD (norm.)} $\downarrow$\\
\midrule
$-20\%$  & $0.6025$ & $0.8887$ & $0.4557$ & $0.0893$ \\
$-10\%$  & $0.6025$ & $0.8887$ & $0.4557$ & $0.0893$ \\
$\phantom{-}0\%$    & $0.6043$ & $0.8889$ & $0.4577$ & $0.0890$ \\
$+10\%$  & $0.6025$ & $0.8887$ & $0.4557$ & $0.0893$ \\
$+20\%$  & $0.6025$ & $0.8887$ & $0.4557$ & $0.0893$ \\
\bottomrule
\end{tabular}
\caption{HRF misspecification on TVB-simulated fMRI.
Train weights are frozen; only test-time HRF parameters are perturbed by a multiplicative factor $1+\epsilon$.
F1 stays in $[0.6025,0.6043]$, indicating that the learned causal-discovery behavior is essentially invariant to HRF parameter scale within $\pm 20\%$.}
\label{tab:robust_hrf}
\end{table}

\textbf{Leadfield perturbation (EEG).}
We perturb the EEG forward model by replacing $L_0$ with $L_\alpha = L_0 + \alpha \frac{\|L_0\|_F}{\|N\|_F}\,N$ where $N$ has i.i.d.\ standard normal entries, and $\alpha\in\{0,1,5\}$.
Observations are generated as $\xb = L_\alpha\,\zb + \epsilonb$ while the trained model retains the nominal leadfield $L_0$ internally.

\begin{table}[h]
\centering
\small
\begin{tabular}{lccc}
\toprule
$\alpha$ & \textbf{F1 (seed 1)} $\uparrow$ & \textbf{F1 (seed 2)} $\uparrow$ & \textbf{F1 (seed 3)} $\uparrow$\\
\midrule
$0.0$ & $0.554$ & $0.525$ & $0.532$ \\
$1.0$ & $0.554$ & $0.523$ & $0.538$ \\
$5.0$ & $0.558$ & $0.498$ & $0.509$ \\
\bottomrule
\end{tabular}
\caption{Leadfield misspecification on TVB-simulated EEG.
Train weights are frozen; only test-time forward leadfield is perturbed.
At $\alpha=5.0$, seed~1 improves marginally (regularization-like effect) and the worst case (seed~2) degrades by $5.1\%$.
No catastrophic failure occurs.}
\label{tab:robust_leadfield}
\end{table}

\textbf{Interpretation.}
The HRF perturbation changes F1 by less than $0.002$ across $\pm 20\%$, and the leadfield perturbation drops F1 by at most $5.1\%$ even at $5\times$ Frobenius noise.
Both behaviors are consistent with the graceful-degradation behavior predicted by \pref{prop:error_propagation}: bounded inversion error translates to a bounded multiple of graph-recovery error, with no catastrophic transition.
We read this as direct evidence that what \textsc{\methodname} transfers from simulation is causal structure rather than simulator-specific measurement physics.

\subsection{Comparison with CD-NOD (Nonstationarity-Aware Baseline)}\label{app:cdnod}

Because \textsc{\methodname} uses nonstationarity as part of its directed-graph scoring signal, an informative comparison is against a baseline that is also designed for nonstationary causal discovery.
We compare against CD-NOD~\citep{huang2020causal} on the same TVB nonstationary fMRI test set ($R{=}68$ ROIs, $T{=}240$ timepoints).
For a maximally favorable evaluation of CD-NOD, we provide it \emph{oracle context labels}---i.e., the ground-truth nonstationary segment indices---an advantage unavailable in real neuroimaging data.
Table~\ref{tab:cdnod} summarizes the quantitative comparison.

\begin{table}[h]
\centering
\small
\begin{tabular}{lccc}
\toprule
\textbf{Method} & \textbf{F1} $\uparrow$ & \textbf{SHD (norm.)} $\downarrow$ & \textbf{Runtime / subject}\\
\midrule
CD-NOD (oracle context) & $0.046\pm0.001$ & $0.192$ & $\sim 18.8$~s (CPU) \\
\textsc{\methodname} (ours)         & $\mathbf{0.604}$         & $\mathbf{0.089}$         & $\sim 0.1$~s (GPU) \\
\bottomrule
\end{tabular}
\caption{\texorpdfstring{INCAMA vs.\ CD-NOD on TVB nonstationary fMRI.
\textsc{\methodname} attains $13.1\times$ higher F1, $2.16\times$ lower normalized SHD, and runs $\sim$$188\times$ faster, even with CD-NOD given oracle nonstationary segment labels.}{INCAMA vs.\ CD-NOD on TVB nonstationary fMRI. Higher F1 and lower SHD; faster runtime even with CD-NOD oracle segment labels.}}
\label{tab:cdnod}
\end{table}

The performance gap stems from two complementary factors: (i) CD-NOD operates directly on BOLD without HRF deconvolution, so hemodynamic distortion enters its conditional independence tests; and (ii) constraint-based discovery has limited statistical power when scaling to $68$ variables at $T{=}240$ timepoints.
Both are precisely the regimes our latent-space, end-to-end formulation is designed to handle.

\noindent
TVB benchmarks (Tables~\ref{tab:eeg_results}--\ref{tab:fmri_results}), computational cost (Table~\ref{tab:complexity}), and HCP real-data alignment metrics (Table~\ref{tab:causal_recovery}) appear in Section~\ref{sec:experiments}; the CD-NOD comparison (subsection above), robustness experiments above (Tables~\ref{tab:robust_hrf}--\ref{tab:robust_leadfield}), and the theory-to-evidence narrative (Section~\ref{sec:theory_exp_link}) supplement the main experiments section.

\section{Metrics}\label{appendix:metrics}

This section provides detailed descriptions of the evaluation metrics used in this paper.

\subsection{F1 Score}

F1 Score evaluates the accuracy of edge existence in the predicted causal graph and is computed as the harmonic mean of Precision and Recall:
\begin{equation}
\text{F1} = \frac{2 \cdot \text{Precision} \cdot \text{Recall}}{\text{Precision} + \text{Recall}},
\label{eq:f1}
\end{equation}
where Precision and Recall are defined as:
\begin{align}
\text{Precision} &= \frac{\text{TP}}{\text{TP} + \text{FP}}, \label{eq:precision} \\
\text{Recall} &= \frac{\text{TP}}{\text{TP} + \text{FN}}. \label{eq:recall}
\end{align}
Here, TP (True Positive) denotes edges that exist in both prediction and ground truth, FP (False Positive) denotes edges that exist only in the prediction, and FN (False Negative) denotes edges that exist only in the ground truth. The predicted continuous coupling strength matrix is binarized using a threshold before evaluation. F1 Score ranges from $0$ to $1$, with values closer to $1$ indicating better alignment between the predicted and true graph structures.





\subsection{Normalized SHD (Structural Hamming Distance)}

Normalized SHD measures the discrepancy between two directed graphs
in terms of local edge structure, counting edge additions, deletions,
and direction reversals.

Let $A_{\text{pred}}, A_{\text{GT}} \in \{0,1\}^{N \times N}$ denote the predicted
and ground-truth adjacency matrices, with self-loops excluded.
The Structural Hamming Distance (SHD) is defined as
\begin{equation}
\text{SHD}
=
\text{FP} + \text{FN} + \text{Reverse},
\label{eq:shd}
\end{equation}
where FP denotes false positive edges ($A_{\text{pred}}(i,j)=1$, $A_{\text{GT}}(i,j)=0$),
FN denotes false negative edges ($A_{\text{pred}}(i,j)=0$, $A_{\text{GT}}(i,j)=1$),
and Reverse denotes incorrectly oriented edges, defined as unordered node pairs
$\{i,j\}$ for which $A_{\text{pred}}(i,j)=1$ and $A_{\text{GT}}(j,i)=1$.
Each reversed edge is counted once, not twice.

The normalized SHD is obtained by dividing by the number of possible directed edges:
\begin{equation}
\text{Normalized SHD}
=
\frac{\text{SHD}}{N(N-1)}.
\label{eq:normalized_shd}
\end{equation}
Normalized SHD ranges from $0$ to $1$, with lower values indicating closer agreement
between the predicted and ground-truth graph structures.

\subsection{Normalized dSHD (Direction-aware Structural Hamming Distance)}

Normalized dSHD measures the discrepancy between two directed graphs
by penalizing missing, spurious, and incorrectly oriented edges,
with higher weight assigned to edge reversals.

Let $G_{\text{pred}}$ and $G_{\text{GT}}$ denote the predicted and ground-truth graphs
with adjacency matrices $A_{\text{pred}}, A_{\text{GT}} \in \{0,1\}^{N \times N}$.
Self-loops are excluded.
The direction-aware Structural Hamming Distance (dSHD) is defined as
\begin{equation}
\text{dSHD}
=
|\mathcal{E}_{\text{pred}} \setminus \mathcal{E}_{\text{GT}}|
+
|\mathcal{E}_{\text{GT}} \setminus \mathcal{E}_{\text{pred}}|
+
2 \cdot |\mathcal{R}|,
\label{eq:dshd}
\end{equation}
where $\mathcal{E}_{\text{pred}}$ and $\mathcal{E}_{\text{GT}}$ are the directed edge sets
of the predicted and ground-truth graphs, respectively, and $\mathcal{R}$ denotes
the set of reversed edges, defined as
\begin{equation}
\mathcal{R}
=
\big\{ (i,j) \mid
A_{\text{pred}}(i,j)=1,\;
A_{\text{GT}}(j,i)=1,\;
A_{\text{GT}}(i,j)=0
\big\},
\end{equation}
with each unordered node pair counted once.
Thus, extra and missing edges are penalized with unit cost,
while edge reversals incur a cost of $2$, reflecting their greater impact on causal interpretation.

The normalized dSHD is obtained by dividing by the number of possible directed edges:
\begin{equation}
\text{Normalized dSHD}
=
\frac{\text{dSHD}}{N(N-1)}.
\label{eq:normalized_dshd}
\end{equation}
Normalized dSHD ranges from $0$ to $1$, with lower values indicating closer agreement between the inferred and ground-truth causal structures.
Unlike intervention-based metrics such as Structural Intervention Distance (SID), dSHD evaluates discrepancies in local causal structure while explicitly penalizing edge orientation errors more heavily.


\newpage
\section*{NeurIPS Paper Checklist}

\begin{enumerate}

\item {\bf Claims}
    \item[] Question: Do the main claims made in the abstract and introduction accurately reflect the paper's contributions and scope?
    \item[] Answer: \answerYes{}.
    \item[] Justification: The abstract and introduction state the theoretical assumptions, the indirect-validation setting for human data, and the empirical scope of the simulations and HCP analysis.
    \item[] Guidelines:
    \begin{itemize}
        \item The answer \answerNA{} means that the abstract and introduction do not include the claims made in the paper.
        \item The abstract and/or introduction should clearly state the claims made, including the contributions made in the paper and important assumptions and limitations. A \answerNo{} or \answerNA{} answer to this question will not be perceived well by the reviewers. 
        \item The claims made should match theoretical and experimental results, and reflect how much the results can be expected to generalize to other settings. 
        \item It is fine to include aspirational goals as motivation as long as it is clear that these goals are not attained by the paper. 
    \end{itemize}

\item {\bf Limitations}
    \item[] Question: Does the paper discuss the limitations of the work performed by the authors?
    \item[] Answer: \answerYes{}.
    \item[] Justification: Limitations are discussed in the conclusion, including cortical-only ROI modeling, simulation-trained transfer, indirect real-data validation, and asymptotic theory assumptions.
    \item[] Guidelines:
    \begin{itemize}
        \item The answer \answerNA{} means that the paper has no limitation while the answer \answerNo{} means that the paper has limitations, but those are not discussed in the paper. 
        \item The authors are encouraged to create a separate ``Limitations'' section in their paper.
        \item The paper should point out any strong assumptions and how robust the results are to violations of these assumptions (e.g., independence assumptions, noiseless settings, model well-specification, asymptotic approximations only holding locally). The authors should reflect on how these assumptions might be violated in practice and what the implications would be.
        \item The authors should reflect on the scope of the claims made, e.g., if the approach was only tested on a few datasets or with a few runs. In general, empirical results often depend on implicit assumptions, which should be articulated.
        \item The authors should reflect on the factors that influence the performance of the approach. For example, a facial recognition algorithm may perform poorly when image resolution is low or images are taken in low lighting. Or a speech-to-text system might not be used reliably to provide closed captions for online lectures because it fails to handle technical jargon.
        \item The authors should discuss the computational efficiency of the proposed algorithms and how they scale with dataset size.
        \item If applicable, the authors should discuss possible limitations of their approach to address problems of privacy and fairness.
        \item While the authors might fear that complete honesty about limitations might be used by reviewers as grounds for rejection, a worse outcome might be that reviewers discover limitations that aren't acknowledged in the paper. The authors should use their best judgment and recognize that individual actions in favor of transparency play an important role in developing norms that preserve the integrity of the community. Reviewers will be specifically instructed to not penalize honesty concerning limitations.
    \end{itemize}

\item {\bf Theory assumptions and proofs}
    \item[] Question: For each theoretical result, does the paper provide the full set of assumptions and a complete (and correct) proof?
    \item[] Answer: \answerYes{}.
    \item[] Justification: The main paper states the assumptions before the results, and Appendix~\ref{app:theory_proofs} provides proofs for the theorem, proposition, corollary, and top-$k$ stability lemma.
    \item[] Guidelines:
    \begin{itemize}
        \item The answer \answerNA{} means that the paper does not include theoretical results. 
        \item All the theorems, formulas, and proofs in the paper should be numbered and cross-referenced.
        \item All assumptions should be clearly stated or referenced in the statement of any theorems.
        \item The proofs can either appear in the main paper or the supplemental material, but if they appear in the supplemental material, the authors are encouraged to provide a short proof sketch to provide intuition. 
        \item Inversely, any informal proof provided in the core of the paper should be complemented by formal proofs provided in appendix or supplemental material.
        \item Theorems and Lemmas that the proof relies upon should be properly referenced. 
    \end{itemize}

    \item {\bf Experimental result reproducibility}
    \item[] Question: Does the paper fully disclose all the information needed to reproduce the main experimental results of the paper to the extent that it affects the main claims and/or conclusions of the paper (regardless of whether the code and data are provided or not)?
    \item[] Answer: \answerYes{}.
    \item[] Justification: The appendix specifies the algorithm, architecture, training settings, simulator construction, baseline definitions, metrics, and compute environment needed to reproduce the reported experiments. Code, trained checkpoints, and table-generation scripts will be released upon acceptance, excluding raw HCP data that require the standard HCP access agreement.
    \item[] Guidelines:
    \begin{itemize}
        \item The answer \answerNA{} means that the paper does not include experiments.
        \item If the paper includes experiments, a \answerNo{} answer to this question will not be perceived well by the reviewers: Making the paper reproducible is important, regardless of whether the code and data are provided or not.
        \item If the contribution is a dataset and\slash or model, the authors should describe the steps taken to make their results reproducible or verifiable. 
        \item Depending on the contribution, reproducibility can be accomplished in various ways. For example, if the contribution is a novel architecture, describing the architecture fully might suffice, or if the contribution is a specific model and empirical evaluation, it may be necessary to either make it possible for others to replicate the model with the same dataset, or provide access to the model. In general. releasing code and data is often one good way to accomplish this, but reproducibility can also be provided via detailed instructions for how to replicate the results, access to a hosted model (e.g., in the case of a large language model), releasing of a model checkpoint, or other means that are appropriate to the research performed.
        \item While NeurIPS does not require releasing code, the conference does require all submissions to provide some reasonable avenue for reproducibility, which may depend on the nature of the contribution. For example
        \begin{enumerate}
            \item If the contribution is primarily a new algorithm, the paper should make it clear how to reproduce that algorithm.
            \item If the contribution is primarily a new model architecture, the paper should describe the architecture clearly and fully.
            \item If the contribution is a new model (e.g., a large language model), then there should either be a way to access this model for reproducing the results or a way to reproduce the model (e.g., with an open-source dataset or instructions for how to construct the dataset).
            \item We recognize that reproducibility may be tricky in some cases, in which case authors are welcome to describe the particular way they provide for reproducibility. In the case of closed-source models, it may be that access to the model is limited in some way (e.g., to registered users), but it should be possible for other researchers to have some path to reproducing or verifying the results.
        \end{enumerate}
    \end{itemize}

\item {\bf Open access to data and code}
    \item[] Question: Does the paper provide open access to the data and code, with sufficient instructions to faithfully reproduce the main experimental results, as described in supplemental material?
    \item[] Answer: \answerNo{}.
    \item[] Justification: The submission provides detailed reproducibility information in the paper and appendix, but does not currently include an anonymized public code or data release link. We plan to release the TVB simulation generator/configs, Stage~1 and Stage~2 implementation, trained checkpoints, HCP preprocessing scripts (excluding restricted raw HCP data), and evaluation scripts upon acceptance.
    \item[] Guidelines:
    \begin{itemize}
        \item The answer \answerNA{} means that paper does not include experiments requiring code.
        \item Please see the NeurIPS code and data submission guidelines (\url{https://neurips.cc/public/guides/CodeSubmissionPolicy}) for more details.
        \item While we encourage the release of code and data, we understand that this might not be possible, so \answerNo{} is an acceptable answer. Papers cannot be rejected simply for not including code, unless this is central to the contribution (e.g., for a new open-source benchmark).
        \item The instructions should contain the exact command and environment needed to run to reproduce the results. See the NeurIPS code and data submission guidelines (\url{https://neurips.cc/public/guides/CodeSubmissionPolicy}) for more details.
        \item The authors should provide instructions on data access and preparation, including how to access the raw data, preprocessed data, intermediate data, and generated data, etc.
        \item The authors should provide scripts to reproduce all experimental results for the new proposed method and baselines. If only a subset of experiments are reproducible, they should state which ones are omitted from the script and why.
        \item At submission time, to preserve anonymity, the authors should release anonymized versions (if applicable).
        \item Providing as much information as possible in supplemental material (appended to the paper) is recommended, but including URLs to data and code is permitted.
    \end{itemize}

\item {\bf Experimental setting/details}
    \item[] Question: Does the paper specify all the training and test details (e.g., data splits, hyperparameters, how they were chosen, type of optimizer) necessary to understand the results?
    \item[] Answer: \answerYes{}.
    \item[] Justification: Experimental settings, model hyperparameters, optimization details, simulation protocols, baselines, and evaluation metrics are reported in the experiments section and appendix.
    \item[] Guidelines:
    \begin{itemize}
        \item The answer \answerNA{} means that the paper does not include experiments.
        \item The experimental setting should be presented in the core of the paper to a level of detail that is necessary to appreciate the results and make sense of them.
        \item The full details can be provided either with the code, in appendix, or as supplemental material.
    \end{itemize}

\item {\bf Experiment statistical significance}
    \item[] Question: Does the paper report error bars suitably and correctly defined or other appropriate information about the statistical significance of the experiments?
    \item[] Answer: \answerYes{}.
    \item[] Justification: Simulation tables report mean and standard deviation over seeds, HCP results report per-subject mean and standard deviation, and paired tests are reported for the nonstationarity ablations.
    \item[] Guidelines:
    \begin{itemize}
        \item The answer \answerNA{} means that the paper does not include experiments.
        \item The authors should answer \answerYes{} if the results are accompanied by error bars, confidence intervals, or statistical significance tests, at least for the experiments that support the main claims of the paper.
        \item The factors of variability that the error bars are capturing should be clearly stated (for example, train/test split, initialization, random drawing of some parameter, or overall run with given experimental conditions).
        \item The method for calculating the error bars should be explained (closed form formula, call to a library function, bootstrap, etc.)
        \item The assumptions made should be given (e.g., Normally distributed errors).
        \item It should be clear whether the error bar is the standard deviation or the standard error of the mean.
        \item It is OK to report 1-sigma error bars, but one should state it. The authors should preferably report a 2-sigma error bar than state that they have a 96\% CI, if the hypothesis of Normality of errors is not verified.
        \item For asymmetric distributions, the authors should be careful not to show in tables or figures symmetric error bars that would yield results that are out of range (e.g., negative error rates).
        \item If error bars are reported in tables or plots, the authors should explain in the text how they were calculated and reference the corresponding figures or tables in the text.
    \end{itemize}

\item {\bf Experiments compute resources}
    \item[] Question: For each experiment, does the paper provide sufficient information on the computer resources (type of compute workers, memory, time of execution) needed to reproduce the experiments?
    \item[] Answer: \answerYes{}.
    \item[] Justification: Appendix~\ref{app:server_details} reports the GPU, CPU, memory, OS, CUDA, and Python environment used for the experiments.
    \item[] Guidelines:
    \begin{itemize}
        \item The answer \answerNA{} means that the paper does not include experiments.
        \item The paper should indicate the type of compute workers CPU or GPU, internal cluster, or cloud provider, including relevant memory and storage.
        \item The paper should provide the amount of compute required for each of the individual experimental runs as well as estimate the total compute. 
        \item The paper should disclose whether the full research project required more compute than the experiments reported in the paper (e.g., preliminary or failed experiments that didn't make it into the paper). 
    \end{itemize}
    
\item {\bf Code of ethics}
    \item[] Question: Does the research conducted in the paper conform, in every respect, with the NeurIPS Code of Ethics \url{https://neurips.cc/public/EthicsGuidelines}?
    \item[] Answer: \answerYes{}.
    \item[] Justification: The work uses simulations and de-identified public neuroimaging data, reports aggregate results, and does not involve deployment or individualized clinical decision-making.
    \item[] Guidelines:
    \begin{itemize}
        \item The answer \answerNA{} means that the authors have not reviewed the NeurIPS Code of Ethics.
        \item If the authors answer \answerNo, they should explain the special circumstances that require a deviation from the Code of Ethics.
        \item The authors should make sure to preserve anonymity (e.g., if there is a special consideration due to laws or regulations in their jurisdiction).
    \end{itemize}

\item {\bf Broader impacts}
    \item[] Question: Does the paper discuss both potential positive societal impacts and negative societal impacts of the work performed?
    \item[] Answer: \answerYes{}.
    \item[] Justification: The impact statement discusses scientific benefits, the risk of over-interpreting observational causal graphs, medical decision limits, and privacy considerations for human neuroimaging data.
    \item[] Guidelines:
    \begin{itemize}
        \item The answer \answerNA{} means that there is no societal impact of the work performed.
        \item If the authors answer \answerNA{} or \answerNo, they should explain why their work has no societal impact or why the paper does not address societal impact.
        \item Examples of negative societal impacts include potential malicious or unintended uses (e.g., disinformation, generating fake profiles, surveillance), fairness considerations (e.g., deployment of technologies that could make decisions that unfairly impact specific groups), privacy considerations, and security considerations.
        \item The conference expects that many papers will be foundational research and not tied to particular applications, let alone deployments. However, if there is a direct path to any negative applications, the authors should point it out. For example, it is legitimate to point out that an improvement in the quality of generative models could be used to generate Deepfakes for disinformation. On the other hand, it is not needed to point out that a generic algorithm for optimizing neural networks could enable people to train models that generate Deepfakes faster.
        \item The authors should consider possible harms that could arise when the technology is being used as intended and functioning correctly, harms that could arise when the technology is being used as intended but gives incorrect results, and harms following from (intentional or unintentional) misuse of the technology.
        \item If there are negative societal impacts, the authors could also discuss possible mitigation strategies (e.g., gated release of models, providing defenses in addition to attacks, mechanisms for monitoring misuse, mechanisms to monitor how a system learns from feedback over time, improving the efficiency and accessibility of ML).
    \end{itemize}
    
\item {\bf Safeguards}
    \item[] Question: Does the paper describe safeguards that have been put in place for responsible release of data or models that have a high risk for misuse (e.g., pre-trained language models, image generators, or scraped datasets)?
    \item[] Answer: \answerNA{}.
    \item[] Justification: The paper does not release high-risk generative models, scraped datasets, or other assets requiring special misuse safeguards.
    \item[] Guidelines:
    \begin{itemize}
        \item The answer \answerNA{} means that the paper poses no such risks.
        \item Released models that have a high risk for misuse or dual-use should be released with necessary safeguards to allow for controlled use of the model, for example by requiring that users adhere to usage guidelines or restrictions to access the model or implementing safety filters. 
        \item Datasets that have been scraped from the Internet could pose safety risks. The authors should describe how they avoided releasing unsafe images.
        \item We recognize that providing effective safeguards is challenging, and many papers do not require this, but we encourage authors to take this into account and make a best faith effort.
    \end{itemize}

\item {\bf Licenses for existing assets}
    \item[] Question: Are the creators or original owners of assets (e.g., code, data, models), used in the paper, properly credited and are the license and terms of use explicitly mentioned and properly respected?
    \item[] Answer: \answerYes{}.
    \item[] Justification: Existing datasets, models, and methodological components used in the work are credited through citations, including HCP, ENIGMA, TVB, DeepSIF, DCM/rDCM, CD-NOD, and Mamba-related work.
    \item[] Guidelines:
    \begin{itemize}
        \item The answer \answerNA{} means that the paper does not use existing assets.
        \item The authors should cite the original paper that produced the code package or dataset.
        \item The authors should state which version of the asset is used and, if possible, include a URL.
        \item The name of the license (e.g., CC-BY 4.0) should be included for each asset.
        \item For scraped data from a particular source (e.g., website), the copyright and terms of service of that source should be provided.
        \item If assets are released, the license, copyright information, and terms of use in the package should be provided. For popular datasets, \url{paperswithcode.com/datasets} has curated licenses for some datasets. Their licensing guide can help determine the license of a dataset.
        \item For existing datasets that are re-packaged, both the original license and the license of the derived asset (if it has changed) should be provided.
        \item If this information is not available online, the authors are encouraged to reach out to the asset's creators.
    \end{itemize}

\item {\bf New assets}
    \item[] Question: Are new assets introduced in the paper well documented and is the documentation provided alongside the assets?
    \item[] Answer: \answerYes{}.
    \item[] Justification: The paper introduces simulation protocols and model components, documented in the algorithm, implementation, experiment-setting, and simulator appendices.
    \item[] Guidelines:
    \begin{itemize}
        \item The answer \answerNA{} means that the paper does not release new assets.
        \item Researchers should communicate the details of the dataset\slash code\slash model as part of their submissions via structured templates. This includes details about training, license, limitations, etc. 
        \item The paper should discuss whether and how consent was obtained from people whose asset is used.
        \item At submission time, remember to anonymize your assets (if applicable). You can either create an anonymized URL or include an anonymized zip file.
    \end{itemize}

\item {\bf Crowdsourcing and research with human subjects}
    \item[] Question: For crowdsourcing experiments and research with human subjects, does the paper include the full text of instructions given to participants and screenshots, if applicable, as well as details about compensation (if any)? 
    \item[] Answer: \answerNA{}.
    \item[] Justification: The paper does not conduct crowdsourcing or collect new human-subject data; it uses an established de-identified public neuroimaging dataset.
    \item[] Guidelines:
    \begin{itemize}
        \item The answer \answerNA{} means that the paper does not involve crowdsourcing nor research with human subjects.
        \item Including this information in the supplemental material is fine, but if the main contribution of the paper involves human subjects, then as much detail as possible should be included in the main paper. 
        \item According to the NeurIPS Code of Ethics, workers involved in data collection, curation, or other labor should be paid at least the minimum wage in the country of the data collector. 
    \end{itemize}

\item {\bf Institutional review board (IRB) approvals or equivalent for research with human subjects}
    \item[] Question: Does the paper describe potential risks incurred by study participants, whether such risks were disclosed to the subjects, and whether Institutional Review Board (IRB) approvals (or an equivalent approval/review based on the requirements of your country or institution) were obtained?
    \item[] Answer: \answerNA{}.
    \item[] Justification: No new human-subject data are collected; the real-data analysis uses de-identified HCP data and reports aggregate results only.
    \item[] Guidelines:
    \begin{itemize}
        \item The answer \answerNA{} means that the paper does not involve crowdsourcing nor research with human subjects.
        \item Depending on the country in which research is conducted, IRB approval (or equivalent) may be required for any human subjects research. If you obtained IRB approval, you should clearly state this in the paper. 
        \item We recognize that the procedures for this may vary significantly between institutions and locations, and we expect authors to adhere to the NeurIPS Code of Ethics and the guidelines for their institution. 
        \item For initial submissions, do not include any information that would break anonymity (if applicable), such as the institution conducting the review.
    \end{itemize}

\item {\bf Declaration of LLM usage}
    \item[] Question: Does the paper describe the usage of LLMs if it is an important, original, or non-standard component of the core methods in this research? Note that if the LLM is used only for writing, editing, or formatting purposes and does \emph{not} impact the core methodology, scientific rigor, or originality of the research, declaration is not required.
    \item[] Answer: \answerNA{}.
    \item[] Justification: The core method does not use LLMs as a methodological component.
    \item[] Guidelines:
    \begin{itemize}
        \item The answer \answerNA{} means that the core method development in this research does not involve LLMs as any important, original, or non-standard components.
        \item Please refer to our LLM policy in the NeurIPS handbook for what should or should not be described.
    \end{itemize}

\end{enumerate}

\end{document}